\setlist[enumerate]{noitemsep,partopsep=0pt,parsep=0pt}
\setlist[itemize]{noitemsep,partopsep=0pt,parsep=0pt}
\newtheorem{thm}{Theorem}
\newtheorem{cor}[thm]{Corollary}
\newtheorem{prop}[thm]{Proposition}
\newtheorem{defn}[thm]{Definition}
\theoremstyle{definition}
\newtheorem{ex}[thm]{Example}
\theoremstyle{remark}
\newtheorem{rem}{Remark}
\numberwithin{equation}{section}
\DeclareMathOperator{\CPTP}{CPTP}
\DeclareMathOperator{\spec}{spec}
\DeclareMathOperator{\supp}{supp}
\DeclareMathOperator{\rank}{rank}
\DeclareMathOperator{\im}{im}
\DeclareMathOperator{\tr}{tr}
\DeclareMathOperator{\Proj}{Proj}
\DeclareMathOperator{\sunit}{S_{unit}}
\DeclareMathOperator{\scanonical}{S_{can}}
\DeclareMathOperator{\sinvariant}{S_{inv}}
\newcommand{\ket}[1]{|#1\rangle}
\newcommand{\bra}[1]{\langle#1|}
\newcommand{\ketbra}[2]{\ket{#1}\!\bra{#2}}
\newcommand{\proj}[1]{\ketbra{#1}{#1}}
\newcommand{\braket}[2]{\langle#1|#2\rangle}
\newcommand{\brak}[1]{\braket{#1}{#1}}
\newcommand{\swapf}{\mathcal{F}}
\newcommand{\swap}[2]{\mathcal{F}_{#1#2}}
\begin{document}

\title{Min-reflected entropy = doubly minimized Petz R\'enyi mutual information of order 1/2}
\author{Laura Burri}
\affiliation{Institute for Theoretical Physics, ETH Zurich, Zurich, Switzerland}

\begin{abstract}
R\'enyi reflected entropies of order $n\geq 2$ are correlation measures that have been introduced in the field of holography. 
In this work, we put the spotlight on the min-reflected entropy, i.e., the R\'enyi reflected entropy in the limit $n\rightarrow\infty$. 
We show that, for general bipartite quantum states, this measure is identical to another measure originating from the field of quantum information theory: the doubly minimized Petz R\'enyi mutual information of order $1/2$. 
Furthermore, we demonstrate how this equality enables us to answer several previously open questions, each concerning one of the two correlation measures (or generalizations of them).
\end{abstract}

\maketitle
%\tableofcontents

\section{Introduction}\label{sec:introduction}
Quantum correlations between two systems $A$ and $B$, as described by a bipartite quantum state $\rho_{AB}$, can occur in various forms. 
The usefulness of a given state $\rho_{AB}$ for a particular information-theoretic task depends on the task's specific requirements 
and is often quantified by a \emph{correlation measure}, such as the mutual information or one of its R\'enyi generalizations.

As the main technical contribution of this work, we prove that two correlation measures that have been studied independently in previous work are, in fact, identical: 
The min-reflected entropy is equal to the doubly minimized Petz R\'enyi mutual information of order $1/2$,
\begin{align}\label{eq:main}
S_R^{(\infty)}(A:B)_\rho = I_{1/2}^{\downarrow\downarrow}(A:B)_{\rho} .
\end{align}
The measure on the left-hand side has been introduced and primarily examined in the research field of holography, while the measure on the right-hand side is well-known in the research field of quantum information theory. 
Below, we provide a brief overview of the contexts in which these two measures have been studied in previous work and outline some unresolved questions, each concerning one of the two measures (or generalizations of them). 
In subsequent sections, we will demonstrate how our main result in~\eqref{eq:main} enables us to answer these questions, showing that it is not only of formal interest but also has insightful implications.

\textbf{The left-hand side.} 
The left-hand side of~\eqref{eq:main} belongs to the family of \emph{R\'enyi reflected entropies}. 
This family is defined for two parameters $m\in [0,\infty),n\in [0,\infty]$ as~\cite{dutta2019canonical}
\begin{equation}\label{eq:sr-mn}
S_R^{(m,n)}(A:B)_\rho \coloneqq H_n (AA^*)_{\hat{\rho}^{(m)}}
\end{equation}
where the right-hand side is the R\'enyi entropy of order $n$ of the marginal state on $AA^*$ of $\ket{\hat{\rho}^{(m)}}_{ABA^*B^*}$, which denotes the canonical purification of $\rho_{AB}^m/\tr[\rho_{AB}^m]$. 
Even though the R\'enyi reflected entropy is most commonly defined for two parameters $(m,n)$, it is sometimes convenient to define the R\'enyi reflected entropy with a single parameter~\cite{hayden2023reflected} by setting $m=1$, i.e.,  $S_R^{(n)}(A:B)_\rho\coloneqq S_R^{(1,n)}(A:B)_\rho$. 
The case where $(m,n)=(1,1)$ is known as the \emph{reflected entropy} $S_R(A:B)_\rho$. 
To date, this member of the family has received the most attention in the literature. 
In contrast, our equality in~\eqref{eq:main} highlights the member with $(m,n)=(1,\infty)$, which we call the \emph{min-reflected entropy}.

The reflected entropy was originally introduced in~\cite{dutta2019canonical}. 
They argued that, in holographic theories, half the reflected entropy between two boundary regions is dual to the area of the \emph{entanglement wedge cross section} in the bulk. 
Other proposals for the boundary dual of the entanglement wedge cross section include the entanglement of purification~\cite{umemoto2018entanglement,nguyen2018entanglement,bhattacharyya2018entanglement,hirai2018towards,bao2018holographic}, 
the $R$-correlation~\cite{umemoto2019quantum,levin2020correlation}, 
the logarithmic negativity~\cite{kudlerflam2019entanglement,kusuki2019derivation}, 
the odd entropy~\cite{tamaoka2019}, 
and the entanglement of formation~\cite{mori2024doesconnectedwedgeimply}. 
Not only the reflected entropy but the whole family of measures in~\eqref{eq:sr-mn} was originally introduced in~\cite{dutta2019canonical}. 
The rationale behind introducing this family in~\cite{dutta2019canonical} was to provide a computational tool for calculating the reflected entropy: 
By first computing $S_R^{(m,n)}(A:B)_\rho$ via the replica trick for some natural numbers $m,n\geq 2$, and then performing an analytic continuation in $m$ and $n$, the reflected entropy can be obtained by taking the limit $m,n\rightarrow 1$~\cite{dutta2019canonical}.

Since its introduction, the reflected entropy has gained attention not only in holography~\cite{jeong2019reflected,akers2020entanglement,kudlerflam2020correlation,
marolf2020cft,chandrasekaran2020including,moosa2020time,kusuki2021dynamics,
afrasiar2022covariant,li2022defect,akers2022page,chen2022reflected,basu2022entanglement,
ling2022reflected,afrasiar2023reflected,afrasiar2023reflected2,basak2023holographic,vasli2023holographic,
basu2024reflected,basu2024entanglementttrotatingblack,ahn2024renyireflectedentropyentanglement}, 
but also in other fields such as 
conformal field theory~\cite{camargo2021long,kudlerflam2021quasi,berthiere2023reflected,jiang2024entanglementmembrane2dcft},
condensed matter physics~\cite{berthiere2021topological,sohal2023entanglement,liu2024multipartite}, 
free scalar and fermionic field theory~\cite{bueno2020reflected,bueno2020reflected2,basak2023reflectedentropymarkovgap,dutta2023reflected}, 
and random tensor networks~\cite{akers2022reflected,akers2023reflected,akers2024reflectedentropyrandomtensor}. 
Many of these works~\cite{moosa2020time,kudlerflam2021quasi,berthiere2021topological,
akers2022reflected,akers2022page,basu2022entanglement,chen2022reflected,
akers2023reflected,afrasiar2023reflected,afrasiar2023reflected2,
akers2024reflectedentropyrandomtensor,basu2024reflected,basu2024entanglementttrotatingblack,ahn2024renyireflectedentropyentanglement} 
use the replica trick as a practical tool for calculating the reflected entropy.

The reflected entropy and its R\'enyi variants are often interpreted as correlation measures~\cite{dutta2019canonical,kudlerflam2020correlation,hayden2021markov,kusuki2021dynamics,akers2022reflected,akers2023reflected,
kudlerflam2021quasi,kusuki2020entanglement,basu2022entanglement,liu2022multipartitioning,
liu2024multipartite,basu2024reflected,basu2024entanglementttrotatingblack}. 
The R\'enyi reflected entropy of order $n\in \mathbb{N}_{\geq 2}$ is known to be monotonic under the inclusion of subsystems~\cite{dutta2019canonical}, i.e., $S_R^{(n)}(A:B)_\rho\leq S_R^{(n)}(A:BC)_\rho$ for any tripartite quantum state $\rho_{ABC}$, which implies that these measures are indeed correlation measures. 
This monotonicity, however, can be violated for $n\in (0,2)$, which means that the corresponding R\'enyi reflected entropies do not qualify as correlation measures~\cite{hayden2023reflected}. 
Since this range includes the reflected entropy ($n=1$), it follows that the reflected entropy is not a correlation measure for general quantum states~\cite{hayden2023reflected}, 
even though it appears to behave as a correlation measure for certain special types of states such as holographic states~\cite{dutta2019canonical} or certain free scalar fields~\cite{bueno2020reflected}. 
Nevertheless, it remains possible that the reflected entropy accurately indicates maximal entanglement for general quantum states. 
We thus ask:

\begin{description}
    \item[Question~1] Does the maximality of the reflected entropy (or any other R\'enyi reflected entropy) correspond to maximal entanglement? 
    More precisely, does the R\'enyi reflected entropy achieve its maximal value of $2\log\min \{\dim(A),\dim(B)\}$ iff $A$ and $B$ are maximally entangled?
\end{description}

The aforementioned results offer insights into whether R\'enyi reflected entropies are correlation measures. 
However, a more profound question remains: 
Do they have operational relevance, in the sense of quantifying some aspect of an information-theoretic task? 
Partial progress toward an information-theoretic understanding of the reflected entropy has been made in~\cite{akers2020entanglement,zou2021universal,hayden2021markov}. 
These works examine the difference between the reflected entropy and the mutual information --- named the \textit{Markov gap (of reflected entropy)}~\cite{hayden2021markov}. 
According to these works, the Markov gap serves as a probe of tripartite entanglement~\cite{akers2020entanglement,zou2021universal}, 
and it has been shown~\cite{hayden2021markov} that the Markov gap is lower-bounded by a function of the fidelity of a particular Markov recovery process. 
However, since the Markov gap is the \emph{difference} between the reflected entropy and the mutual information, the aforementioned works attribute only some information-theoretic significance to this difference, but not to the reflected entropy itself. 
The question of whether the reflected entropy (or any of its R\'enyi variants) has an operational meaning thus remains largely unresolved. 
It has been conjectured~\cite{akers2022page} that a deeper connection exists between the reflected entropy and information recovery via the Petz map, but no general arguments supporting this conjecture have been presented to date. 
We thus ask:

\begin{description}
    \item[Question~2] Is there a member of the family of R\'enyi reflected entropies that has an operational interpretation? 
    In particular, is there a general relationship between a R\'enyi reflected entropy and information recovery via the Petz map?
\end{description}

\textbf{The right-hand side.}
The right-hand side of~\eqref{eq:main} belongs to the family of the \emph{doubly minimized Petz R\'enyi mutual information (PRMI) of order $\alpha$}. This family is defined for $\alpha\in [0,\infty)$ as~\cite{burri2024doublyminimizedpetzrenyi}
\begin{equation}\label{eq:i-downdown}
I_\alpha^{\downarrow\downarrow}(A:B)_\rho
\coloneqq\inf_{\substack{\sigma_A,\tau_B}}D_\alpha(\rho_{AB}\| \sigma_A\otimes \tau_B)
\end{equation}
where $D_\alpha$ denotes the Petz divergence and the minimization is over all quantum states $\sigma_A,\tau_B$ on $A,B$. 
The case $\alpha=1$ coincides with the mutual information $I(A:B)_\rho$~\cite{gupta2014multiplicativity,hayashi2016correlation,burri2024doublyminimizedpetzrenyi}. 
The case $\alpha=1/2$ is the measure that appears on the right-hand side of~\eqref{eq:main}. 
General properties of the doubly minimized PRMI of order $\alpha$ have been studied in~\cite{burri2024doublyminimizedpetzrenyi}. 
Although several properties, such as additivity for $\alpha\in [1/2,2]$, are known, no closed-form expression for the doubly minimized PRMI of order $\alpha$ has yet been found (except for $\alpha=1$)~\cite{burri2024doublyminimizedpetzrenyi}.

The classical analogue of the doubly minimized PRMI has been studied in~\cite{tomamichel2018operational,lapidoth2018testing,lapidoth2019two}. 
It is defined for a probability mass function (PMF) $P_{XY}$ as $I_\alpha^{\downarrow\downarrow}(X:Y)_P\coloneqq \inf_{Q_X,R_Y}D_\alpha (P_{XY}\| Q_XR_Y)$ where $D_\alpha$ denotes the R\'enyi divergence of order $\alpha\in [0,\infty)$ and the minimization is over PMFs $Q_X,R_Y$ of the random variables $X,Y$. 
Also for this measure, no closed-form expression has been found yet (except for $\alpha=1$)~\cite{lapidoth2019two}. 
However, for $\alpha=1/2$, at least a \emph{largest eigenvalue formulation} has been derived~\cite[Lemma~6]{lapidoth2019two}, which is typically easier to evaluate than the definition-based expression. 

The classical analogue of the doubly minimized PRMI can be interpreted as one of its special cases: 
If the doubly minimized PRMI $I_\alpha^{\downarrow\downarrow}(A:B)_\rho$ is evaluated for a classical-classical (CC) state $\rho_{AB}$ with PMF $P_{XY}$, then it coincides with $I_\alpha^{\downarrow\downarrow}(X:Y)_P$~\cite{burri2024doublyminimizedpetzrenyi}. 
Hence, the largest eigenvalue formulation mentioned in the previous paragraph also applies to the doubly minimized PRMI of order $\alpha=1/2$ if evaluated for a CC state. 
This naturally raises the question of whether the restriction to CC states can be omitted. 
We thus ask:

\begin{description}
    \item[Question~3] Does a largest eigenvalue formulation for $I_{1/2}^{\downarrow\downarrow}(A:B)_\rho$ exist that holds for all bipartite quantum states $\rho_{AB}$?
\end{description}

\textbf{Outline.} The remainder of this paper is structured as follows. 

Section~\ref{sec:preliminaries} contains some preliminaries. 
First, we explain our notation~(\ref{ssec:notation}). 
Then, we provide definitions and discuss properties related to 
entropies, divergences, and R\'enyi information measures~(\ref{ssec:entropies}), 
the computational basis, canonical purification, complex conjugation, and the swap operator~(\ref{ssec:swap}), 
and the R\'enyi reflected entropy~(\ref{ssec:renyi-reflected}). 

Section~\ref{sec:main} focuses on the presentation of our main result. 
The proof of our main result takes as its starting point the fact that $S_R^{(\infty)}(A:B)_\rho$ is, by definition, determined by the largest eigenvalue of $\hat{\rho}_{AA^*}$. 
Here, $\hat{\rho}_{AA^*}$ denotes the marginal state on 
$AA^*$ of the canonical purification of $\rho_{AB}$. 
As a first proof step, we then show that also $I_{1/2}^{\downarrow\downarrow}(A:B)_\rho$ can be expressed as a function of $\hat{\rho}_{AA^*}$~(\ref{ssec:rho-aa}). 
In order to derive this result, we show how the minimized generalized PRMI of order $1/2$ can be expressed in terms of $\hat{\rho}_{AA^*}$ (Proposition~\ref{prop:i12-omega}), which immediately implies a corresponding expression for the doubly minimized PRMI of order $1/2$ as a special case (Corollary~\ref{cor:h12-i12}). 
To relate the two expressions for $I_{1/2}^{\downarrow\downarrow}(A:B)_\rho$ and  $S_R^{(\infty)}(A:B)_\rho$, we make use of a special property of $\hat{\rho}_{AA^*}$ that we call \emph{$\mathcal{CF}$-invariance}. 
It means that $\hat{\rho}_{AA^*}$ is invariant under the joint application of complex conjugation ($\mathcal{C}$) and swapping ($\mathcal{F}$). 
In Appendix~\ref{app:invariance}, we develop basic techniques for handling $\mathcal{CF}$-invariant linear operators. 
By applying these techniques to our case, we obtain the desired equality in~\eqref{eq:main}, stated as Theorem~\ref{thm:srinf-i12} (\ref{ssec:srinf-i12}). 

Section~\ref{sec:implications} explores implications of our main result. 
This includes 
bounds on the min-reflected entropy~(\ref{ssec:bounds}), 
the correspondence between the maximality of the R\'enyi reflected entropy and maximal entanglement (\ref{ssec:max}), 
a comparison of the (R\'enyi) reflected entropy with the entanglement of purification (\ref{ssec:comparison_eop}), 
an operational interpretation of the min-reflected entropy (\ref{ssec:operational}), 
and a comparison of our result with an analogous result for the classical case in previous work~(\ref{ssec:classical}). 
In these parts, we also answer the three questions posed above. 

Section~\ref{sec:definition} addresses an afterthought concerning the definition of the R\'enyi reflected entropy, focusing on whether the use of the \emph{canonical} purification in the evaluation of the R\'enyi reflected entropy is essential. 
The key insight of this section is that the R\'enyi reflected entropy of order $n\in \mathbb{N}_{\geq 2}\cup \{\infty\}$ can be reformulated as a minimization over a larger class of purifications, where no explicit reference to the canonical purification remains (Theorem~\ref{thm:minimized}). 
While this section is formally independent of our main result, it is qualitatively connected to it, as our main result involves the min-reflected entropy -- i.e., the R\'enyi reflected entropy of order $n=\infty$ -- to which Theorem~\ref{thm:minimized} applies. 

Section~\ref{sec:conclusion} contains concluding remarks.

\textbf{Related work.} 
In the fields of condensed matter physics, integrable systems, and holographic conformal field theories, several works suggest that within the family of the R\'enyi mutual information of order $\alpha\in [0,\infty]$, the case $\alpha=1/2$ holds special significance, as it corresponds to twice the logarithmic negativity in various quantum systems under suitable conditions~\cite{alba2019quantum,kudlerflam2020correlation,kudlerflam2021quasi,bertini2022entanglement,
gruber2020time,ruggiero2022quantum,turkeshi2022enhanced} and is also related to the R\'enyi reflected entropy of order $1/2$~\cite{kudlerflam2019entanglement,kusuki2019derivation,kudlerflam2020correlation}. 
Importantly, all these works refer to the R\'enyi mutual information defined as $I_\alpha^S(A:B)_\rho\coloneqq H_\alpha(A)_\rho + H_\alpha(B)_\rho-H_\alpha(AB)_\rho$, which is simply the sum of three R\'enyi entropies. 
Owing to its straightforward definition, $I_\alpha^S(A:B)_\rho$ is often easier to compute than other types of R\'enyi mutual information. 
However, a significant downside is that it does not qualify as a correlation measure (except for $\alpha=1$), as it can increase under local operations~\cite{linden2013structure,kudlerflam2023renyi1,kudlerflam2023renyi}. 
In contrast, the doubly minimized PRMI of order $\alpha$ is non-increasing under local operations for all $\alpha\in [0,2]$ and exhibits several other natural properties~\cite{burri2024doublyminimizedpetzrenyi}. 
Our main result implies that, within the family of the doubly minimized PRMI of order $\alpha$, the case $\alpha=1/2$ stands out (when compared to R\'enyi reflected entropies), as it exactly equals the min-reflected entropy for general quantum states. 
Since there is no clear link between $I_{1/2}^{\downarrow\downarrow}(A:B)_\rho$ and $I_{1/2}^S(A:B)_\rho$, our work does not have an immediate connection to the aforementioned studies.

\section{Preliminaries}\label{sec:preliminaries}
\subsection{Notation}\label{ssec:notation}
The set of natural numbers strictly smaller than $n\in \mathbb{N}$ is denoted as $[n]\coloneqq \{0,1,\dots, n-1\}$. 
``$\log$'' denotes the natural logarithm. 

Throughout this paper, all Hilbert spaces are assumed to have finite dimension for simplicity. 
The dimension of a Hilbert space $A$ is denoted by $d_A\equiv \dim (A)$. 
The tensor product of two Hilbert spaces $A$ and $B$ is denoted by $A\otimes B$ or $AB$. 
The set of linear maps from $A$ to $B$ is denoted by $\mathcal{L}(A,B)$ and we define $\mathcal{L}(A)\coloneqq \mathcal{L}(A,A)$. 
To simplify the notation, identity operators are sometimes omitted, i.e., 
for any $X_A\in \mathcal{L}(A)$, ``$X_A$'' may also be interpreted as $X_A\otimes 1_B\in \mathcal{L}(AB)$. 
Furthermore, subscripts may be omitted for simplicity; 
for example, $\ket{\phi}\equiv \ket{\phi}_A$ for $\ket{\phi}_A\in A$, 
and $X\equiv X_A$ for $X_A\in \mathcal{L}(A)$. 

The kernel, image, and rank of $X\in \mathcal{L}(A,B)$ are denoted as $\ker(X),\im(X),$ and $\rank(X)$. 
The support of $X\in \mathcal{L}(A,B)$ is denoted as $\supp(X)$ and is defined as the orthogonal complement of the kernel of $X$. 
For $X,Y\in \mathcal{L}(A,B)$, $X\ll Y$ is true iff $\ker(Y)\subseteq \ker(X)$. 
For $X,Y\in \mathcal{L}(A,B)$, $X\not\perp Y$ is true iff $\supp(X)\cap \supp(Y)$ contains at least one non-zero vector. 
The Hermitian adjoint of $X\in \mathcal{L}(A,B)$ is denoted by $X^\dagger\in \mathcal{L}(B,A)$.

The spectrum of $X\in \mathcal{L}(A)$ is denoted as $\spec(X)$. 
The trace of $X\in \mathcal{L}(A)$ is denoted as $\tr[X]$, and the partial trace over $A$ as $\tr_A$. 
The commutator of $X,Y\in \mathcal{L}(A)$ is denoted as $[X,Y]\coloneqq XY-YX$. 
For $X\in \mathcal{L}(A)$, $X\geq 0$ is true iff $X$ is positive semidefinite.
We employ the generalized inverse to take powers of positive semidefinite operators that do not have full support. 
Thus, for any positive semidefinite $X\in \mathcal{L}(A)$, $X^p$ is defined for $p\in \mathbb{R}$ by taking the power on the support of $X$. 
For $p=1/2$, the square root symbol is sometimes used ($\sqrt{X}\coloneqq X^{1/2}$). 
The operator absolute value of $X\in \mathcal{L}(A,B)$ is denoted as $\lvert X\rvert\coloneqq (X^\dagger X)^{1/2}$. 
The Schatten $p$-norm of $X\in \mathcal{L}(A,B)$ is denoted as $\|X \|_p\coloneqq (\tr[\lvert X\rvert^p])^{1/p}$ for $p\in [1,\infty)$, 
and as $\lVert X\rVert_{\infty}\coloneqq \sqrt{\max (\spec(X^\dagger X))}$ for $p=\infty$. 
The Schatten $p$-quasi-norm is denoted as $\lVert X\rVert_p\coloneqq (\tr[|X|^p])^{1/p}$ for $p\in (0,1)$.

The set of unitary operators on $A$ is $\mathcal{U}(A)\coloneqq \{U\in \mathcal{L}(A):U^\dagger U=1_A\}$. 
The set of orthogonal projections on $A$ is denoted by $\Proj(A)\coloneqq\{P\in \mathcal{L}(A):P^\dagger = P = P^2\}$, and the set of orthogonal projections of rank $r\in \mathbb{N}$ by $\Proj_r(A)\coloneqq \{P\in \Proj(A):\mathrm{rank}(P)=r\}$.

The set of (quantum) states on $A$ is $\mathcal{S}(A)\coloneqq\{\rho\in \mathcal{L}(A):\rho\geq 0,\tr[\rho]=1\}$. 
The set of completely positive, trace-preserving linear maps from $\mathcal{L}(A)$ to $\mathcal{L}(B)$ is denoted by $\CPTP(A, B)$. 
Elements of this set are called (quantum) channels.

\subsection{Entropies, divergences, and R\'enyi information measures}\label{ssec:entropies}
The \emph{von Neumann entropy} of $\rho\in \mathcal{S}(A)$ is $H(A)_\rho \coloneqq -\tr[\rho\log \rho]$. 
The \emph{R\'enyi entropy of order $\alpha$} of $\rho\in \mathcal{S}(A)$ is defined as $H_\alpha (A)_\rho \coloneqq\frac{1}{1-\alpha}\log \tr[\rho^\alpha]$ for $\alpha \in (0,1)\cup (1,\infty)$, 
and for $\alpha \in \{0,1,\infty\}$ as the corresponding limits. 
The limit $\alpha\rightarrow 1$ coincides with the von Neumann entropy, i.e., $H_1(A)_\rho =H(A)_\rho$. 
The limit $\alpha\rightarrow \infty$ is given by
\begin{equation}\label{eq:h-inf}
H_\infty(A)_\rho 
=-\log \lVert \rho\rVert_\infty
=-\log \max ( \spec(\rho))
=-\log \max_{\substack{\ket{\phi}\in A:\\ \brak{\phi}=1}}\tr[\rho\proj{\phi}]
\end{equation}
and is called the \emph{min-entropy}. 

The \emph{mutual information} of $\rho_{AB}\in \mathcal{S}(AB)$ is $I(A:B)_\rho \coloneqq H(A)_\rho+H(B)_\rho-H(AB)_\rho$, 
and the \emph{conditional entropy} is $H(A|B)_\rho \coloneqq H(AB)_\rho -H(B)_\rho$. 
The \emph{entanglement of purification}~\cite{terhal2002entanglement} is defined for $\rho_{AB}\in \mathcal{S}(AB)$ as 
\begin{align}
E_P(A:B)_\rho \coloneqq \min_{\substack{\sigma_{ABC}\in \mathcal{S}(ABC):\\\tr_C[\sigma_{ABC}]=\rho_{AB}}}H(AC)_\sigma 
\end{align}
where $C$ denotes an arbitrary finite-dimensional Hilbert space, 
and it is understood that the minimization also ranges over the dimension of $C$. 
This measure is symmetric, $E_P(A:B)_\rho=E_P(B:A)_\rho$, and it is bounded as
\begin{equation}\label{eq:eop-bounds}
\frac{1}{2}I(A:B)_\rho\leq 
E_P(A:B)_\rho
\leq \min \{H(A)_\rho,H(B)_\rho\}
\leq \min \{H_0(A)_\rho,H_0(B)_\rho\}
\leq \log \min \{d_A,d_B\}.
\end{equation}

The \emph{quantum relative entropy} of $\rho\in \mathcal{S}(A)$ relative to any positive semidefinite $\sigma\in \mathcal{L}(A)$ is
\begin{equation}
D(\rho\| \sigma)\coloneqq \tr[\rho (\log\rho -\log\sigma)]
\end{equation}
if $\rho\ll \sigma$, and $D(\rho\| \sigma)\coloneqq\infty$ else. 
Using the quantum relative entropy, the mutual information and the conditional entropy of $\rho_{AB}\in \mathcal{S}(AB)$ can be expressed as follows~\cite{mueller2013quantum,tomamichel2014relating,gupta2014multiplicativity,hayashi2016correlation}.
\begin{align}
I(A:B)_\rho 
&= D(\rho_{AB}\| \rho_A\otimes \rho_B)
=\inf_{\tau_B\in \mathcal{S}(B)}D(\rho_{AB}\| \rho_A\otimes \tau_B)
=\inf_{\substack{\sigma_A\in \mathcal{S}(A),\\ \tau_B\in \mathcal{S}(B)}}D(\rho_{AB}\| \sigma_A\otimes \tau_B)
\label{eq:i-divergence}
\\
H(A|B)_\rho
&= -D(\rho_{AB}\| 1_A\otimes \rho_B) 
=\sup_{\tau_B\in \mathcal{S}(B)} -D(\rho_{AB}\| 1_A\otimes \tau_B)
\label{eq:h-divergence}
\end{align}

The \emph{Petz (quantum R\'enyi) divergence of order $\alpha$} is defined for $\alpha\in [0,1)\cup (1,\infty)$, $\rho\in \mathcal{S}(A)$, and any positive semidefinite $\sigma\in \mathcal{L}(A)$ as
\begin{equation}
D_\alpha (\rho\| \sigma)\coloneqq \frac{1}{\alpha -1} \log \tr [\rho^\alpha \sigma^{1-\alpha}]
\end{equation}
if $(\alpha <1\land \rho\not\perp\sigma)\lor \rho\ll\sigma$, and $D_\alpha (\rho\| \sigma)\coloneqq\infty$ else.
$D_1$ is defined as the limit of $D_\alpha$ for $\alpha\rightarrow 1$. 
This limit coincides with the quantum relative entropy, i.e., 
$D_1 (\rho\| \sigma)=D(\rho\|\sigma )$~\cite{tomamichel2016quantum}.

The \emph{minimized generalized Petz R\'enyi mutual information (PRMI) of order $\alpha$} 
of $\rho_{AB}\in \mathcal{S}(AB)$ relative to any positive semidefinite $\sigma_A\in \mathcal{L}(A)$ is defined for $\alpha\in [0,\infty)$ as~\cite{hayashi2016correlation}
\begin{align}\label{eq:mi-generalized}
I_{\alpha}^{\downarrow}(\rho_{AB}\| \sigma_A)
\coloneqq \inf_{\tau_B\in \mathcal{S}(B)}D_{\alpha}(\rho_{AB}\| \sigma_A\otimes \tau_B) .
\end{align}
If $(\alpha\in (0,1)\land \rho_A\not\perp\sigma_A)\lor (\alpha\in[1,\infty)\land \rho_A\ll \sigma_A)$, then the optimizer is uniquely given by 
$\tau_B=(\tr_A[\rho_{AB}^\alpha\sigma_A^{1-\alpha}])^{\frac{1}{\alpha}}/\tr[(\tr_A[\rho_{AB}^\alpha\sigma_A^{1-\alpha}])^{\frac{1}{\alpha}}]$~\cite{hayashi2016correlation}.
By inserting this optimizer in~\eqref{eq:mi-generalized}, it follows that for any $\alpha\in (0,1)\cup (1,\infty)$
\begin{equation}\label{eq:i-gen-explicit}
I_{\alpha}^{\downarrow}(\rho_{AB}\| \sigma_A)
=\frac{\alpha}{\alpha -1}\log \tr [(\tr_A[\rho_{AB}^\alpha \sigma_A^{1-\alpha}] )^{\frac{1}{\alpha}}]
\end{equation}
if $(\alpha \in (0,1)\land\rho_A\not\perp \sigma_A)\lor \rho_A\ll \sigma_A$, and $I_{\alpha}^{\downarrow}(\rho_{AB}\| \sigma_A)=\infty$ else.

To obtain a R\'enyi version of the mutual information, any of the three expressions in~\eqref{eq:i-divergence} can serve as a starting point, yielding different R\'enyi generalizations of the mutual information when combined with a suitable R\'enyi divergence~\cite{gupta2014multiplicativity,hayashi2016correlation,burri2024doublyminimizedpetzrenyi,burri2024doublyminimizedsandwichedrenyi}. 
Of these, only the following two types of R\'enyi mutual information will be of interest in this work. 
In analogy to the second-to-last expression in~\eqref{eq:i-divergence}, the \emph{singly minimized PRMI of order $\alpha$}~\cite{hayashi2016correlation} is defined for $\alpha \in [0,\infty)$ as
\begin{equation}\label{eq:prmi1}
I_\alpha^{\uparrow\downarrow} (A:B)_\rho 
\coloneqq \inf_{\tau_B\in \mathcal{S}(B)}D_\alpha (\rho_{AB}\| \rho_A\otimes \tau_B ) 
=I_\alpha^{\downarrow}(\rho_{AB}\| \rho_A).
\end{equation}
In analogy to the last expression in~\eqref{eq:i-divergence}, the \emph{doubly minimized PRMI of order $\alpha$} is defined for $\alpha \in [0,\infty)$ as~\cite{burri2024doublyminimizedpetzrenyi}
\begin{equation}\label{eq:prmi2}
I^{\downarrow\downarrow}_\alpha (A:B)_\rho \coloneqq \inf_{\substack{\sigma_A\in \mathcal{S}(A), \\ \tau_B\in \mathcal{S}(B) }} D_\alpha (\rho_{AB}\| \sigma_A\otimes \tau_B ) 
=\inf_{\sigma_A\in \mathcal{S}(A)} I_\alpha^{\downarrow}(\rho_{AB}\| \sigma_A).
\end{equation}

To obtain a R\'enyi version of the conditional entropy, any of the two expressions in~\eqref{eq:h-divergence} can serve as a starting point, yielding different R\'enyi generalizations of the conditional entropy when combined with a suitable R\'enyi divergence~\cite{tomamichel2014relating}. 
Of these, only the following type of R\'enyi conditional entropy will be of interest in this work. 
In analogy to the last expression in~\eqref{eq:h-divergence}, the \emph{maximized Petz R\'enyi conditional entropy of order $\alpha$} is defined for $\alpha \in [0,\infty)$ as
\begin{equation}\label{eq:h-alpha}
H_\alpha^{\uparrow} (A|B)_\rho
\coloneqq \sup_{\tau_B\in \mathcal{S}(B)}- D_\alpha (\rho_{AB}\| 1_A\otimes \tau_B)
=-I_{\alpha}^{\downarrow}(\rho_{AB}\| 1_A)
=\log d_A-I_{\alpha}^{\downarrow}(\rho_{AB}\| 1_A/d_A).
\end{equation}

\subsection{Computational basis, canonical purification, complex conjugation, and swap operator}\label{ssec:swap}
We assume that each Hilbert space $A$ is equipped with an arbitrary but fixed orthonormal basis $\{\ket{i}_A\}_{i\in [d_A]}$, the \emph{computational basis of $A$}. 
The computational basis of any tensor product $A\otimes B$ is assumed to coincide with the tensor product of the computational bases of $A$ and $B$. 
If $A$ is a Hilbert space, then $A^*$ denotes a Hilbert space isomorphic to $A$. 
Using the computational basis, 
we define the \emph{unnormalized canonical maximally entangled state on $AA^*$} as 
$\ket{\Omega}_{AA^*}\coloneqq \sum_{j\in [d_A]}\ket{j}_A\otimes \ket{j}_{A^*}$. 
Since the computational basis of a tensor product of Hilbert spaces coincides with the tensor product of the respective computational bases, we have for any $n\in \mathbb{N}_{>0}$
\begin{equation}
\ket{\Omega}_{A_1\dots A_nA_1^*\dots A_n^*}
=\bigotimes\limits_{j=1}^n \ket{\Omega}_{A_jA_j^*} .
\end{equation}

Hilbert spaces of the same dimension are identified through their respective computational bases. 
Specifically, the isomorphism $A\simeq A^*$ is determined by 
$U\coloneqq \sum_{j\in [d_A]}\ket{j}_{A^*}\bra{j}_A\in \mathcal{L}(A,A^*)$, 
where we adopt the following notation conventions.
\begin{itemize}
\item For any $\ket{\phi}_A\in A$, ``$\ket{\phi}_{A^*}$'' is understood as $\ket{\phi}_{A^*}\coloneqq U\ket{\phi}_A\in A^*$. 
\item For any $\bra{\phi}_A\in \mathcal{L}(A,\mathbb{C})$, ``$\bra{\phi}_{A^*}$'' is understood as $\bra{\phi}_{A^*}\coloneqq \bra{\phi}_AU^\dagger \in \mathcal{L}(A^*,\mathbb{C})$. 
\item For any $X_A\in \mathcal{L}(A)$, ``$X_{A^*}$'' is understood as $X_{A^*}\coloneqq UX_AU^{\dagger}\in \mathcal{L}(A^*)$.
\end{itemize}

The \emph{canonical purification of $\rho_A\in \mathcal{S}(A)$} is defined as
\begin{equation}\label{eq:def-cp}
\ket{\hat{\rho}}_{AA^*} \coloneqq \sqrt{\rho_A}\ket{\Omega}_{AA^*}\in AA^* .
\end{equation}
Note that this definition is also applicable to multipartite states. 
For instance,~\eqref{eq:def-cp} implies that the canonical purification of a bipartite quantum state $\rho_{AB}\in \mathcal{S}(AB)$ is given by
\begin{equation}\label{eq:purification-abab}
\ket{\hat{\rho}}_{ABA^*B^* } 
=\sqrt{\rho_{AB}}\ket{\Omega}_{AA^*}\otimes \ket{\Omega}_{BB^*} .
\end{equation}

The \emph{complex conjugate of} $X_{A}\in \mathcal{L}(A)$ is denoted by $X_A^*$ and is taken with respect to the computational basis of $A$, 
i.e., $X_A^*\coloneqq\sum_{j,k\in [d_A]} \bra{j}_AX_A\ket{k}_A^*\ketbra{j}{k}_A$. 
Similarly, the \emph{complex conjugate of} $\ket{\phi}_A\in A$ is given by 
$\ket{\phi}_A^*\coloneqq\sum_{j\in [d_A]}\braket{j}{\phi}_A^*\ket{j}_A$.

The \emph{swap operator (or: flip operator)} is defined as 
$\swapf_{AA^*} \coloneqq\sum_{j,k\in [d_A]}\ketbra{j}{k}_{A}\otimes \ketbra{k}{j}_{A^*}\in \mathcal{L}(AA^*)$.
This implies that $\swapf_{AA^*} \ket{\phi}_A\otimes \ket{\psi}_{A^*}=\ket{\psi}_A\otimes \ket{\phi}_{A^*}$
for all $\ket{\phi}_{A}\in A,\ket{\psi}_{A^*}\in A^*$.
The swap operator is self-adjoint, unitary, and invariant under complex conjugation $(\swapf_{AA^*} ^*=\swapf_{AA^*} )$. 
The action of the swap operator on $X_{AA^*}\in \mathcal{L}(AA^*)$ is denoted as
$X_{AA^*}^{\swapf_{AA^*} }\coloneqq \swapf_{AA^*}  X_{AA^*} \swapf_{AA^*} \in \mathcal{L}(AA^*)$.

\subsection{R\'enyi reflected entropy}\label{ssec:renyi-reflected}

The \emph{R\'enyi reflected entropy of order $n\in [0,\infty]$ for $m\in [0,\infty)$} is defined for $\rho_{AB}\in \mathcal{S}(AB)$ as~\cite{dutta2019canonical,akers2022page}
\begin{equation}\label{eq:def-sr}
S_R^{(m,n)}(A:B)_\rho \coloneqq H_n (AA^*)_{\hat{\rho}^{(m)}}
\end{equation}
where $\hat{\rho}_{AA^*}^{(m)}\coloneqq \tr_{BB^*}[\proj{\hat{\rho}^{(m)}}_{ABA^*B^*}]$ and 
\begin{equation}\label{eq:rho-aa}
\ket{\hat{\rho}^{(m)}}_{ABA^*B^*}\coloneqq 
\frac{1}{\sqrt{\tr[\rho_{AB}^m]}}\rho_{AB}^{m/2} \ket{\Omega}_{AA^*}\otimes\ket{\Omega}_{BB^*}  .
\end{equation}
Note that $\ket{\hat{\rho}^{(m)}}_{ABA^*B^*}$ is the canonical purification of $\rho_{AB}^m/\tr[\rho_{AB}^m]$. 

The \emph{R\'enyi reflected entropy of order $n\in [0,\infty]$} is $S_R^{(n)}(A:B)_\rho\coloneqq S_R^{(1,n)}(A:B)_\rho$~\cite{hayden2023reflected}. 
The \emph{reflected entropy}~\cite{dutta2019canonical} is $S_R (A:B)_\rho \coloneqq S_R^{(1)}(A:B)_\rho$, 
and we refer to $S_R^{(\infty)}(A:B)_\rho$ as the \emph{min-reflected entropy}. 

The $(m,n)$-R\'enyi reflected entropy of $\rho_{AB}$ can always be expressed as the $n$-R\'enyi reflected entropy of another state that depends on $m$. 
To be precise, we have for any $m\in [0,\infty),n\in [0,\infty],\rho_{AB}\in \mathcal{S}(AB)$
\begin{align}\label{eq:sr-mn-1n}
S_R^{(m,n)}(A:B)_\rho
=S_R^{(n)}(A:B)_{\sigma^{(m)}}
\qquad\text{where}\qquad
\sigma_{AB}^{(m)}\coloneqq \rho_{AB}^m/\tr[\rho_{AB}^m].
\end{align}
By means of~\eqref{eq:sr-mn-1n}, several results regarding the $n$-R\'enyi reflected entropy can be readily generalized to statements about the $(m,n)$-R\'enyi reflected entropy. 
For this reason, the remainder of this section focuses on the $n$-R\'enyi reflected entropy.

Below, we enumerate several properties of the R\'enyi reflected entropy, most of which are well-known~\cite{dutta2019canonical}. 
Since these properties directly follow from the definition of the R\'enyi reflected entropy (primarily due to corresponding properties of R\'enyi entropies), they are stated without proof. 

Let $\rho_{AB}\in \mathcal{S}(AB)$. 
Then all of the following hold.

\begin{enumerate}[label=(\alph*)]
\item \emph{Monotonicity:} For all $\alpha,\beta\in [0,\infty]$ such that $\alpha \leq \beta$ holds 
$S_R^{(\alpha)}(A:B)_\rho\geq S_R^{(\beta)}(A:B)_\rho$.
\item \emph{Continuity, Non-negativity:} The function $[0,\infty)\rightarrow [0,\infty),\alpha\mapsto S_R^{(\alpha)}(A:B)_\rho$ is continuous, and 
$S_R^{(\infty)}(A:B)_\rho=\lim_{\alpha\rightarrow\infty}S_R^{(\alpha)}(A:B)_\rho$.
\item \emph{Symmetry:} $S_R^{(\alpha)}(A:B)_\rho = S_R^{(\alpha)}(B:A)_\rho$ for all $\alpha\in [0,\infty]$.
\item \emph{Bounds:} For all $\alpha\in [0,\infty]$ 
\begin{align}\label{eq:sr-bounds}
S_R^{(\alpha)}(A:B)_\rho 
\leq S_R^{(0)}(A:B)_\rho
\leq 2\log\min\{\rank(\rho_A),\rank(\rho_B)\}
\leq 2\log \min\{d_A,d_B\} .
\end{align}
\item \emph{Invariance under local isometries:} 
Let $V\in \mathcal{L}(A,C),W\in \mathcal{L}(B,D)$ be isometries. 
Then 
\begin{align}
S_R^{(\alpha)}(C:D)_{V\otimes W\rho_{AB} V^\dagger \otimes W^\dagger} =S_R^{(\alpha)}(A:B)_\rho
\qquad\forall\alpha\in [0,\infty].
\end{align}
\item \emph{Additivity over tensor product:} 
Let $\sigma_{CD}\in \mathcal{S}(CD)$. Then 
\begin{equation}
S_R^{(\alpha)}(AC:BD)_{\rho\otimes\sigma}
= S_R^{(\alpha)}(A:B)_{\rho} + S_R^{(\alpha)}(C:D)_{\sigma} 
\qquad\forall\alpha\in [0,\infty].
\end{equation}
\item \emph{Reflected entropy:} 
Let $\ket{\hat{\rho}}_{ABA^*B^*}$ be the canonical purification of $\rho_{AB}$ and let $\hat{\rho}_{ABA^*}\coloneqq \tr_{B^*}[\proj{\hat{\rho}}_{ABA^*B^*}]$. 
Then, $S_R(A:B)_\rho=I(AA^*:B)_{\hat{\rho}}$  
and 
\begin{equation}\label{eq:i-leq-sr}
0\leq I(A:B)_\rho \leq S_R(A:B)_\rho \leq 2\min \{H(A)_\rho,H(B)_\rho\} .
\end{equation}
\item \emph{Product states:} 
For any $\alpha\in [0,\infty]:$ $\rho_{AB}=\rho_A\otimes \rho_B$ iff $S_R^{(\alpha)}(A:B)_\rho = 0$.
\item \emph{Pure states:} If there exists $\ket{\rho}_{AB}\in AB$ such that $\rho_{AB}=\proj{\rho}_{AB}$, then $S_R^{(\alpha)}(A:B)_\rho=2H_\alpha (A)_\rho$ for all $\alpha\in [0,\infty]$.
\item \emph{CC states:} \sloppy 
Let $P_{XY}$ be the joint PMF of two random variables $X,Y$ over $[d_A],[d_B]$. 
Let $M_{x,y}\coloneqq \sqrt{P_{XY}(x,y)}$ for all $x\in [d_A],y\in [d_B]$ 
and let $M\coloneqq (M_{x,y})_{x\in [d_A],y\in [d_B]}$ be a $(d_A\times d_B)$-matrix. 
Let $M^T$ denote the transpose of $M$. 
If there exist orthonormal bases $\{\ket{a_x}_A\}_{x\in [d_A]},\{\ket{b_y}_B\}_{y\in [d_B]}$ for $A,B$ such that 
$\rho_{AB}=\sum_{x\in [d_A],y\in [d_B]} P_{XY}(x,y)\proj{a_x,b_y}_{AB}$, then 
\begin{align}
S_R^{(\alpha)}(A:B)_\rho
&=\frac{1}{1-\alpha}\log \lVert MM^T \rVert_\alpha^\alpha
=\frac{1}{1-\alpha}\log \lVert M \rVert_{2\alpha}^{2\alpha} 
\qquad \forall \alpha\in (0,1)\cup (1,\infty),
\\
S_R^{(\infty)}(A:B)_\rho
&=-\log \lVert MM^T \rVert_\infty
=-2\log \lVert M\rVert_{\infty}.
\label{eq:reflected-cc}
\end{align}
\item \emph{Copy-CC states:} 
Let $P_{X}$ be the PMF of a random variable $X$ over $[\min\{d_A,d_B\}]$. 
If there exist orthonormal bases $\{\ket{a_x}_A\}_{x\in [d_A]},\{\ket{b_y}_B\}_{y\in [d_B]}$ for $A,B$ such that 
$\rho_{AB}= \sum_{x\in [\min\{d_A,d_B\}]} P_{X}(x)\proj{a_x,b_x}_{AB}$, then 
$S_R^{(\alpha)}(A:B)_\rho=H_\alpha (A)_\rho$ for all $\alpha\in [0,\infty]$.
\end{enumerate}

\section{Main result}\label{sec:main}

\subsection{Minimized generalized PRMI of order 1/2 from canonical purification}\label{ssec:rho-aa}
The following proposition shows how the minimized generalized PRMI of order 1/2 can be expressed in terms of $\hat{\rho}_{AA^*}$ instead of $\rho_{AB}$. 
The proof of Proposition~\ref{prop:i12-omega} is given in Appendix~\ref{app:proof-omega}.
The evaluation of Proposition~\ref{prop:i12-omega} for different states $\sigma_A$ yields the subsequent corollary. 
Of these results, only~\eqref{eq:i122-rhoaa} in Corollary~\ref{cor:h12-i12} will be used in the rest of the paper. 
The other results are presented for completeness. 

\begin{prop}[Minimized generalized PRMI of order 1/2 in terms of $\hat{\rho}_{AA^*}$]\label{prop:i12-omega}
Let $\rho_{AB}\in \mathcal{S}(AB)$, 
let $\ket{\hat{\rho}}_{ABA^*B^*}$ be its canonical purification, 
and let $\hat{\rho}_{AA^*}\coloneqq \tr_{BB^*}[\proj{\hat{\rho}}_{ABA^*B^*}]$. 
Let $\sigma_A\in \mathcal{L}(A)$ be positive semidefinite. Then
\begin{equation}
I_{1/2}^{\downarrow}(\rho_{AB}\| \sigma_A )
=-\log \bra{\Omega}_{AA^*} \sqrt{\sigma_A}\hat{\rho}_{AA^*}\sqrt{\sigma_A}\ket{\Omega}_{AA^*}
\end{equation}
if $\rho_A\not\perp \sigma_A$, and $I_{1/2}^{\downarrow}(\rho_{AB}\| \sigma_A )=\infty$ else.
\end{prop}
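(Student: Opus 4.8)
The plan is to reduce the claim to one trace identity and then verify that identity with a short calculation that uses only the definition of $\ket{\Omega}$. First I would specialize the explicit expression \eqref{eq:i-gen-explicit} to $\alpha=1/2\in(0,1)$. Abbreviating $R\coloneqq\rho_{AB}^{1/2}$ and $S\coloneqq\sqrt{\sigma_A}$ (both Hermitian) and setting
\[
Y_B\coloneqq\tr_A[\rho_{AB}^{1/2}\sigma_A^{1/2}]=\tr_A[R(S\otimes 1_B)]=\tr_A[(S\otimes 1_B)R],
\]
where the last equality is partial-trace cyclicity over the traced-out system $A$ and also shows $Y_B=Y_B^\dagger$, equation \eqref{eq:i-gen-explicit} reads $I_{1/2}^{\downarrow}(\rho_{AB}\|\sigma_A)=-\log\tr[Y_B^2]$ when $\rho_A\not\perp\sigma_A$, and $=\infty$ otherwise. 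The second branch already matches the asserted value, so the whole proposition reduces to the identity
\[
\tr[Y_B^2]=\bra{\Omega}_{AA^*}\,S\,\hat{\rho}_{AA^*}\,S\,\ket{\Omega}_{AA^*}.
\]

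To prove this I would rewrite the right-hand side as a full expectation value in the purification. Using $\hat{\rho}_{AA^*}=\tr_{BB^*}[\proj{\hat{\rho}}]$ with $\ket{\hat{\rho}}_{ABA^*B^*}=R\ket{\Omega}_{AA^*}\ket{\Omega}_{BB^*}$ from \eqref{eq:purification-abab}, the right-hand side equals $\bra{\hat{\rho}}\big((S\proj{\Omega}_{AA^*}S)\otimes 1_{BB^*}\big)\ket{\hat{\rho}}$. Since $S$ is Hermitian, $S\proj{\Omega}_{AA^*}S=\ketbra{w}{w}$ with $\ket{w}\coloneqq S\ket{\Omega}_{AA^*}$, so this expectation is $\braket{g}{g}$ for $\ket{g}_{BB^*}\coloneqq(\bra{w}_{AA^*}\otimes 1_{BB^*})\ket{\hat{\rho}}$. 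I would then contract the $AA^*$ factor: writing $R=\sum_k P_k\otimes Q_k$ as a sum of products of an $A$-operator and a $B$-operator and using the elementary overlap rule $\bra{\Omega}_{AA^*}T_A\ket{\Omega}_{AA^*}=\tr[T_A]$ (immediate from $\ket{\Omega}_{AA^*}=\sum_j\ket{j}_A\ket{j}_{A^*}$), the $A^*$ legs pass through untouched and one obtains $\ket{g}_{BB^*}=Y_B\ket{\Omega}_{BB^*}$ with the very same $Y_B$. Finally, since $Y_B$ is Hermitian, $\braket{g}{g}=\bra{\Omega}_{BB^*}Y_B^2\ket{\Omega}_{BB^*}=\tr[Y_B^2]$ by the same overlap rule, which closes the argument.

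The step needing the most care is the contraction giving $\ket{g}_{BB^*}=Y_B\ket{\Omega}_{BB^*}$: one must check that the spectator system $B$ sitting inside $R$ survives the $AA^*$ overlap as a \emph{partial} trace over $A$ (not a full trace), and that the transposition implicit in the $\ket{\Omega}$ conventions combines with the Hermiticity of $R$ and $S$ so that exactly $\tr_A[(S\otimes 1_B)R]$ -- the operator in \eqref{eq:i-gen-explicit} -- reappears with no leftover complex conjugation. I expect this to be the only place an error could enter, and I would cross-check it by a direct computation in the computational basis, where the two sides match after the harmless relabeling $b\leftrightarrow b'$ of the two $B$-index sums. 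As a sanity check on the boundary case, $\rho_A\perp\sigma_A$ forces $Y_B=0$, hence $\tr[Y_B^2]=0$ and $-\log 0=\infty$, consistent with \eqref{eq:i-gen-explicit}; but this case is already covered by the second branch above and needs no separate treatment.
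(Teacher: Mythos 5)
Your proof is correct and takes essentially the same route as the paper's: both specialize \eqref{eq:i-gen-explicit} to $\alpha=1/2$ (so that everything reduces to the quantity $\tr[(\tr_A[\sqrt{\rho_{AB}}\sqrt{\sigma_A}])^2]$), and both verify the resulting trace identity using the Hermiticity of the partial-trace factors, the overlap rule $\bra{\Omega}_{AA^*}(T_A\otimes 1_{A^*})\ket{\Omega}_{AA^*}=\tr[T_A]$, and the product structure \eqref{eq:purification-abab} of the canonical purification. The only difference is organizational: the paper runs a single chain of operator-trace identities ending in $\bra{\Omega}_{AA^*}\sqrt{\sigma_A}\hat{\rho}_{AA^*}\sqrt{\sigma_A}\ket{\Omega}_{AA^*}$, whereas you contract at the vector level and identify $\ket{g}_{BB^*}=Y_B\ket{\Omega}_{BB^*}$; these are the same computation.

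One caveat: your closing sanity check is false. In this paper, $\rho_A\perp\sigma_A$ means $\supp(\rho_A)\cap\supp(\sigma_A)=\{0\}$ (trivial intersection of supports), \emph{not} that the supports are orthogonal subspaces, so it does not force $Y_B=0$. For example, take $d_A=2$, $\rho_{AB}=\proj{0}_A\otimes\tau_B$ for any state $\tau_B$, and $\sigma_A=\proj{+}_A$ with $\ket{+}=(\ket{0}+\ket{1})/\sqrt{2}$: then $\rho_A\perp\sigma_A$, yet $Y_B=\tr_A[\sqrt{\rho_{AB}}\sqrt{\sigma_A}]=\tfrac{1}{2}\sqrt{\tau_B}\neq 0$ and $\tr[Y_B^2]=\tfrac{1}{4}$. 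Consequently the formula $-\log\tr[Y_B^2]$ (equivalently, the right-hand side of the proposition) does \emph{not} extend to the case $\rho_A\perp\sigma_A$: there it can be finite while $I_{1/2}^{\downarrow}(\rho_{AB}\|\sigma_A)=\infty$, which is precisely why the case distinction in \eqref{eq:i-gen-explicit} and in the proposition is needed. Since you explicitly note that this case is already settled by the second branch of \eqref{eq:i-gen-explicit} and your main argument nowhere uses the claim $Y_B=0$, the proof itself stands; only that remark should be deleted or corrected.
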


\begin{cor}[Petz R\'enyi information measures of order 1/2 in terms of $\hat{\rho}_{AA^*}$]\label{cor:h12-i12}
Let $\rho_{AB}\in \mathcal{S}(AB)$, 
let $\ket{\hat{\rho}}_{ABA^*B^*}$ be its canonical purification, 
and let $\hat{\rho}_{AA^*}\coloneqq \tr_{BB^*}[\proj{\hat{\rho}}_{ABA^*B^*}]$. 
Then the following identities hold.
\begin{align}
\log d_A- H_{1/2}^{\uparrow}(A|B)_\rho 
= I_{1/2}^{\downarrow}(\rho_{AB}\| 1_A/d_A) 
&=-\log \bra{\Omega}_{AA^*}\frac{1}{\sqrt{d_A}}\hat{\rho}_{AA^*}\frac{1}{\sqrt{d_A}}\ket{\Omega}_{AA^*}
\label{eq:h12-rhoaa}\\
I_{1/2}^{\uparrow\downarrow}(A:B)_\rho 
= I_{1/2}^{\downarrow}(\rho_{AB}\| \rho_A )
&= - \log \bra{\Omega}_{AA^*}\sqrt{\rho_A}\hat{\rho}_{AA^*}\sqrt{\rho_A}\ket{\Omega}_{AA^*}
\label{eq:i121-rhoaa}\\
I_{1/2}^{\downarrow\downarrow}(A:B)_\rho
=\min_{\sigma_A\in \mathcal{S}(A)} I_{1/2}^{\downarrow}(\rho_{AB}\| \sigma_A ) 
&= -\log \max_{\sigma_A\in \mathcal{S}(A)} \bra{\Omega}_{AA^*}\sqrt{\sigma_A}\hat{\rho}_{AA^*}\sqrt{\sigma_A}\ket{\Omega}_{AA^*}
\label{eq:i122-rhoaa}
\end{align}
\end{cor}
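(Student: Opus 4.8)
The plan is to obtain all three lines of the corollary by evaluating Proposition~\ref{prop:i12-omega} at three specific choices of $\sigma_A$, after first identifying the left-most quantity in each line with $I_{1/2}^{\downarrow}(\rho_{AB}\|\sigma_A)$ for that choice. In each line the first equality is immediate from a definition or identity already recorded in Section~\ref{ssec:entropies}: line~\eqref{eq:h12-rhoaa} rearranges the last expression in~\eqref{eq:h-alpha} with $\alpha=1/2$, line~\eqref{eq:i121-rhoaa} is the definition~\eqref{eq:prmi1} of the singly minimized PRMI, and line~\eqref{eq:i122-rhoaa} is the definition~\eqref{eq:prmi2} of the doubly minimized PRMI. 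Thus the content that actually requires proof is the right-most equality in each line.

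For~\eqref{eq:h12-rhoaa} I would set $\sigma_A=1_A/d_A$. Since this operator has full support, $\supp(\rho_A)\cap\supp(1_A/d_A)=\supp(\rho_A)\neq\{0\}$, so $\rho_A\not\perp\sigma_A$ and Proposition~\ref{prop:i12-omega} applies; substituting $\sqrt{1_A/d_A}=1_A/\sqrt{d_A}$ yields the claimed expression. For~\eqref{eq:i121-rhoaa} I would set $\sigma_A=\rho_A$; here $\supp(\rho_A)\cap\supp(\rho_A)=\supp(\rho_A)\neq\{0\}$, so again $\rho_A\not\perp\sigma_A$ and the proposition gives the stated formula directly. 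Both of these are pure substitutions with no edge cases to worry about.

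Line~\eqref{eq:i122-rhoaa} is where the real work lies. Writing $f(\sigma_A)\coloneqq\bra{\Omega}_{AA^*}\sqrt{\sigma_A}\hat{\rho}_{AA^*}\sqrt{\sigma_A}\ket{\Omega}_{AA^*}$, Proposition~\ref{prop:i12-omega} gives $I_{1/2}^{\downarrow}(\rho_{AB}\|\sigma_A)=-\log f(\sigma_A)$ on the set $G\coloneqq\{\sigma_A\in\mathcal{S}(A):\rho_A\not\perp\sigma_A\}$, and $I_{1/2}^{\downarrow}(\rho_{AB}\|\sigma_A)=\infty$ off $G$. Because $\rho_A\in G$ already gives a finite value, the infimum in~\eqref{eq:prmi2} is unaffected by the complementary set, so $I_{1/2}^{\downarrow\downarrow}(A:B)_\rho=\inf_{\sigma_A\in G}(-\log f(\sigma_A))=-\log\sup_{\sigma_A\in G}f(\sigma_A)$, using that $t\mapsto-\log t$ is strictly decreasing. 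It then remains to replace $\sup_{\sigma_A\in G}f$ by $\max_{\sigma_A\in\mathcal{S}(A)}f$.

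The main obstacle is precisely this last replacement: a state $\sigma_A\notin G$, for which $I_{1/2}^{\downarrow}=\infty$, can nonetheless have $f(\sigma_A)>0$, so it is not a priori clear that enlarging the feasible set from $G$ to all of $\mathcal{S}(A)$ leaves the supremum unchanged. I would settle this by a continuity/perturbation argument. Since $\sigma_A\mapsto\sqrt{\sigma_A}$ is continuous on the positive semidefinite cone, $f$ is continuous on the compact set $\mathcal{S}(A)$ and attains its maximum at some $\sigma_A^\star$. Given such a maximizer, set $\sigma_\epsilon\coloneqq(1-\epsilon)\sigma_A^\star+\epsilon\,\rho_A$ for $\epsilon\in(0,1]$; this is a state with $\supp(\sigma_\epsilon)\supseteq\supp(\rho_A)$, hence $\sigma_\epsilon\in G$, and $f(\sigma_\epsilon)\to f(\sigma_A^\star)$ as $\epsilon\to 0$ by continuity. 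Therefore $\sup_{\sigma_A\in G}f\geq f(\sigma_A^\star)=\max_{\sigma_A\in\mathcal{S}(A)}f$, while the reverse inequality is trivial, so $\sup_{\sigma_A\in G}f=\max_{\sigma_A\in\mathcal{S}(A)}f$; this establishes~\eqref{eq:i122-rhoaa}. Finally, positivity of this maximum (needed so that the outer $-\log$ is finite) follows from $f(\rho_A)=e^{-I_{1/2}^{\uparrow\downarrow}(A:B)_\rho}>0$, which is guaranteed by line~\eqref{eq:i121-rhoaa}.
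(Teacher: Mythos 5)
Your proof is correct and takes essentially the paper's route: the paper derives this corollary precisely by evaluating Proposition~\ref{prop:i12-omega} at $\sigma_A=1_A/d_A$, at $\sigma_A=\rho_A$, and under the minimization over $\sigma_A\in\mathcal{S}(A)$, which is exactly your plan, and your perturbation argument makes explicit a step the paper glosses over --- namely that states with $\rho_A\perp\sigma_A$, for which $I_{1/2}^{\downarrow}(\rho_{AB}\|\sigma_A)=\infty$ even though your $f(\sigma_A)$ may be strictly positive, cannot raise the supremum of $f$.

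One residual technicality: your continuity argument identifies $\inf_{\sigma_A\in\mathcal{S}(A)}I_{1/2}^{\downarrow}(\rho_{AB}\|\sigma_A)$ with $-\log\max_{\sigma_A\in\mathcal{S}(A)}f(\sigma_A)$, but it does not establish that this infimum is \emph{attained}, which the symbol ``$\min$'' in~\eqref{eq:i122-rhoaa} asserts; attainment is not automatic from compactness because $\sigma_A\mapsto I_{1/2}^{\downarrow}(\rho_{AB}\|\sigma_A)$ fails to be lower semicontinuous at states outside your set $G$. It does hold, by a compression argument: writing $\Pi$ for the projection onto $\supp(\rho_A)$, one has $\hat{\rho}_{AA^*}=(\Pi_A\otimes\Pi^*_{A^*})\hat{\rho}_{AA^*}(\Pi_A\otimes\Pi^*_{A^*})$, and hence $f(\sigma_A)=\bra{\Omega}_{AA^*}T\hat{\rho}_{AA^*}T\ket{\Omega}_{AA^*}$ with $T\coloneqq\Pi\sqrt{\sigma_A}\Pi$; whenever $T\neq 0$, the state $\tilde{\sigma}_A\coloneqq T^2/\tr[T^2]$ satisfies $\tilde{\sigma}_A\ll\rho_A$ and $f(\tilde{\sigma}_A)=f(\sigma_A)/\tr[T^2]\geq f(\sigma_A)$, so some maximizer of $f$ lies in $G$ and the minimum is attained there. (This is also the content of Proposition~\ref{prop:optimizers}~(a), which the paper imports from prior work rather than proving in the corollary itself.)
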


\begin{rem}[Information content in $\hat{\rho}_{AA^*}$]
Corollary~\ref{cor:h12-i12} shows how three different Petz R\'enyi information measures of order 1/2 can be expressed as a function of $\hat{\rho}_{AA^*}$.
In particular, all three quantities can be expressed as $-\log \bra{e_0}\hat{\rho}_{AA^*}\ket{e_0}$ for a suitable choice of the unit vector $\ket{e_0}\in AA^*:$
\begin{itemize}
\item 
By~\eqref{eq:h12-rhoaa}, $\log d_A-H_{1/2}^{\uparrow}(A|B)_\rho=-\log \bra{e_0}\hat{\rho}_{AA^*}\ket{e_0}$ 
if $\ket{e_0}_{AA^*}$ is defined as the canonical purification of the maximally mixed state $1_A/d_A$, 
i.e., $\ket{e_0}_{AA^*}\coloneqq \sqrt{1_A/d_A}\ket{\Omega}_{AA^*}$.
\item 
By~\eqref{eq:i121-rhoaa}, $I_{1/2}^{\uparrow\downarrow}(A:B)_\rho=-\log \bra{e_0}\hat{\rho}_{AA^*}\ket{e_0}$ 
if $\ket{e_0}_{AA^*}$ is defined as the canonical purification of $\rho_A$, 
i.e., $\ket{e_0}_{AA^*}\coloneqq \sqrt{\rho_A}\ket{\Omega}_{AA^*}$.
\item 
By~\eqref{eq:i122-rhoaa}, $I_{1/2}^{\downarrow\downarrow}(A:B)_\rho=-\log \bra{e_0}\hat{\rho}_{AA^*}\ket{e_0}$ 
if $\ket{e_0}_{AA^*}$ is defined as the canonical purification of an optimizer $\sigma_A$, 
i.e., $\ket{e_0}_{AA^*}\coloneqq \sqrt{\sigma_A}\ket{\Omega}_{AA^*}$.
\end{itemize}
Consequently, each of these three quantities is determined by a single diagonal matrix element of $\hat{\rho}_{AA^*}$ --- if represented in a suitable orthonormal basis for $AA^*$. 
More explicitly, if $\hat{\rho}_{AA^*}$ is represented in an orthonormal basis whose $0$'th element is the respective $\ket{e_0}_{AA^*}$, then the $(0,0)$ entry of the matrix representation of $\hat{\rho}_{AA^*}$ determines~\eqref{eq:h12-rhoaa},~\eqref{eq:i121-rhoaa}, and~\eqref{eq:i122-rhoaa}, respectively. 
Remarkably, the choice of the orthonormal basis has a significant impact on the type of information measure that arises: 
For the first choice of $\ket{e_0}_{AA^*}$, the resulting quantity corresponds to a \emph{conditional entropy} of $A$ given $B$, whereas for the second and third choices, the resulting quantity is a \emph{mutual information} between $A$ and $B$. 
Clearly, the conditional entropy $H_{1/2}^\uparrow(A|B)_\rho$ is not monotonic under the inclusion of subsystems in the first argument. 
Thus, Corollary~\ref{cor:h12-i12} implies that $\hat{\rho}_{AA^*}$ contains information about two qualitatively distinct types of information measures, only one of which is a correlation measure.  
This observation may help explain why certain functions of $\hat{\rho}_{AA^*}$, such as the R\'enyi reflected entropy $S_R^{(n)}(A:B)_\rho$ of order $n\in (0,2)$~\cite{hayden2023reflected}, fail to be correlation measures. 
Whether this reasoning can be formalized into a precise quantitative argument remains an open question for future research.
\end{rem}

\subsection{Min-reflected entropy = doubly minimized Petz R\'enyi mutual information of order 1/2}\label{ssec:srinf-i12}

\begin{thm}\label{thm:srinf-i12}
Let $\rho_{AB}\in \mathcal{S}(AB)$. Then $S_R^{(\infty)}(A:B)_\rho= I_{1/2}^{\downarrow\downarrow}(A:B)_{\rho}$.
\end{thm}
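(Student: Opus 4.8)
The plan is to rewrite both sides of the identity as $-\log$ of a maximization involving the single operator $\hat{\rho}_{AA^*}$, and then to show the two maximizations agree. For the left-hand side, the definition $S_R^{(\infty)}(A:B)_\rho = H_\infty(AA^*)_{\hat{\rho}}$ together with the min-entropy formula~\eqref{eq:h-inf} gives
\begin{equation*}
S_R^{(\infty)}(A:B)_\rho = -\log\max_{\ket{\phi}\in AA^*:\,\brak{\phi}=1}\bra{\phi}\hat{\rho}_{AA^*}\ket{\phi}.
\end{equation*}
For the right-hand side I would invoke~\eqref{eq:i122-rhoaa} of Corollary~\ref{cor:h12-i12}, which already expresses $I_{1/2}^{\downarrow\downarrow}(A:B)_\rho$ as $-\log$ of a maximization of $\bra{\Omega}\sqrt{\sigma_A}\hat{\rho}_{AA^*}\sqrt{\sigma_A}\ket{\Omega}$ over states $\sigma_A$. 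Setting $\ket{\sigma}_{AA^*}\coloneqq\sqrt{\sigma_A}\ket{\Omega}_{AA^*}$, a one-line computation gives $\brak{\sigma}=\tr[\sigma_A]=1$, so the right-hand maximization runs over the unit vectors of the special form $\sqrt{\sigma_A}\ket{\Omega}$, i.e.\ over canonical purifications of states. As these form a subset of all unit vectors, the inequality $I_{1/2}^{\downarrow\downarrow}\geq S_R^{(\infty)}$ is immediate, and the whole content of the theorem lies in the reverse inequality: the global maximum of $\bra{\phi}\hat{\rho}_{AA^*}\ket{\phi}$ is attained on a canonical purification of a state.

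To attack the reverse inequality I would pass to the operator picture. Every $\ket{\phi}\in AA^*$ can be written as $\ket{\phi}=(C\otimes 1_{A^*})\ket{\Omega}$ for a unique $C\in\mathcal{L}(A)$, with $\brak{\phi}=\lVert C\rVert_2^2$; moreover, pulling the partial trace $\tr_{BB^*}$ through $\proj{\hat{\rho}}$ yields the clean identity
\begin{equation*}
\bra{(C\otimes 1)\Omega}\hat{\rho}_{AA^*}\ket{(C\otimes 1)\Omega}=\bigl\lVert\tr_A[C^\dagger\sqrt{\rho_{AB}}]\bigr\rVert_2^2 .
\end{equation*}
Hence, writing $X\coloneqq C^\dagger$, the left-hand maximum equals $\max_{\lVert X\rVert_2=1}\lVert\tr_A[X\sqrt{\rho_{AB}}]\rVert_2^2$, while the right-hand maximum is the same quantity with the domain restricted to $X\geq 0$. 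The theorem thus reduces to showing that this maximum is attained at some positive semidefinite $X$.

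The key structural input is the $\mathcal{CF}$-invariance $\hat{\rho}_{AA^*}=\swapf_{AA^*}\hat{\rho}_{AA^*}^*\swapf_{AA^*}$, which I would verify directly from~\eqref{eq:purification-abab} at the level of matrix elements. It says the antiunitary involution $\Theta\coloneqq\mathcal{C}\,\swapf_{AA^*}$ commutes with $\hat{\rho}_{AA^*}$, so every eigenspace --- in particular the top one --- is $\Theta$-invariant and therefore contains a $\Theta$-fixed unit vector $\ket{\phi_0}$; this real-structure step is where the techniques of Appendix~\ref{app:invariance} enter. Since $\Theta$ acts in the operator picture as $C\mapsto C^\dagger$, being $\Theta$-fixed means the associated $X_0$ is Hermitian. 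I would then upgrade ``Hermitian'' to ``positive semidefinite'': diagonalizing $X_0=\sum_i x_i\proj{u_i}$ with $x_i\in\mathbb{R}$, one has $\tr_A[X_0\sqrt{\rho_{AB}}]=\sum_i x_i P_i$, where each $P_i\coloneqq(\bra{u_i}_A\otimes 1_B)\sqrt{\rho_{AB}}(\ket{u_i}_A\otimes 1_B)$ is positive semidefinite on $B$, so the objective is $\sum_{i,j}x_i x_j\tr[P_iP_j]$ with all cross-terms $\tr[P_iP_j]\geq 0$. Replacing $X_0$ by $\lvert X_0\rvert=\sum_i\lvert x_i\rvert\proj{u_i}\geq 0$, which has the same HS-norm and the same $P_i$, can only increase the objective because $\lvert x_i\rvert\lvert x_j\rvert\geq x_i x_j$. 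Thus the maximum is attained at $\lvert X_0\rvert$, giving an optimal state $\sigma_A=\lvert X_0\rvert^2$ and the reverse inequality, and with it the equality via the monotonicity of $-\log$.

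I expect the main obstacle to be precisely this two-step reduction of the optimization domain. The passage from arbitrary vectors to Hermitian $C$ is not elementary: it rests on the antiunitary symmetry ($\mathcal{CF}$-invariance) of $\hat{\rho}_{AA^*}$ and the fact that a $\Theta$-invariant subspace admits a $\Theta$-fixed vector, which is exactly the machinery of Appendix~\ref{app:invariance}. The subsequent Hermitian-to-positive-semidefinite step hinges on the positivity $\tr[P_iP_j]\geq 0$, which is what makes the absolute-value substitution monotone; although short, this is the observation that genuinely links the spectral (min-entropy) formulation of $S_R^{(\infty)}$ to the variational (doubly minimized) formulation of $I_{1/2}^{\downarrow\downarrow}$.
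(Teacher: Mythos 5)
Your proposal is correct and follows essentially the same route as the paper's proof: both sides are reduced to maximizations of $\bra{\phi}\hat{\rho}_{AA^*}\ket{\phi}$ (via Corollary~\ref{cor:h12-i12}), the $\mathcal{CF}$-invariance of $\hat{\rho}_{AA^*}$ is used to obtain a symmetric (Hermitian) optimizer, and the positivity of the cross terms $\tr[P_iP_j]\geq 0$ is exactly the paper's triangle-inequality step in~\eqref{eq:s56} that upgrades it to a positive semidefinite one. The only difference is presentational --- you phrase the argument in the operator (Choi) picture $\ket{\phi}=(C\otimes 1)\ket{\Omega}$ with $\Theta:C\mapsto C^\dagger$, while the paper works directly with $\mathcal{CF}$-invariant unit vectors via Propositions~\ref{prop:inv-self-adjoint} and~\ref{prop:canonical-invariant} --- but the underlying ideas coincide.
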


In order to prove this theorem, we will employ techniques for $\mathcal{CF}$-invariant objects, which are previously established in Appendix~\ref{app:invariance}. 
The proof of Theorem~\ref{thm:srinf-i12} is given in Appendix~\ref{app:proof-thm}, and we comment on the role of $\mathcal{CF}$-invariance in this proof in Appendix~\ref{app:thm-remarks}. 
For completeness, we provide results on how optimizers of the defining expression for $S_R^{(\infty)}(A:B)_\rho$ are related to optimizers of the defining expression for $I_{1/2}^{\downarrow\downarrow}(A:B)_\rho$ in Appendix~\ref{app:thm-relations}.

Since the $(m,\infty)$-R\'enyi reflected entropy is generally related to the $\infty$-R\'enyi reflected entropy of another state that depends on $m$, see~\eqref{eq:sr-mn-1n}, we obtain the following corollary of Theorem~\ref{thm:srinf-i12}. 
\begin{cor}[R\'enyi reflected entropies of order $n=\infty$]
Let $\rho_{AB}\in \mathcal{S}(AB)$.
Let $m\in [0,\infty)$ and let $\sigma^{(m)}_{AB}\coloneqq \rho_{AB}^{m}/\tr[\rho_{AB}^{m}]$. 
Then
$S_R^{(m,\infty)}(A:B)_\rho 
=S_R^{(\infty)}(A:B)_{\sigma^{(m)}}
=I_{1/2}^{\downarrow\downarrow}(A:B)_{\sigma^{(m)}}$.
\end{cor}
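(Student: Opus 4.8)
The plan is to derive this corollary directly from two results already available in the excerpt, so that no new machinery is required. First I would read off the first equality $S_R^{(m,\infty)}(A:B)_\rho = S_R^{(\infty)}(A:B)_{\sigma^{(m)}}$ as the special case $n=\infty$ of the general identity \eqref{eq:sr-mn-1n}, which rewrites the two-parameter R\'enyi reflected entropy $S_R^{(m,n)}$ of $\rho_{AB}$ as the one-parameter quantity $S_R^{(n)}$ evaluated at $\sigma^{(m)}_{AB}=\rho_{AB}^m/\tr[\rho_{AB}^m]$; since \eqref{eq:sr-mn-1n} is stated for all $n\in[0,\infty]$, the value $n=\infty$ is included and nothing extra is needed. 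Then I would obtain the second equality $S_R^{(\infty)}(A:B)_{\sigma^{(m)}} = I_{1/2}^{\downarrow\downarrow}(A:B)_{\sigma^{(m)}}$ by applying Theorem~\ref{thm:srinf-i12} --- which I may assume --- to the state $\sigma^{(m)}_{AB}$ in place of $\rho_{AB}$. Chaining the two equalities yields the claim.

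The only points that require a line of justification are bookkeeping. I would check that $\sigma^{(m)}_{AB}$ is a bona fide element of $\mathcal{S}(AB)$ for every $m\in[0,\infty)$, so that both \eqref{eq:sr-mn-1n} and Theorem~\ref{thm:srinf-i12} genuinely apply to it: $\rho_{AB}^m$ is positive semidefinite (the generalized power is taken on $\supp(\rho_{AB})$), and $\tr[\rho_{AB}^m]>0$ because $\rho_{AB}\neq 0$ --- in particular at $m=0$ the operator $\rho_{AB}^0$ is the projector onto $\supp(\rho_{AB})$, whose trace equals $\rank(\rho_{AB})\geq 1$ --- so the normalization is well defined and $\sigma^{(m)}_{AB}\geq 0$ with unit trace.

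Consequently I expect no genuine obstacle at the level of the corollary itself: all of its mathematical content is inherited from Theorem~\ref{thm:srinf-i12}, which is the statement that actually carries the work. For completeness I would note where that work sits, since the corollary stands or falls with it: combining the definition \eqref{eq:def-sr} of $S_R^{(\infty)}$ with the min-entropy formula \eqref{eq:h-inf} turns the left-hand quantity into $-\log\max\spec(\hat{\rho}_{AA^*})$, while \eqref{eq:i122-rhoaa} of Corollary~\ref{cor:h12-i12} turns the right-hand quantity into $-\log\max_{\sigma_A\in\mathcal{S}(A)}\bra{\Omega}\sqrt{\sigma_A}\,\hat{\rho}_{AA^*}\sqrt{\sigma_A}\ket{\Omega}$; since $\sqrt{\sigma_A}\ket{\Omega}_{AA^*}$ is a unit vector, the latter is the same Rayleigh quotient restricted to canonical-purification vectors, so one inequality is free and the hard direction is to show that the top eigenvalue of $\hat{\rho}_{AA^*}$ is attained on such a vector. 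That is precisely where $\mathcal{CF}$-invariance (together with the positive semidefiniteness of $\hat{\rho}_{AA^*}$) is needed, to upgrade a Hermitian top eigen-operator to a positive semidefinite one. But for the corollary as stated this difficulty is already absorbed into Theorem~\ref{thm:srinf-i12}, and the corollary is a two-step specialization.
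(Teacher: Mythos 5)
Your proposal matches the paper's own argument exactly: the first equality is the $n=\infty$ case of~\eqref{eq:sr-mn-1n}, and the second is Theorem~\ref{thm:srinf-i12} applied to the state $\sigma^{(m)}_{AB}$, which is precisely how the paper presents this corollary. The extra bookkeeping (well-definedness of $\sigma^{(m)}_{AB}$, including the $m=0$ case via the support projector) is correct and harmless.
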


\section{Implications of main result}\label{sec:implications}

\subsection{Bounds on min-reflected entropy}\label{ssec:bounds}

Since various properties of $I_{1/2}^{\downarrow\downarrow}(A:B)_\rho$ are known~\cite{burri2024doublyminimizedpetzrenyi}, 
an obvious implication of our main result is that these properties carry over to the min-reflected entropy. 
In particular, this yields new bounds on the min-reflected entropy, as expressed in the following corollary of Theorem~\ref{thm:srinf-i12}. 
The proof of this corollary is given in Appendix~\ref{app:proof_cor_bounds}. 
\begin{cor}[Lower and upper bounds on min-reflected entropy]\label{cor:bounds}
Let $\rho_{AB}\in \mathcal{S}(AB)$. Then
\begin{align}
\frac{1}{2}S_R^{(2)}(A:B)_\rho 
\leq S_R^{(\infty)}(A:B)_\rho
=I_{1/2}^{\downarrow\downarrow}(A:B)_\rho
&\leq \min\{I_{1/2}^{\uparrow\downarrow}(A:B)_\rho ,\log d_A- H_{1/2}^\uparrow (A|B)_\rho \}
\label{eq:bounds-1}\\
&\leq I_{1/2}^{\uparrow\downarrow}(A:B)_\rho 
\leq I(A:B)_\rho\leq S_R(A:B)_\rho .
\label{eq:bounds-2}
\end{align}
\end{cor}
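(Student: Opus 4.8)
The plan is to verify the displayed chain from left to right, using Theorem~\ref{thm:srinf-i12} for the central equality $S_R^{(\infty)}(A:B)_\rho = I_{1/2}^{\downarrow\downarrow}(A:B)_\rho$ and otherwise assembling facts already collected in Section~\ref{sec:preliminaries}. All but one of the steps are essentially definitional; the genuinely substantive one is the leftmost inequality, so I would treat it first.

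For $\tfrac{1}{2}S_R^{(2)}(A:B)_\rho \leq S_R^{(\infty)}(A:B)_\rho$ I would reduce to an elementary inequality between R\'enyi entropies of the single state $\hat{\rho}_{AA^*}$. By~\eqref{eq:def-sr} with $m=1$ (so that $\hat{\rho}^{(1)}_{AA^*}=\hat{\rho}_{AA^*}$) we have $S_R^{(2)}(A:B)_\rho = H_2(AA^*)_{\hat{\rho}}$ and $S_R^{(\infty)}(A:B)_\rho = H_\infty(AA^*)_{\hat{\rho}}$, so the claim is equivalent to $\tfrac{1}{2}H_2(\sigma)\leq H_\infty(\sigma)$ for $\sigma = \hat{\rho}_{AA^*}$. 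Writing $\lambda_{\max}$ for the largest eigenvalue of $\sigma$ and using~\eqref{eq:h-inf}, this amounts to $\lambda_{\max}^2 \leq \tr[\sigma^2]$, i.e.\ $(\max_i \lambda_i)^2 \leq \sum_i \lambda_i^2$, which is immediate. It is worth stressing that the monotonicity of $\alpha\mapsto S_R^{(\alpha)}$ in property~(a) gives the \emph{reversed} comparison $S_R^{(2)}(A:B)_\rho \geq S_R^{(\infty)}(A:B)_\rho$; the factor $\tfrac12$ is therefore essential, and this is the only step that is not purely mechanical.

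For the upper bounds in~\eqref{eq:bounds-1} I would exploit that, by~\eqref{eq:prmi2}, $I_{1/2}^{\downarrow\downarrow}(A:B)_\rho = \inf_{\sigma_A\in\mathcal{S}(A)} I_{1/2}^{\downarrow}(\rho_{AB}\|\sigma_A)$, so evaluating the infimand at any fixed state $\sigma_A$ yields an upper bound. Taking $\sigma_A = \rho_A$ gives $I_{1/2}^{\downarrow\downarrow}(A:B)_\rho \leq I_{1/2}^{\downarrow}(\rho_{AB}\|\rho_A) = I_{1/2}^{\uparrow\downarrow}(A:B)_\rho$ by~\eqref{eq:prmi1}, while taking $\sigma_A = 1_A/d_A$ gives $I_{1/2}^{\downarrow\downarrow}(A:B)_\rho \leq I_{1/2}^{\downarrow}(\rho_{AB}\|1_A/d_A) = \log d_A - H_{1/2}^{\uparrow}(A|B)_\rho$ by~\eqref{eq:h-alpha}; together these give the bound by the minimum.

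For the chain~\eqref{eq:bounds-2}, the first inequality is just that a minimum is at most either of its arguments. For $I_{1/2}^{\uparrow\downarrow}(A:B)_\rho \leq I(A:B)_\rho$ I would invoke the standard monotonicity of the Petz divergence $D_\alpha$ in the order $\alpha$: for every $\tau_B$ we have $D_{1/2}(\rho_{AB}\|\rho_A\otimes\tau_B) \leq D_1(\rho_{AB}\|\rho_A\otimes\tau_B)$, and taking the infimum over $\tau_B$ on both sides preserves the inequality; since $I_1^{\uparrow\downarrow}(A:B)_\rho = I(A:B)_\rho$ by~\eqref{eq:i-divergence} and~\eqref{eq:prmi1}, the claim follows. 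The last inequality $I(A:B)_\rho \leq S_R(A:B)_\rho$ is already recorded as property~(g) in~\eqref{eq:i-leq-sr}. The main obstacle is thus the leftmost inequality, precisely because naive monotonicity in the R\'enyi order points the wrong way and one must pass through the $H_2$--$H_\infty$ comparison.
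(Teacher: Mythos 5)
Your proposal is correct and follows essentially the same route as the paper's proof: the $H_2$--$H_\infty$ Rényi entropy comparison for $\hat{\rho}_{AA^*}$ for the leftmost inequality, Theorem~\ref{thm:srinf-i12} for the equality, evaluation of the infimum in~\eqref{eq:prmi2} at $\sigma_A=\rho_A$ and $\sigma_A=1_A/d_A$ for~\eqref{eq:bounds-1}, monotonicity of the (singly minimized) Petz quantities in the Rényi order for~\eqref{eq:bounds-2}, and~\eqref{eq:i-leq-sr} for the last step. The only difference is that you spell out what the paper cites tersely (the eigenvalue argument $\lambda_{\max}^2\leq\tr[\sigma^2]$ and the derivation of PRMI monotonicity from pointwise monotonicity of $D_\alpha$), which is a matter of detail rather than method.
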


\subsection{Maximality of R\'enyi reflected entropy corresponds to maximal entanglement}\label{ssec:max}
The following corollary is a direct consequence of Corollary~\ref{cor:bounds}, as shown in Appendix~\ref{app:proof_cor_equivalence}.
\begin{cor}[Maximality of R\'enyi reflected entropy]\label{cor:equivalence}
Let $\rho_{AB}\in \mathcal{S}(AB),r_A\coloneqq \rank(\rho_A),r_B\coloneqq\rank(\rho_B)$.
Then, $S_R^{(\alpha)}(A:B)_\rho\leq 2\log r_A$ for all $\alpha\in [0,\infty]$.
Moreover, for any $\alpha\in (0,\infty]$, all of the following hold.
\begin{enumerate}[label=(\alph*)]
\item $S_R^{(\alpha)}(A:B)_\rho=2\log r_A$ iff
$I(A:B)_\rho=2\log r_A$.
\item $S_R^{(\alpha)}(A:B)_\rho=2\log \min\{r_A,r_B\}$ iff
$I(A:B)_\rho=2\log \min\{r_A,r_B\}$.
\item $S_R^{(\alpha)}(A:B)_\rho=2\log \min\{d_A,d_B\}$ iff
$I(A:B)_\rho=2\log \min\{d_A,d_B\}$.
\end{enumerate} 
\end{cor}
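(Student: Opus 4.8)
The plan is to route every equivalence through a single observation: $S_R^{(\alpha)}(A:B)_\rho=H_\alpha(AA^*)_{\hat{\rho}}$ is the R\'enyi entropy of the \emph{fixed} state $\hat{\rho}_{AA^*}$, so these quantities differ only in the order $\alpha$ at which one probes the spectrum of $\hat{\rho}_{AA^*}$. First I would record the upper bound: by monotonicity in the order (property~(a)) and the bound (property~(d)), $S_R^{(\alpha)}(A:B)_\rho\le S_R^{(0)}(A:B)_\rho\le 2\log\min\{r_A,r_B\}\le 2\log r_A$, which already gives the first assertion. Applying the same bound at $\alpha=0$ gives $\log\rank(\hat{\rho}_{AA^*})=S_R^{(0)}(A:B)_\rho\le 2\log r_A$, i.e.\ $\rank(\hat{\rho}_{AA^*})\le r_A^2$.

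The heart of the argument is a flat-spectrum collapse. Suppose $S_R^{(\alpha)}(A:B)_\rho=2\log r_A$ for a single $\alpha\in(0,\infty]$. Then $H_\alpha(AA^*)_{\hat{\rho}}=2\log r_A$ meets its ceiling $\log\rank(\hat{\rho}_{AA^*})\le 2\log r_A$, so equality holds in the R\'enyi-monotonicity inequality $H_\alpha\le H_0$. The equality case forces $\hat{\rho}_{AA^*}$ to be maximally mixed on its support and simultaneously pins $\rank(\hat{\rho}_{AA^*})=r_A^2$. A maximally mixed state of rank $r_A^2$ has \emph{all} its R\'enyi entropies equal to $\log r_A^2=2\log r_A$, hence $S_R^{(\beta)}(A:B)_\rho=2\log r_A$ for every $\beta\in[0,\infty]$; in particular $S_R^{(\infty)}(A:B)_\rho=2\log r_A$.

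It remains to tie this to the mutual information, and here Corollary~\ref{cor:bounds} does the work through two one-line squeezes. On one side, $S_R^{(\infty)}(A:B)_\rho\le I(A:B)_\rho$ (the chain in Corollary~\ref{cor:bounds}) together with $I(A:B)_\rho\le 2\min\{H(A)_\rho,H(B)_\rho\}\le 2\log r_A$ (property~(g)) shows that $S_R^{(\infty)}(A:B)_\rho=2\log r_A$ forces $I(A:B)_\rho=2\log r_A$. On the other side, $I(A:B)_\rho\le S_R(A:B)_\rho=S_R^{(1)}(A:B)_\rho\le 2\log r_A$ shows that $I(A:B)_\rho=2\log r_A$ forces $S_R^{(1)}(A:B)_\rho=2\log r_A$, which re-enters the flat-spectrum collapse at $\alpha=1$. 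Threading these together shows that, for every $\alpha\in(0,\infty]$, the three conditions $S_R^{(\alpha)}(A:B)_\rho=2\log r_A$, maximal mixedness of $\hat{\rho}_{AA^*}$ of rank $r_A^2$, and $I(A:B)_\rho=2\log r_A$ are all equivalent, which is part~(a). Part~(b) follows by symmetry (property~(c) and the symmetry of $I$): assuming without loss of generality $r_A\le r_B$ makes $2\log\min\{r_A,r_B\}=2\log r_A$ and reduces the claim to~(a). Part~(c) follows from~(b) by a value-squeeze: since $S_R^{(\alpha)}(A:B)_\rho\le 2\log\min\{r_A,r_B\}\le 2\log\min\{d_A,d_B\}$, and likewise for $I(A:B)_\rho$, attaining $2\log\min\{d_A,d_B\}$ forces $\min\{r_A,r_B\}=\min\{d_A,d_B\}$, collapsing the target value onto the one in~(b).

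I expect the main obstacle to be the conceptual gap between $I(A:B)_\rho$, a von Neumann-level (order-$1$) quantity, and $S_R^{(\infty)}(A:B)_\rho$, a min-entropy (order-$\infty$) quantity: a priori, control of the ``average'' eigenvalue of $\hat{\rho}_{AA^*}$ says nothing about its largest eigenvalue. The flat-spectrum collapse is exactly what bridges this gap, and the only nontrivial external inputs it requires are the single inequality $S_R^{(\infty)}\le I$ supplied by Corollary~\ref{cor:bounds} and the equality condition in the monotonicity of R\'enyi entropies; everything else is a squeeze between bounds already listed as properties~(a),~(c),~(d), and~(g).
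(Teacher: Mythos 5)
Your proof is correct, and while its skeleton (reduce everything to part~(a), prove~(a) by a flat-spectrum collapse of $\hat{\rho}_{AA^*}$, squeeze with the known bounds) matches the paper's, your backward implication takes a genuinely different and more elementary route. The paper proves ``$I=2\log r_A \Rightarrow S_R^{(\alpha)}=2\log r_A$'' by invoking an external result from the doubly minimized PRMI paper (maximal $I(A:B)_\rho$ implies maximal $I_{1/2}^{\downarrow\downarrow}(A:B)_\rho$), then converting to $S_R^{(\infty)}$ via Theorem~\ref{thm:srinf-i12}, and finally using monotonicity in $\alpha$. You instead use the classical inequality $I(A:B)_\rho\le S_R(A:B)_\rho$ from~\eqref{eq:i-leq-sr} to force $S_R^{(1)}(A:B)_\rho=2\log r_A$, and then re-enter the flat-spectrum collapse at $\alpha=1$ to propagate maximality to every order. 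This makes your backward direction independent of Theorem~\ref{thm:srinf-i12} and of the cited maximality-transfer result; only your forward direction needs the single input $S_R^{(\infty)}\le I$ from Corollary~\ref{cor:bounds} (which is where the main theorem enters, and which cannot be replaced by monotonicity alone, since $I\le S_R^{(1)}$ points the wrong way). A pleasant byproduct of your route: combined with Theorem~\ref{thm:srinf-i12}, it re-derives the statement that maximal mutual information forces maximal $I_{1/2}^{\downarrow\downarrow}$, rather than consuming it as an input. Your handling of~(b) by symmetry and of~(c) by the rank/dimension value-squeeze ($\min\{r_A,r_B\}=\min\{d_A,d_B\}$ is forced when either quantity attains $2\log\min\{d_A,d_B\}$) is also correct, and is in fact more explicit than the paper's one-line remark that~(b) and~(c) follow from~(a).
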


Previously, only arguments for the direction ``$\Leftarrow$'' of the equivalence in~(c) have been presented in the literature. 
For example, for $\alpha=1$, the direction ``$\Leftarrow$'' holds trivially due to the bound $I(A:B)_\rho \leq S_R(A:B)\leq 2\log \min\{d_A,d_B\}$~\cite{dutta2019canonical}.
In addition, it is straightforward to show that if $\rho_{AB}$ is a pure, maximally entangled state, then the R\'enyi reflected entropy is maximal~\cite{dutta2019canonical}, which proves ``$\Leftarrow$'' for a special case. 
General arguments for the direction ``$\Rightarrow$'' appear to be absent in the existing literature. 
The equivalence in~(c) fills this gap, 
providing a rigorous justification for interpreting the maximality of the R\'enyi reflected entropy as maximal entanglement between $A$ and $B$.

\begin{rem}[Maximality of mutual information corresponds to maximal entanglement] \label{rem:max-mi}
States $\rho_{AB}\in \mathcal{S}(AB)$ that achieve the maximal value of the mutual information have a well-known operational meaning. 
Suppose $d_A\leq d_B$. Then, 
$I(A:B)_\rho = 2\log d_A$ iff there exist $\mathcal{D}_{B\rightarrow A^*}\in\CPTP(B,A^*)$ and $U_A\in \mathcal{U}(A)$ such that $\mathcal{D}_{B\rightarrow A^*}(\rho_{AB})=\proj{\phi}_{AA^*}$ 
and $\ket{\phi}_{AA^*}=U_A d_A^{-1/2}\sum_{j\in [d_A]}\ket{j}_A\otimes\ket{j}_{A^*}$~\cite{schumacher1996quantum}. 
This equivalence is robust and systematic constructions for $\mathcal{D}_{B\rightarrow A^*}$ have been proposed~\cite{schumacher1996quantum,schumacher2001approximate}. 
Since $\proj{\phi}_{AA^*}$ is a pure, maximally entangled state, these results establish an equivalence between the maximality of the mutual information and maximal entanglement. 
\end{rem}

By combining Corollary~\ref{cor:equivalence}~(c) with Remark~\ref{rem:max-mi}, we can answer Question~1 in the affirmative.
\begin{description}
    \item[Answer to Question~1] The maximality of any R\'enyi reflected entropy of order $\alpha\in (0,\infty]$ corresponds to maximal entanglement due to Corollary~\ref{cor:equivalence}~(c) and Remark~\ref{rem:max-mi}.
\end{description}

\subsection{Comparison of (R\'enyi) reflected entropy with entanglement of purification}
\label{ssec:comparison_eop}

In holography, the entanglement of purification is often compared to half the reflected entropy, as both measures have been proposed as candidates for the boundary quantity dual to the area of the entanglement wedge cross section (cf. Section~\ref{sec:introduction}). 
Given such comparisons between $E_P(A:B)_\rho$ and $\frac{1}{2}S_R(A:B)_\rho$, it is worth emphasizing that these two measures can exhibit drastically different behavior for general quantum states.
In particular, the entanglement of purification does \emph{not} satisfy an analogous equivalence to that stated in Corollary~\ref{cor:equivalence}~(c) for the R\'enyi reflected entropy. 
Although it is true that maximal mutual information implies maximal entanglement of purification (i.e., if $I(A:B)_\rho=2\log \min\{d_A,d_B\}$, then $E_P(A:B)_\rho=\log \min\{d_A,d_B\}$ due to~\eqref{eq:eop-bounds}), 
the converse is false. 
A simple counterexample is given in Appendix~\ref{app:eop}.

In recent work~\cite{akers2023entanglement} (see also~\cite{couch2024possibility,chen2025renyientanglementpurificationhalf}), the following upper and lower bounds on the entanglement of purification of $\rho_{AB}\in \mathcal{S}(AB)$ have been derived:
\begin{align}
\frac{1}{2}S_R^{(2)}(A:B)_\rho\leq E_P(A:B)_\rho \leq S_R(A:B)_\rho .
\end{align}
Interestingly, we have shown in Corollary~\ref{cor:bounds} that the min-reflected entropy has the same upper and lower bounds. 
However, the entanglement of purification cannot generally be inserted anywhere in the chain of inequalities in Corollary~\ref{cor:bounds}, as shown in Appendix~\ref{app:eop2}.

\subsection{Operational interpretation of min-reflected entropy}\label{ssec:operational}
Since the doubly minimized PRMI of order $1/2$ has an operational interpretation~\cite{burri2024doublyminimizedpetzrenyi}, not only properties of the doubly minimized PRMI of order $1/2$ but also this operational interpretation can be carried over to the min-reflected entropy due to our main result. 

Furthermore, it is known that the singly minimized PRMI of order $1/2$ quantifies the entanglement fidelity of the Petz decoder (i.e., the decoder induced by the Petz recovery map) for one-shot entanglement transmission~\cite{burri2024entanglementfidelitypetzdecoder}. 
Since the doubly minimized PRMI of order $1/2$ is a natural lower bound for the singly minimized PRMI of order $1/2$, our main result thus leads to a relation between the min-reflected entropy and Petz recovery. 

Building on these considerations, we provide the following answer to Question~2.
\begin{description}
    \item[Answer to Question~2] The min-reflected entropy has an operational interpretation (if certain minor technical conditions~\cite[Corollary~11, Remark~6]{burri2024doublyminimizedpetzrenyi} on $\rho_{AB}$ are fulfilled): 
    It equals the forward $\beta$-cutoff rate as $\beta \rightarrow -1$ from above in the hypothesis testing problem whose null hypothesis is given by $\rho_{AB}^{\otimes n}$ and whose alternative hypothesis consists of permutation invariant product states of the form $\sigma_{A^n}\otimes\tau_{B^n}$ (or product states of the form $\sigma_A^{\otimes n}\otimes \tau_B^{\otimes n}$) in the asymptotic limit as $n\rightarrow\infty$~\cite{burri2024doublyminimizedpetzrenyi}. 
    More formally: For any $\rho_{AB}\in \mathcal{S}(AB)$ such that $I_{1/2}^{\downarrow\downarrow}(A:B)_\rho\neq I(A:B)_\rho$ holds
    \begin{align}\label{eq:r0f}
    \lim_{\beta \rightarrow -1^+}R_0^{(f)}(\beta)_\rho
    =I_{1/2}^{\downarrow\downarrow}(A:B)_\rho
    =S_R^{(\infty)}(A:B)_\rho,
    \end{align}
    where $R_0^{(f)}(\beta)_\rho$ denotes the forward $\beta$-cutoff rate. 
    The first equality in~\eqref{eq:r0f} follows from~\cite[Corollary~11]{burri2024doublyminimizedpetzrenyi}, and the second equality follows from our main result. 
    Thus, the min-reflected entropy is linked to a specific type of correlation detection task. 
    
    Furthermore, there is a deeper relation between the min-reflected entropy and Petz recovery: 
    For any unit vector $\ket{\rho}_{RA}\in RA$ and any $\mathcal{N}_{A\rightarrow B}\in \CPTP(A,B)$
    \begin{align}
    \log F_e (\rho_A,\mathcal{D}_{B\rightarrow A}^{\mathrm{P}}\circ\mathcal{N}_{A\rightarrow B})
    &=- I_{1/2}^{\uparrow\downarrow}(R:E)_{\mathcal{N}_{A\rightarrow E}^c(\proj{\rho}_{RA})}
    \label{eq:log-fe1}\\
    &\leq -I_{1/2}^{\downarrow\downarrow}(R:E)_{\mathcal{N}_{A\rightarrow E}^c(\proj{\rho}_{RA})}
    =-S_R^{(\infty)}(R:E)_{\mathcal{N}_{A\rightarrow E}^c(\proj{\rho}_{RA})}
    \label{eq:log-fe2}
\end{align}
	where 
	$F_e(\rho_A,\mathcal{D}_{B\rightarrow A}^{\mathrm{P}}\circ\mathcal{N}_{A\rightarrow B})$ denotes the entanglement fidelity of the Petz decoder $\mathcal{D}_{B\rightarrow A}^{\mathrm{P}}$ for one-shot entanglement transmission of $\rho_{A}$ over the noisy channel $\mathcal{N}_{A\rightarrow B}$, 
	and $\mathcal{N}_{A\rightarrow E}^c$ denotes a complementary channel to $\mathcal{N}_{A\rightarrow B}$~\cite{burri2024entanglementfidelitypetzdecoder}. 
	\eqref{eq:log-fe1} follows from~\cite{burri2024entanglementfidelitypetzdecoder}, and the equality in~\eqref{eq:log-fe2} follows from our main result. 
	Thus, the entanglement fidelity of the Petz decoder is upper bounded in terms of the min-reflected entropy associated with a complementary channel. 
\end{description}
The second part of this answer offers insight into a question raised in previous work~\cite{akers2022page} regarding the existence of a deeper connection between information recovery via the Petz map and the reflected entropy. 
It demonstrates that a modified variant of this conjecture holds, where the reflected entropy is replaced by the min-reflected entropy.

\subsection{Comparison with classical case}\label{ssec:classical}
Our main result yields two new methods for computing the doubly minimized PRMI of order $1/2$ for any $\rho_{AB}\in \mathcal{S}(AB)$, as indicated by the right-hand sides of the following two lines.
\begin{align}
I_{1/2}^{\downarrow\downarrow}(A:B)_\rho
=S_R^{(\infty)}(A:B)_\rho
&=H_{\infty}(AA^*)_{\hat{\rho}}
=-\log \max ( \spec(\hat{\rho}_{AA^*}))
\label{eq:i12-maxspec}\\
&=\lim_{n\rightarrow\infty}S_R^{(n)}(A:B)_\rho
=\lim_{n\rightarrow\infty}H_n(AA^*)_{\hat{\rho}}
\label{eq:i12-limit}
\end{align}
Here, $\hat{\rho}_{AA^*}\coloneqq \tr_{BB^*}[\proj{\hat{\rho}}_{ABA^*B^*}]$ denotes the marginal state on $AA^*$ of the canonical purification $\ket{\hat{\rho}}_{ABA^*B^*}$ of $\rho_{AB}$. 
Note that the expressions in~\eqref{eq:i12-maxspec} and~\eqref{eq:i12-limit} are not closed-form expressions for $I_{1/2}^{\downarrow\downarrow}(A:B)_\rho$, 
because~\eqref{eq:i12-maxspec} involves a largest eigenvalue problem, 
and~\eqref{eq:i12-limit} involves a limit. 
These expressions are nonetheless useful, because an evaluation of $I_{1/2}^{\downarrow\downarrow}(A:B)_\rho$ via~\eqref{eq:i12-maxspec} or~\eqref{eq:i12-limit} is typically easier compared to other available alternatives, such as a definition-based evaluation of $I_{1/2}^{\downarrow\downarrow}(A:B)_\rho$ or an evaluation by means of universal permutation invariant states~\cite{burri2024doublyminimizedpetzrenyi}. 
In particular, since the replica trick is a widely used method for calculating the R\'enyi reflected entropy $S_R^{(n)}(A:B)_\rho$ for $n\in \mathbb{N}_{\geq 2}$, evaluating $I_{1/2}^{\downarrow\downarrow}(A:B)_\rho$ via~\eqref{eq:i12-limit} is feasible according to current practice in the corresponding settings.

In order to compare our main result with a similar result in classical information theory~\cite{lapidoth2019two}, let us consider the evaluation of our main result (Theorem~\ref{thm:srinf-i12}) for a CC state. 
\eqref{eq:reflected-cc} then gives rise to the following corollary of Theorem~\ref{thm:srinf-i12}.
\begin{cor}[CC states]\label{cor:cc}
Let $P_{XY}$ be the joint PMF of two random variables $X,Y$ over $[d_A], [d_B]$. 
Let $M_{x,y}\coloneqq \sqrt{P_{XY}(x,y)}$ for all $x\in [d_A],y\in [d_B]$ 
and let $M\coloneqq (M_{x,y})_{x\in [d_A],y\in [d_B]}$ be a $(d_A\times d_B)$-matrix. 
Let $M^T$ denote the transpose of $M$. 
Let $\{\ket{a_x}_A\}_{x\in [d_A]},\{\ket{b_y}_B\}_{y\in [d_B]}$ be orthonormal bases for $A,B$, 
and let 
$\rho_{AB}\coloneqq \sum_{x\in [d_A],y\in [d_B]} P_{XY}(x,y)\proj{a_x,b_y}_{AB}$. 
Then, 
\begin{align}\label{eq:cc-states}
I_{1/2}^{\downarrow\downarrow}(A:B)_\rho
=S_R^{(\infty)}(A:B)_\rho 
=-2\log \lVert M\rVert_\infty 
&=-2\log \sigma_{\max}(M)
\\
&=-2\log \sqrt{\max(\spec(M^TM))}
\end{align}
where $\sigma_{\max}(M)$ denotes the largest singular value of $M$.
\end{cor}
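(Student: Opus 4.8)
The plan is to obtain Corollary~\ref{cor:cc} as a direct specialization of the main Theorem~\ref{thm:srinf-i12} combined with the already-established formula~\eqref{eq:reflected-cc} for CC states. The first equality, $I_{1/2}^{\downarrow\downarrow}(A:B)_\rho=S_R^{(\infty)}(A:B)_\rho$, is immediate from Theorem~\ref{thm:srinf-i12}, since the hypotheses of the corollary merely specify that $\rho_{AB}$ is a particular CC state, which is a special case of an arbitrary $\rho_{AB}\in\mathcal{S}(AB)$. The substance of the corollary is therefore the chain of equalities expressing $S_R^{(\infty)}(A:B)_\rho$ in terms of the matrix $M$.

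The second equality, $S_R^{(\infty)}(A:B)_\rho=-2\log\lVert M\rVert_\infty$, is exactly property~(j) in Section~\ref{ssec:renyi-reflected}, specifically the second half of~\eqref{eq:reflected-cc}. I would simply invoke it, checking that the hypotheses match: the corollary's $\rho_{AB}$ is precisely the CC state of property~(j), with the same matrix $M=(\sqrt{P_{XY}(x,y)})_{x,y}$. No recomputation is needed.

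The remaining equalities are purely a matter of rewriting the operator $\infty$-norm of $M$. By the paper's notational convention in Section~\ref{ssec:notation}, $\lVert M\rVert_\infty=\sqrt{\max(\spec(M^\dagger M))}$; since $M$ has real nonnegative entries, $M^\dagger M=M^T M$, giving $\lVert M\rVert_\infty=\sqrt{\max(\spec(M^T M))}$, which is the final displayed expression. Furthermore, $\sqrt{\max(\spec(M^T M))}$ is by definition the largest singular value $\sigma_{\max}(M)$, yielding the expression $-2\log\sigma_{\max}(M)$. These are definitional identities and require no argument beyond citing the conventions.

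There is essentially no technical obstacle here: the entire content of the corollary is a substitution into Theorem~\ref{thm:srinf-i12} and property~(j), followed by standard rewriting of the Schatten $\infty$-norm as a largest singular value. The only point requiring a modicum of care is confirming that $M^\dagger M=M^T M$, i.e., that $M$ is real, which holds because $M_{x,y}=\sqrt{P_{XY}(x,y)}\geq 0$; consequently complex conjugation acts trivially and the Hermitian adjoint reduces to the transpose. This verification is the sole place where the nonnegativity of probabilities is used, and it is what makes the singular-value formulation clean.
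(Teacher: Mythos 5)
Your proposal is correct and follows exactly the route the paper takes: Corollary~\ref{cor:cc} is obtained by specializing Theorem~\ref{thm:srinf-i12} to the CC state and invoking property~(j), i.e.~\eqref{eq:reflected-cc}, with the remaining equalities being definitional rewritings of $\lVert M\rVert_\infty$ as the largest singular value (using that $M$ is real, so $M^\dagger M=M^TM$). No discrepancies to report.
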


The state $\rho_{AB}$ in Corollary~\ref{cor:cc} is a CC state. 
Hence, the minimization over $(\sigma_A,\tau_B)\in \mathcal{S}(A)\times \mathcal{S}(B)$ occurring in the definition of $I_{1/2}^{\downarrow\downarrow}(A:B)_\rho$ can be restricted to states that are diagonal in the orthonormal bases $\{\ket{a_x}_A\}_{x\in [d_A]},\{\ket{b_y}_B\}_{y\in [d_B]}$~\cite{burri2024doublyminimizedpetzrenyi}. 
Thus, Corollary~\ref{cor:cc} implies the following corollary.
\begin{cor}[Classical case]\label{cor:classical}
Let $P_{XY}$ be the joint PMF of two random variables $X,Y$ over finite alphabets $\mathcal{X},\mathcal{Y}$. 
Let $M_{x,y}\coloneqq \sqrt{P_{XY}(x,y)}$ for all $x\in \mathcal{X},y\in \mathcal{Y}$ 
and let $M\coloneqq (M_{x,y})_{x\in \mathcal{X} ,y\in \mathcal{Y}}$ be a $(\lvert\mathcal{X}\rvert \times \lvert \mathcal{Y}\rvert)$-matrix. 
Let $M^T$ denote the transpose of $M$. 
Then, 
\begin{align}
\inf_{Q_X,R_Y}
D_{1/2}(P_{XY}\| Q_XR_Y )
&=-2\log \sigma_{\max}(M)
\label{eq:classical}\\
&=-2\log \sqrt{\max(\spec(M^TM))}
\label{eq:classical2}
\end{align}
where the minimization is over PMFs $Q_X,R_Y$, 
and $\sigma_{\max}(M)$ denotes the largest singular value of $M$.
\end{cor}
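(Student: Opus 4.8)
The plan is to recognize the left-hand side of~\eqref{eq:classical} as the doubly minimized PRMI of order $1/2$ of a suitable CC state and then to read off its value directly from Corollary~\ref{cor:cc}. Concretely, I would first set $d_A\coloneqq \lvert\mathcal{X}\rvert$ and $d_B\coloneqq\lvert\mathcal{Y}\rvert$, fix identifications $\mathcal{X}\simeq[d_A]$ and $\mathcal{Y}\simeq[d_B]$, pick orthonormal bases $\{\ket{a_x}_A\}$, $\{\ket{b_y}_B\}$, and define the CC state $\rho_{AB}\coloneqq\sum_{x,y}P_{XY}(x,y)\proj{a_x,b_y}_{AB}$ exactly as in Corollary~\ref{cor:cc}. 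This places us in the setting of that corollary, whose conclusion already yields $I_{1/2}^{\downarrow\downarrow}(A:B)_\rho=-2\log\sigma_{\max}(M)=-2\log\sqrt{\max(\spec(M^TM))}$ for the same matrix $M$. The entire task therefore reduces to identifying the quantum quantity $I_{1/2}^{\downarrow\downarrow}(A:B)_\rho$ with the classical infimum $\inf_{Q_X,R_Y}D_{1/2}(P_{XY}\|Q_XR_Y)$.

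For that identification I would invoke the structural fact (from~\cite{burri2024doublyminimizedpetzrenyi}) that, since $\rho_{AB}$ is a CC state diagonal in the product basis $\{\ket{a_x,b_y}\}$, the minimization over $(\sigma_A,\tau_B)\in\mathcal{S}(A)\times\mathcal{S}(B)$ in the definition~\eqref{eq:prmi2} of $I_{1/2}^{\downarrow\downarrow}$ may be restricted to states diagonal in $\{\ket{a_x}_A\}$ and $\{\ket{b_y}_B\}$, respectively. Writing such optimizers as $\sigma_A=\sum_x Q_X(x)\proj{a_x}_A$ and $\tau_B=\sum_y R_Y(y)\proj{b_y}_B$ for PMFs $Q_X,R_Y$, the three operators $\rho_{AB}$, $\sigma_A$, $\tau_B$ are simultaneously diagonal, so $\rho_{AB}^{1/2}(\sigma_A\otimes\tau_B)^{1/2}$ is diagonal with entries $P_{XY}(x,y)^{1/2}(Q_X(x)R_Y(y))^{1/2}$, and the Petz divergence collapses to the classical R\'enyi divergence, $D_{1/2}(\rho_{AB}\|\sigma_A\otimes\tau_B)=-2\log\sum_{x,y}\sqrt{P_{XY}(x,y)Q_X(x)R_Y(y)}=D_{1/2}(P_{XY}\|Q_XR_Y)$. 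Taking the infimum over diagonal optimizers on the quantum side then matches the infimum over PMFs on the classical side, and combining this identity with Corollary~\ref{cor:cc} gives~\eqref{eq:classical} and~\eqref{eq:classical2}.

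The only genuinely nontrivial ingredient is the restriction of the quantum optimizers to diagonal product states; the rest is a bookkeeping reduction of a diagonal Petz divergence to a classical R\'enyi divergence. This restriction step is where a one-sided argument is actually needed: the inequality $I_{1/2}^{\downarrow\downarrow}(A:B)_\rho\leq\inf_{Q_X,R_Y}D_{1/2}(P_{XY}\|Q_XR_Y)$ is immediate because diagonal states form a subset of the feasible $(\sigma_A,\tau_B)$, whereas the reverse inequality requires a pinching argument, namely that dephasing any feasible $(\sigma_A,\tau_B)$ in the bases $\{\ket{a_x}_A\}$, $\{\ket{b_y}_B\}$ does not increase $D_{1/2}$. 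This follows from the data-processing inequality for the Petz divergence applied to the pinching channel, using that $\rho_{AB}$ is invariant under this pinching. Since this is precisely the content already cited from~\cite{burri2024doublyminimizedpetzrenyi}, I would cite it rather than reprove it, and I expect no further obstacle.
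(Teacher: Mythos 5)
Your proposal is correct and follows essentially the same route as the paper: the paper also derives Corollary~\ref{cor:classical} directly from Corollary~\ref{cor:cc} together with the fact, cited from~\cite{burri2024doublyminimizedpetzrenyi}, that for a CC state the minimization over $(\sigma_A,\tau_B)$ in the definition of $I_{1/2}^{\downarrow\downarrow}(A:B)_\rho$ can be restricted to states diagonal in the bases $\{\ket{a_x}_A\}$, $\{\ket{b_y}_B\}$, whereupon the Petz divergence reduces to the classical R\'enyi divergence. Your additional remarks on the easy inclusion direction and the pinching/data-processing argument simply spell out the content of that cited fact, so there is no substantive difference.
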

The identity in~\eqref{eq:classical} has been derived previously in classical information theory~\cite[Lemma~6]{lapidoth2019two}. 
It shows that the classical analogue of $I_{1/2}^{\downarrow\downarrow}(A:B)_\rho$ admits a largest singular value formulation, which can also be interpreted as a largest eigenvalue formulation due to~\eqref{eq:classical2}. 
In view of Corollaries~\ref{cor:cc} and~\ref{cor:classical}, our main result in Theorem~\ref{thm:srinf-i12} can be seen as a generalization of~\cite[Lemma~6]{lapidoth2019two} from CC states to general quantum states, because Theorem~\ref{thm:srinf-i12} implies that $I_{1/2}^{\downarrow\downarrow}(A:B)_\rho$ admits a largest eigenvalue formulation for any $\rho_{AB}\in \mathcal{S}(AB)$ and not just for CC states, see~\eqref{eq:i12-maxspec}. 
We can therefore answer Question~3 as follows.
\begin{description}
    \item[Answer to Question~3] 
    According to~\eqref{eq:i12-maxspec}, $I_{1/2}^{\downarrow\downarrow}(A:B)_\rho$ is determined by the largest eigenvalue of 
    $\hat{\rho}_{AA^*}\coloneqq \tr_{BB^*}[\proj{\hat{\rho}}_{ABA^*B^*}]$, where $\ket{\hat{\rho}}_{ABA^*B^*}$ denotes the canonical purification of $\rho_{AB}$.  
    This provides a largest eigenvalue formulation for the doubly minimized PRMI of order $1/2$ that holds for all bipartite quantum states $\rho_{AB}$,  
    generalizing a previous result for the classical case~\cite[Lemma~6]{lapidoth2019two} to the quantum domain.
\end{description}

\section{Reconsideration of definition of R\'enyi reflected entropy}\label{sec:definition}

As explained in Section~\ref{ssec:renyi-reflected}, the definition of the R\'enyi reflected entropy refers to the \emph{canonical purification} of the corresponding state. 
Although the canonical purification is well-defined, it is somewhat unclear what its distinguishing features are from a formal perspective. 
In fact, the main obvious properties of the canonical purification do not uniquely specify it but rather characterize a larger set of purifications, 
in this work referred to as \emph{$\mathcal{CF}$-invariant purifications}, 
which themselves form a subset of \emph{marginal-canonical purifications}. 
These notions will be introduced in Section~\ref{ssec:purification}.

This naturally raises the question of whether the $n$-R\'enyi reflected entropy nonetheless depends in a crucial but obscure way on the canonical purification, or whether these larger sets of purifications are also admissible. 
In Section~\ref{ssec:deflected}, we show that the latter is the case for all $n\in \mathbb{N}_{\geq 2}\cup \{\infty\}$. 
Specifically, we will show that the $n$-R\'enyi reflected entropy can be expressed as a minimization over all marginal-canonical purifications (Theorem~\ref{thm:minimized}), where no explicit reference to the canonical purification remains.

\subsection{$\mathcal{CF}$-invariant and marginal-canonical purification}\label{ssec:purification}

The canonical purification of $\rho_A\in \mathcal{S}(A)$ is defined as $\ket{\hat{\rho}}_{AA^*}\coloneqq \sqrt{\rho_A}\ket{\Omega}_{AA^*}$, see~\eqref{eq:def-cp}. 
This immediately implies that the canonical purification has the following main properties.
\begin{enumerate}[label=(\Alph*)]
\item \emph{Marginal state on $A$:} 
$\tr_{A^*}[\proj{\hat{\rho}}_{AA^*}]=\rho_{A}$
\item \emph{Marginal state on $A^*$:} 
$\tr_{A}[\proj{\hat{\rho}}_{AA^*}]=\rho_{A^*}^*$
\item \emph{$\mathcal{CF}$-invariance:} 
$(\swapf \ket{\hat{\rho}}_{AA^*})^*=\ket{\hat{\rho}}_{AA^*}$
\end{enumerate}

These properties do not uniquely specify the canonical purification. 
To formalize this observation, we introduce the following two notions. 
They correspond to vectors satisfying $(A)\land (C)$ and $(A)\land (B)$ respectively. 
(Note that $(A)\land (C)$ implies $(B)$, as we will see in Proposition~\ref{prop:invariant}~(f) below.)

\begin{defn}[$\mathcal{CF}$-invariant and marginal-canonical purification]\label{def:marginal}
Let $\rho_A\in \mathcal{S}(A)$, $\ket{\bar{\rho}}_{AA^*}\in AA^*$. 
$\ket{\bar{\rho}}_{AA^*}$ is a \emph{$\mathcal{CF}$-invariant purification of $\rho_A$} 
if $\tr_{A^*}[\proj{\bar{\rho}}_{AA^*}]=\rho_A$ and $(\swapf \ket{\bar{\rho}}_{AA^*})^*=\ket{\bar{\rho}}_{AA^*}$. 
$\ket{\bar{\rho}}_{AA^*}$ is a \emph{marginal-canonical purification of $\rho_A$} 
if $\tr_{A^*}[\proj{\bar{\rho}}_{AA^*}]=\rho_A$ and $\tr_{A}[\proj{\bar{\rho}}_{AA^*}]=\rho^*_{A^*}$.
\end{defn}

The following proposition lists several properties of the three types of purifications defined above. 
Since their proof is straightforward, the proposition is stated without proof. 

\begin{prop}[Canonical, $\mathcal{CF}$-invariant, and marginal-canonical purification]\label{prop:invariant}
Let $\rho_{A}\in \mathcal{S}(A)$, $\ket{\bar{\rho}}_{AA^*}\in AA^*$.
Then all of the following hold.
\begin{enumerate}[label=(\alph*)]
\item $\ket{\bar{\rho}}_{AA^*}$ is the canonical purification of $\rho_{A}$ 
iff there exist an orthonormal basis $\{\ket{e_j}_{A}\}_{j\in [d_A]}$ for $A$ and 
$(p_j)_{j\in [d_A]}\in [0,1]^{\times d_A}$ such that 
$\rho_{A}=\sum_{j\in [d_A]}p_j\proj{e_j}_{A}$ and 
\begin{align}
\ket{\bar{\rho}}_{AA^*}=\sum_{j\in [d_A]}\sqrt{p_j}\ket{e_j}_{A}\otimes \ket{e_j}_{A^*}^*.
\end{align}
\item $\ket{\bar{\rho}}_{AA^*}$ is a $\mathcal{CF}$-invariant purification of $\rho_{A}$ 
iff there exists $U_{A}\in \mathcal{U}(A)$ such that $\ket{\bar{\rho}}_{AA^*}=\sqrt{\rho_{A}}U_A\ket{\Omega}_{AA^*}$ and $[\rho_A,U_A]=0$ and $U_{A}^2=1_{A}$.
\item $\ket{\bar{\rho}}_{AA^*}$ is a $\mathcal{CF}$-invariant purification of $\rho_{A}$ 
iff there exist an orthonormal basis $\{\ket{e_j}_{A}\}_{j\in [d_A]}$ for $A$, 
$(p_j)_{j\in [d_A]}\in [0,1]^{\times d_A}$, 
and $(b_j)_{j\in [d_A]}\in \{0,1\}^{\times d_A}$ such that 
$\rho_{A}=\sum_{j\in [d_A]}p_j\proj{e_j}_{A}$ and
\begin{align}\label{eq:structure-invariant}
\ket{\bar{\rho}}_{AA^*}&= \sum_{j\in [d_A]}(-1)^{b_j} \sqrt{p_j} \ket{e_j}_{A}\otimes \ket{e_j}_{A^*}^* .
\end{align}
\item $\ket{\bar{\rho}}_{AA^*}$ is a marginal-canonical purification of $\rho_{A}$ 
iff there exists $U_{A}\in \mathcal{U}(A)$ such that 
$\ket{\bar{\rho}}_{AA^*}=\sqrt{\rho_{A}}U_A\ket{\Omega}_{AA^*}$ and $[\rho_A,U_A]=0$.
\item \sloppy 
$\ket{\bar{\rho}}_{AA^*}$ is a marginal-canonical purification of $\rho_{A}$ 
iff there exist an orthonormal basis $\{\ket{e_j}_{A}\}_{j\in [d_A]}$ for $A$, 
$(p_j)_{j\in [d_A]}\in [0,1]^{\times d_A}$, 
and $(\theta_j)_{j\in [d_A]}\in [0,2\pi)^{\times d_A}$ such that 
$\rho_{A}=\sum_{j\in [d_A]}p_j\proj{e_j}_{A}$ and
\begin{align}\label{eq:structure-partially-invariant}
\ket{\bar{\rho}}_{AA^*}= \sum_{j\in [d_A]}e^{i\theta_j} \sqrt{p_j} \ket{e_j}_{A}\otimes \ket{e_j}_{A^*}^* .
\end{align}
\item If $\ket{\bar{\rho}}_{AA^*}$ is the canonical purification of $\rho_A$, then $\ket{\bar{\rho}}_{AA^*}$ is a $\mathcal{CF}$-invariant purification of $\rho_A$. 
If $\ket{\bar{\rho}}_{AA^*}$ is a $\mathcal{CF}$-invariant purification of $\rho_A$, then 
$\ket{\bar{\rho}}_{AA^*}$ is a marginal-canonical purification of $\rho_A$.
\end{enumerate}
\end{prop}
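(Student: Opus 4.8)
The plan is to parametrize every vector $\ket{\bar{\rho}}_{AA^*}\in AA^*$ by the unique operator $X_A\in\mathcal{L}(A)$ with $\ket{\bar{\rho}}_{AA^*}=X_A\ket{\Omega}_{AA^*}$; this exists because $\ket{\Omega}_{AA^*}$ has full Schmidt rank, so $X_A\mapsto X_A\ket{\Omega}_{AA^*}$ is a linear bijection $\mathcal{L}(A)\to AA^*$. I would then translate each of the defining conditions into an algebraic condition on $X_A$. For this I would first record three elementary facts: (i) the ricochet relation $X_A\ket{\Omega}_{AA^*}=(X_A^T)_{A^*}\ket{\Omega}_{AA^*}$; (ii) the partial traces $\tr_{A^*}[\proj{\Omega}_{AA^*}]=1_A$ and $\tr_A[\proj{\Omega}_{AA^*}]=1_{A^*}$; and (iii) the invariances $\swapf_{AA^*}\ket{\Omega}_{AA^*}=\ket{\Omega}_{AA^*}$ and $\ket{\Omega}_{AA^*}^*=\ket{\Omega}_{AA^*}$. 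Combining (iii) with the facts that $\swapf_{AA^*}$ conjugates $A$-operators into $A^*$-operators and that complex conjugation commutes past $\ket{\Omega}_{AA^*}$, I would derive the linchpin identity
\[
(\swapf_{AA^*}X_A\ket{\Omega}_{AA^*})^*=X_A^\dagger\ket{\Omega}_{AA^*},
\]
using $(X_A^T)^*=X_A^\dagger$.

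With these in hand the conditions become algebraic: (A) $\tr_{A^*}[\proj{\bar{\rho}}_{AA^*}]=\rho_A\iff X_AX_A^\dagger=\rho_A$; (B) $\tr_{A}[\proj{\bar{\rho}}_{AA^*}]=\rho_{A^*}^*\iff X_A^\dagger X_A=\rho_A$, where I would use (i) to transfer the $A^*$-marginal to the $A$-expression $((X_A^\dagger X_A)^T)_{A^*}$ and the Hermiticity of $\rho_A$ to simplify $(\rho_A^*)^T=\rho_A$; and (C) $(\swapf_{AA^*}\ket{\bar{\rho}}_{AA^*})^*=\ket{\bar{\rho}}_{AA^*}\iff X_A=X_A^\dagger$, which follows from the linchpin identity together with the injectivity of $Y\mapsto Y\ket{\Omega}_{AA^*}$.

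The structural statements then follow by spectral decomposition of a square root of $\rho_A$. For (d)/(e): conditions (A)$\land$(B) say $X_A$ is normal with $X_AX_A^\dagger=X_A^\dagger X_A=\rho_A$, so diagonalizing $X_A=\sum_j\lambda_j\proj{e_j}_A$ forces $|\lambda_j|^2=p_j$, hence $\lambda_j=e^{i\theta_j}\sqrt{p_j}$ and $X_A=\sqrt{\rho_A}U_A$ with $U_A=\sum_j e^{i\theta_j}\proj{e_j}_A$ unitary, commuting with $\rho_A$ (taken to be $1_A$ on $\ker(\rho_A)$); evaluating $X_A\ket{\Omega}_{AA^*}$ through $\proj{e_j}_A\ket{\Omega}_{AA^*}=\ket{e_j}_A\otimes\ket{e_j}_{A^*}^*$ yields the explicit form~\eqref{eq:structure-partially-invariant}. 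For (b)/(c): adjoining condition (C) makes $X_A$ Hermitian, so the $\lambda_j$ are real, $\lambda_j=(-1)^{b_j}\sqrt{p_j}$, giving $U_A^2=1_A$ and the form~\eqref{eq:structure-invariant}. Part (a) is the special case $U_A=1_A$ (i.e. $X_A=\sqrt{\rho_A}$), read directly from $\ket{\hat{\rho}}_{AA^*}=\sqrt{\rho_A}\ket{\Omega}_{AA^*}$ and the same evaluation. Finally (f) is immediate from the operator characterizations: $X_A=\sqrt{\rho_A}$ satisfies the involution requirement of (b), and any Hermitian $X_A$ with $X_AX_A^\dagger=\rho_A$ automatically satisfies $X_A^\dagger X_A=X_A^2=X_AX_A^\dagger=\rho_A$, i.e. condition (B); equivalently, the $U_A^2=1_A$ family of (b)/(c) is contained in the unitary family of (d)/(e).

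The main obstacle I anticipate is purely bookkeeping: pinning down every convention — the $A\simeq A^*$ isomorphism, the basis-dependent conjugation $(\cdot)^*$, the transpose appearing in the ricochet relation, and their commutation with $\swapf_{AA^*}$ — so that the linchpin identity and the $A^*$-marginal computation for (B) emerge with the correct placement of $\dagger$, $T$, and $*$. Once those are fixed, each remaining step reduces to the spectral decomposition of a Hermitian or normal square root of $\rho_A$, with no genuine difficulty left.
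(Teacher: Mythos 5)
Your proposal is correct. Note that the paper itself offers no proof of Proposition~\ref{prop:invariant} --- it is stated without proof as ``straightforward'' --- so the only meaningful comparison is with the paper's proof of the closely parallel Proposition~\ref{prop:canonical-invariant} in Appendix~\ref{app:invariance}, whose parts (b)--(d) are unit-vector analogues of parts (b)--(e) here. There, the author works with the coefficient matrix $C_{j,k}=\bra{j}_A\otimes\bra{k}_{A^*}\ket{\psi}$, shows $C^\dagger=C$ from $\mathcal{CF}$-invariance, and diagonalizes $C$; for the marginal conditions she invokes uniqueness of purifications up to a unitary on the purifying system. Your parametrization $\ket{\bar{\rho}}_{AA^*}=X_A\ket{\Omega}_{AA^*}$ is exactly the operator-level repackaging of that coefficient matrix (your $X_A$ is $C^T$ up to the basis identification), and your three dictionary entries are the right ones: condition (A) becomes $X_AX_A^\dagger=\rho_A$, condition (B) becomes $X_A^\dagger X_A=\rho_A$ (via the ricochet relation and $\rho_A^*=\rho_A^T$), and condition (C) becomes $X_A=X_A^\dagger$ via your linchpin identity $(\swapf_{AA^*}X_A\ket{\Omega}_{AA^*})^*=X_A^\dagger\ket{\Omega}_{AA^*}$, which is a correct consequence of $\swapf X_A\ket{\Omega}=X_A^T\ket{\Omega}$ and realness of $\ket{\Omega}$. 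From this dictionary, (d)/(e) reduce to the spectral theorem for the normal operator $X_A$ (with the correct convention of putting $U_A=1$ on $\ker\rho_A$), (b)/(c) to Hermiticity forcing real eigenvalues $(-1)^{b_j}\sqrt{p_j}$ and $U_A^2=1_A$, (a) to $X_A=\sqrt{\rho_A}$, and (f) to the observation that Hermitian $X_A$ with $X_AX_A^\dagger=\rho_A$ automatically satisfies $X_A^\dagger X_A=\rho_A$. What your approach buys is uniformity: the same bijection $\mathcal{L}(A)\ni X_A\mapsto X_A\ket{\Omega}_{AA^*}$ handles all six parts at once and makes the inclusion structure of (f) transparent as containments of operator classes $\{\sqrt{\rho_A}\}\subseteq\{\text{Hermitian roots}\}\subseteq\{\text{normal roots}\}$ of $\rho_A$, whereas the paper's component-wise style would require redoing the computation for each marginal/invariance condition separately.
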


By definition, the canonical purification of $\rho_A$ is unique, unlike the other two types of purification. 
Proposition~\ref{prop:invariant}~(c) shows that $\mathcal{CF}$-invariant purifications retain a residual degree of freedom associated with the binary choice of certain signs. 
Similarly, Proposition~\ref{prop:invariant}~(e) shows that marginal-canonical purifications retain a residual degree of freedom associated with the choice of certain complex phases. 
Proposition~\ref{prop:invariant}~(f) implies that the three types of purifications are ordered by inclusion, as illustrated in Figure~\ref{fig:purifications}.

\begin{figure}
\begin{overpic}[width=\textwidth]{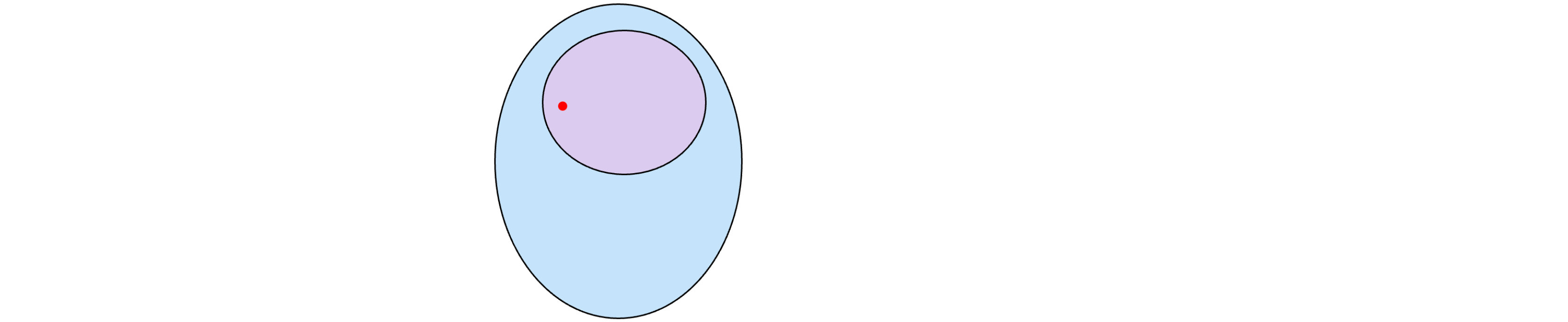}
\begin{scriptsize}
  \put(53,8){\textcolor{NavyBlue}{\textbf{Marginal-canonical purification of $\rho_A$}}}
  \put(53,5){\textcolor{NavyBlue}{Definition: $\tr_{A^*}[\proj{\bar{\rho}}_{AA^*}]=\rho_A,\tr_{A}[\proj{\bar{\rho}}_{AA^*}]=\rho_{A^*}^*$}}
  \put(53,2){\textcolor{NavyBlue}{$\Leftrightarrow\ket{\bar{\rho}}_{AA^*}=\sum\limits_{j\in [d_A]}e^{i\theta_j}\sqrt{p_j}\ket{e_j}_A\otimes\ket{e_j}_{A^*}^*,\quad \theta_j\in [0,2\pi)$}}
  \put(53,21){\textcolor{Purple}{\textbf{$\mathcal{CF}$-invariant purification of $\rho_A$}}}
  \put(53,18){\textcolor{Purple}{Definition: $\tr_{A^*}[\proj{\bar{\rho}}_{AA^*}]=\rho_A,(\mathcal{F}\ket{\bar{\rho}}_{AA^*})^*=\ket{\bar{\rho}}_{AA^*}$}}
  \put(53,15){\textcolor{Purple}{$\Leftrightarrow\ket{\bar{\rho}}_{AA^*}=\sum\limits_{j\in [d_A]}(-1)^{b_j}\sqrt{p_j}\ket{e_j}_A\otimes\ket{e_j}_{A^*}^*, \quad b_j\in \{0,1\}$}}
  \put(0,14){\textcolor{red!85!black}{\textbf{Canonical purification of $\rho_A$}}}
  \put(0,11){\textcolor{red!85!black}{Definition: $\ket{\bar{\rho}}_{AA^*}=\sqrt{\rho_A}\ket{\Omega}_{AA^*}$}}
  \put(0,8){\textcolor{red!85!black}{$\Leftrightarrow\ket{\bar{\rho}}_{AA^*}=\sum\limits_{j\in [d_A]}\sqrt{p_j}\ket{e_j}_A\otimes\ket{e_j}_{A^*}^*$}}
\end{scriptsize}
\end{overpic}
\caption{Illustration of Proposition~\ref{prop:invariant}. 
According to Proposition~\ref{prop:invariant}~(f), the canonical purification of $\rho_A$ (\emph{red}) is an element of the set of $\mathcal{CF}$-invariant purifications of $\rho_A$ (\emph{purple}), which is a subset of the set of marginal-canonical purifications of $\rho_A$ (\emph{blue}).}
\label{fig:purifications}
\end{figure}

\subsection{R\'enyi reflected entropy from marginal-canonical purification}
\label{ssec:deflected}

\begin{defn}[Minimized R\'enyi reflected entropy]\label{def:minimized}
The \emph{minimized R\'enyi reflected entropy of order $\alpha\in [0,\infty]$} is defined for $\rho_{AB}\in \mathcal{S}(AB)$ as 
\begin{align}\label{eq:def-m1}
S_{R}^{\downarrow (\alpha)}(A:B)_\rho 
&\coloneqq \inf_{\substack{\ket{\bar{\rho}}_{ABA^*B^*}\in ABA^*B^*: \\ \tr_{A^*B^*}[\proj{\bar{\rho}}_{ABA^*B^*}]=\rho_{AB},\\ \tr_{AB}[\proj{\bar{\rho}}_{ABA^*B^*}]=\rho_{A^*B^*}^*}} H_\alpha (AA^*)_{\bar{\rho}}
\end{align}
where $\bar{\rho}_{AA^*}\coloneqq \tr_{BB^*}[\proj{\bar{\rho}}_{ABA^*B^*}]$.
\end{defn}

The minimization over all marginal-canonical purifications of $\rho_{AB}$ in~\eqref{eq:def-m1} can also be formulated as a minimization over unitaries $U_{AB}$ that commute with $\rho_{AB}$ due to Proposition~\ref{prop:invariant}~(d). 
Thus, we have for any $\rho_{AB}\in \mathcal{S}(AB)$ and $\alpha\in [0,\infty]$
\begin{align}\label{eq:def-m2}
S_{R}^{\downarrow (\alpha)}(A:B)_\rho 
&= \inf_{\substack{U_{AB}\in \mathcal{U}(AB):\\ [\rho_{AB},U_{AB}]=0}} H_\alpha (AA^*)_{\bar{\rho}(U)}
\end{align}
where $\bar{\rho}(U)_{AA^*}\coloneqq \tr_{BB^*}[\proj{\bar{\rho}(U)}_{ABA^*B^*}]$ 
and $\ket{\bar{\rho}(U)}_{ABA^*B^*}\coloneqq \sqrt{\rho_{AB}}U_{AB}\ket{\Omega}_{AA^*}\otimes\ket{\Omega}_{BB^*}$.

Although the \emph{minimized R\'enyi reflected entropy} is newly introduced in this work, it has clear connections to previously defined measures. 
Firstly, it is upper bounded by the R\'enyi reflected entropy because $U_{AB}\coloneqq 1_{AB}$ is in the feasible set of~\eqref{eq:def-m2}. 
Secondly, it is even upper bounded by any \emph{R\'enyi $s$-deflected entropy}~\cite{dutta2023reflected,berthiere2023markov}, which is defined as follows. 

\begin{defn}[R\'enyi $s$-deflected entropy]\label{def:deflected}
The \emph{R\'enyi $s$-deflected entropy of order $\alpha\in [0,\infty]$ for $s\in \mathbb{R}$} is defined for $\rho_{AB}\in \mathcal{S}(AB)$ as~\cite{dutta2023reflected}
\begin{align}
S_{D,s}^{(\alpha)}(A:B)_\rho 
&\coloneqq H_\alpha (AA^*)_{\bar{\rho}^{(s)}}
\end{align}
where $\bar{\rho}^{(s)}_{AA^*}\coloneqq \tr_{BB^*}[\proj{\bar{\rho}^{(s)}}_{ABA^*B^*}]$ 
and $\ket{\bar{\rho}^{(s)}}_{ABA^*B^*}\coloneqq \rho_{AB}^{\frac{1}{2}+is}\ket{\Omega}_{AA^*}\otimes\ket{\Omega}_{BB^*}$.
\end{defn}
Since $U_{AB}\coloneqq \rho_{AB}^{is}$ is an isometry whose support coincides with $\supp(\rho_{AB})$, it is immediate to infer from~\eqref{eq:def-m2} and Definition~\ref{def:deflected} that the minimized R\'enyi reflected entropy is indeed upper bounded by the R\'enyi $s$-deflected entropy for any $s\in \mathbb{R}$. 
Consequently, we have for all $\alpha\in [0,\infty],\rho_{AB}\in \mathcal{S}(AB)$
\begin{align}\label{eq:sr-ineq}
S_{R}^{\downarrow (\alpha)}(A:B)_\rho 
&\leq \inf_{s\in \mathbb{R}} S_{D,s}^{(\alpha)}(A:B)_\rho  
\leq S_{D,0}^{(\alpha)}(A:B)_\rho  
=S_{R}^{(\alpha)}(A:B)_\rho .
\end{align}
All of these inequalities are saturated for $\alpha\in \mathbb{N}_{\geq 2}\cup \{\infty\}$, as shown in the following theorem. 
The proof of this theorem is given in Appendix~\ref{app:proof_minimized}.

\begin{thm}[Minimized R\'enyi reflected entropy equals R\'enyi reflected entropy]\label{thm:minimized}
Let $\rho_{AB}\in \mathcal{S}(AB),n\in \mathbb{N}_{\geq 2}\cup \{\infty\}$. Then
\begin{align}
S_{R}^{\downarrow (n)}(A:B)_\rho 
=\min_{s\in \mathbb{R}} S_{D,s}^{(n)}(A:B)_\rho 
=S_{R}^{(n)}(A:B)_\rho .
\label{eq:ref-min}
\end{align}
\end{thm}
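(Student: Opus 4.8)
The inequality $S_R^{\downarrow(n)}(A:B)_\rho \le S_R^{(n)}(A:B)_\rho$ (together with the middle $\le$) is already supplied by~\eqref{eq:sr-ineq}, so the whole theorem reduces to the reverse bound $S_R^{\downarrow(n)}(A:B)_\rho \ge S_R^{(n)}(A:B)_\rho$. By~\eqref{eq:def-m2} this amounts to showing that $H_n(AA^*)_{\bar\rho}\ge H_n(AA^*)_{\hat\rho} = S_R^{(n)}(A:B)_\rho$ for every marginal-canonical purification $\ket{\bar\rho}_{ABA^*B^*}$ of $\rho_{AB}$, with the canonical purification being the minimizer. The plan is to make the dependence on the purification completely explicit using the structural description in Proposition~\ref{prop:invariant}, and then to read the claim off from a single application of the triangle inequality.

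First I would diagonalize $\rho_{AB} = \sum_j p_j\proj{e_j}_{AB}$ and invoke Proposition~\ref{prop:invariant}~(e), applied with the composite system $AB$ in place of $A$: every marginal-canonical purification has the form $\ket{\bar\rho}_{ABA^*B^*} = \sum_j e^{i\theta_j}\sqrt{p_j}\,\ket{e_j}_{AB}\otimes\ket{e_j}^*_{A^*B^*}$ for some phases $\theta_j$, and $\ket{\hat\rho}$ is the special case $\theta_j\equiv 0$. Tracing out $BB^*$ and writing $X_{jk}\coloneqq \tr_B[\ketbra{e_j}{e_k}_{AB}]\in\mathcal{L}(A)$ gives the explicit form
\begin{equation}
\bar\rho_{AA^*} = \sum_{j,k} e^{i(\theta_j-\theta_k)}\sqrt{p_jp_k}\; X_{jk}\otimes (X_{jk})^* ,
\end{equation}
where the complex conjugate acts on $A^*$. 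The crucial point is that the phase $e^{i(\theta_j-\theta_k)}$ is a scalar multiplying both tensor factors, and the $A^*$-factor is the complex conjugate of the $A$-factor.

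Next I would expand the moment $\tr[\bar\rho_{AA^*}^n]$ for integer $n\ge 2$ into the replica sum
\begin{equation}
\tr[\bar\rho_{AA^*}^n] = \sum_{\substack{j_1,\dots,j_n\\ k_1,\dots,k_n}} \Big(\prod_{i=1}^n e^{i(\theta_{j_i}-\theta_{k_i})}\Big)\Big(\prod_{i=1}^n\sqrt{p_{j_i}p_{k_i}}\Big)\,\big|\tr[X_{j_1k_1}X_{j_2k_2}\cdots X_{j_nk_n}]\big|^2 ,
\end{equation}
where the $A^*$-trace equals the complex conjugate of the $A$-trace (because $(XY)^*=X^*Y^*$ and $\tr[Z^*]=\overline{\tr[Z]}$), which is exactly what collapses the two tensor factors into the manifestly nonnegative quantity $|\tr[X_{j_1k_1}\cdots X_{j_nk_n}]|^2$. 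Thus each summand has the same modulus as the corresponding $\theta\equiv 0$ summand, and the latter is real and nonnegative. Since $\tr[\bar\rho_{AA^*}^n]$ is itself real and nonnegative, the triangle inequality gives $\tr[\bar\rho_{AA^*}^n]\le\tr[\hat\rho_{AA^*}^n]$, hence $H_n(AA^*)_{\bar\rho}\ge H_n(AA^*)_{\hat\rho}$ because $\tfrac{1}{1-n}<0$. Taking the infimum over all phases yields $S_R^{\downarrow(n)}\ge S_R^{(n)}$, and combining with~\eqref{eq:sr-ineq} makes all three quantities equal, with the optimum attained at $\theta\equiv 0$ (equivalently $s=0$). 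The case $n=\infty$ follows by taking $n$-th roots: the inequality $\tr[\bar\rho_{AA^*}^n]^{1/n}\le\tr[\hat\rho_{AA^*}^n]^{1/n}$ for all integer $n\ge 2$ passes in the limit $n\to\infty$ to $\max(\spec(\bar\rho_{AA^*}))\le\max(\spec(\hat\rho_{AA^*}))$, i.e.\ $H_\infty(AA^*)_{\bar\rho}\ge H_\infty(AA^*)_{\hat\rho}$.

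The main obstacle — the step I would isolate as a lemma — is the reduction to the phase-only parametrization together with the observation that produces $|\tr[X_{j_1k_1}\cdots X_{j_nk_n}]|^2$: it is precisely the $\mathcal{CF}$-invariant, complex-conjugate structure of the canonical purification (Proposition~\ref{prop:invariant}) that forces every canonical replica term to be real and nonnegative, so that passing to a general purification can only insert unit-modulus phases and never alter magnitudes. Two features of the hypothesis are essential and explain the restriction $n\in\mathbb{N}_{\ge2}\cup\{\infty\}$: the argument needs $n$ to be an integer so that $\tr[\bar\rho_{AA^*}^n]$ admits the finite replica expansion above, and it needs $n\ge 2$ so that the off-diagonal terms $j_i\ne k_i$ actually occur (for $n=1$ only $j=k$ survives and the bound degenerates to the trivial identity $\tr[\bar\rho_{AA^*}]=1$, which is why equality can genuinely fail for the reflected entropy). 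Beyond this, I expect only bookkeeping; the one point to verify carefully is that the feasible set of Definition~\ref{def:minimized} coincides exactly with the phase family, which is what Proposition~\ref{prop:invariant}~(d)--(e) provides.
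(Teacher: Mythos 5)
Your proposal is correct and follows essentially the same route as the paper's proof: reduce to the single inequality $S_R^{\downarrow(n)}\geq S_R^{(n)}$ via~\eqref{eq:sr-ineq}, parametrize marginal-canonical purifications by phases through Proposition~\ref{prop:invariant}~(d)--(e), expand $\tr[\bar{\rho}_{AA^*}^n]$ into a replica sum whose canonical terms are the nonnegative quantities $\lvert\tr[X_{j_1k_1}\cdots X_{j_nk_n}]\rvert^2$, and apply the triangle inequality. Your treatment of $n=\infty$ via $n$-th roots of traces is equivalent to the paper's limit $H_m\rightarrow H_\infty$, and your one presentational slip (fixing an eigenbasis of $\rho_{AB}$ before choosing the purification, which matters only for degenerate spectra) is harmless since the basis may be chosen after the purification, exactly as in the paper.
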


\begin{rem}[Range of validity of Theorem~\ref{thm:minimized}]
\label{rem:02}
The assertion in~\eqref{eq:ref-min} can be violated for $n\in [0,2)$. 
An example of a state $\rho_{AB}$ such that $S_R^{\downarrow(n)}(A:B)_\rho<S_R^{(n)}(A:B)_\rho$ for all $n\in [0,2)$ is given in Appendix~\ref{app:ex-02}.
\end{rem}

Theorem~\ref{thm:minimized} implies that in order to calculate the R\'enyi reflected entropy of order $n\in \mathbb{N}_{\geq 2}\cup \{\infty\}$, it is not required to evaluate $H_n (AA^*)_{\bar{\rho}}$ for the marginal state on $AA^*$ of the \emph{canonical purification} of $\rho_{AB}$. 
According to Theorem~\ref{thm:minimized}, it can also be obtained by minimizing $H_n (AA^*)_{\bar{\rho}}$ over all \emph{marginal-canonical purifications} of $\rho_{AB}$. 
While the canonical purification of $\rho_{AB}$ is a peculiar unique pure state which is not uniquely specified by its main properties, 
marginal-canonical purifications of $\rho_{AB}$ can be more clearly identified as the vectors with two specific properties, see~\eqref{eq:def-m1}. 
This offers a qualitatively novel perspective on the $n$-R\'enyi reflected entropy, replacing its obscure dependence on the canonical purification with an optimization over a larger class of purifications that is constrained by two simple properties.

Note that Theorem~\ref{thm:minimized} holds as well if the minimization occurring in the definition of the minimized R\'enyi reflected entropy is restricted to $\mathcal{CF}$-invariant purifications. 
This follows from Theorem~\ref{thm:minimized} since $\mathcal{CF}$-invariant purifications form a subset of all marginal-canonical purifications and this subset includes the canonical purification, see Proposition~\ref{prop:invariant}~(f).

\section{Conclusion}\label{sec:conclusion}
The main technical contribution of this work is a proof that two correlation measures previously studied independently are, in fact, identical: 
The min-reflected entropy is equal to the doubly minimized PRMI of order $1/2$ (Theorem~\ref{thm:srinf-i12}). 

Implications of this main result have been explored in Section~\ref{sec:implications}. 
They include new bounds on the min-reflected entropy that are analogous to previously derived bounds for the entanglement of purification~(\ref{ssec:bounds}, \ref{ssec:comparison_eop}), 
the correspondence between the maximality of R\'enyi reflected entropies and maximal entanglement~(\ref{ssec:max}), 
an operational interpretation of the min-reflected entropy based on a correlation detection task as well as a connection between the min-reflected entropy and information recovery via the Petz map~(\ref{ssec:operational}), 
and a rederivation of a previous result in classical information theory~\cite[Lemma~6]{lapidoth2019two}, suggesting that our main result is a generalization of their result from the classical to the quantum domain~(\ref{ssec:classical}). 
These implications have enabled us to answer several previously open questions posed in the introduction (\ref{ssec:max}, \ref{ssec:operational}, \ref{ssec:classical}).

As a side result, we have shown that the R\'enyi reflected entropy of order $n\in \mathbb{N}_{\geq 2}\cup \{\infty\}$ can also be expressed as a minimization over larger classes of purifications -- marginal-canonical or $\mathcal{CF}$-invariant purifications -- where no explicit reference to the canonical purification remains (Theorem~\ref{thm:minimized}). 
This provides a qualitatively novel perspective on the R\'enyi reflected entropy which may be of independent interest. 
For instance, in a holographic context, it would be interesting to explore whether the use of marginal-canonical or $\mathcal{CF}$-invariant purifications on the boundary is related to the imposition of CPT symmetry in the bulk~\cite{dutta2019canonical,engelhardt2019coarse,engelhardt2018decoding}. 
Since this side result also concerns the min-reflected entropy ($n=\infty$), it also applies to our main result, offering an alternative interpretation.

As noted in the introduction, previous works on R\'enyi reflected entropies have primarily focused on the reflected entropy. 
Most studies that involve $(m,n)$-R\'enyi reflected entropies aim to ultimately compute the reflected entropy through analytic continuation in $m$ and $n$, followed by taking the limit $n\rightarrow 1,m\rightarrow 1$. 
However, the reflected entropy is not a correlation measure for general quantum states~\cite{hayden2023reflected} and lacks an operational interpretation, casting doubt on its overall relevance. 
Our main result suggests a change in focus, especially in settings where the reflected entropy lacks fundamental relevance. 
Our main result implies that taking the limit $n\rightarrow\infty,m\rightarrow 1$ instead yields a correlation measure that is natural from an information-theoretic point of view for general quantum states: 
the doubly minimized PRMI of order $1/2$. 

The implications of our work for the reflected entropy are less clear. 
We leave it as an open question to understand whether there is a deeper link between the reflected entropy and the min-reflected entropy in settings where the reflected entropy appears to have fundamental relevance, such as in holographic theories.

\begin{acknowledgments}
I would like to thank Renato Renner for his valuable feedback on a draft of this work. 
I also thank Giulia Mazzola and David Sutter for discussions. 
This work was supported by the Swiss National Science Foundation 
via project No.\ 20QU-1\_225171 
and the National Centre of Competence in Research SwissMAP, 
and the Quantum Center at ETH Zurich.
\end{acknowledgments}

\appendix
\section{Invariance under complex conjugation and swapping}\label{app:invariance}
In this section, we develop basic techniques for handling objects that remain invariant under the joint application of complex conjugation $(\mathcal{C}\equiv \, ^*)$ and swapping $(\mathcal{F})$. 
We define the notion of $\mathcal{CF}$-invariance for the following two types of objects.
\begin{defn}[$\mathcal{CF}$-invariance]
$\ket{\phi}\in AA^*$ is \emph{$\mathcal{CF}$-invariant} if 
$(\swapf\ket{\phi})^*=\ket{\phi}$. 
$X\in \mathcal{L}(AA^*)$ is \emph{$\mathcal{CF}$-invariant} if 
$(X^{\swapf})^*=X$.
\end{defn}
\begin{rem}[Notation for action of $\mathcal{CF}$]
To simplify the notation, we will occasionally write $X^{\swapf *}$ instead of $(X^{\swapf})^*$ for $X\in \mathcal{L}(AA^*)$.
\end{rem}

The following proposition lists basic properties of the joint application of complex conjugation and swapping. 
Since they follow directly from the definitions of the swap operator and complex conjugation in  Section~\ref{ssec:swap}, the proposition is presented without proof. 
The properties listed in Proposition~\ref{prop:basic} will be used extensively in the rest of Appendix~\ref{app:invariance}.

\begin{prop}[Properties of $\mathcal{CF}$]\label{prop:basic}
Let $\ket{\phi}\in AA^*,X\in \mathcal{L}(AA^*)$.
Then all of the following hold.
\begin{enumerate}[label=(\alph*)]
\item \emph{$\mathcal{C}$ and $\mathcal{F}$ commute:}
$(\swapf \ket{\phi})^*=\swapf (\ket{\phi}^*)$ 
and $(X^{\swapf})^*=(X^*)^{\swapf }$.
\item \emph{$\mathcal{CF}$ is an involution:}
$(\swapf (\swapf \ket{\phi})^*)^*=\ket{\phi}$ and 
$(X^{\swapf *})^{\swapf *}=X$.
\item \emph{$\mathcal{CF}$ is antilinear:} 
For all $\alpha,\beta\in \mathbb{C},\ket{\psi}\in AA^*,Y\in \mathcal{L}(AA^*)$
\begin{align}
(\swapf  (\alpha\ket{\phi}+\beta\ket{\psi}))^*
&=\alpha^*(\swapf  \ket{\phi})^*+\beta^*(\swapf  \ket{\psi})^*,
\\
(\swapf  (\alpha X+\beta Y))^*
&=\alpha^*(\swapf  X)^*+\beta^*(\swapf Y)^* .
\end{align}
\item \emph{$\mathcal{CF}$ preserves rank:} 
$\rank(X^{\swapf *})=\rank(X)$.
\item \emph{$\mathcal{CF}$ preserves positive semi-definiteness:}
If $X\geq 0$, then $X^{\swapf *}\geq 0$.
\item \emph{$\mathcal{CF}$-invariant unit vectors induce $\mathcal{CF}$-invariant projections:}
If $\ket{\phi}\in AA^*$ is a $\mathcal{CF}$-invariant unit vector, then $\proj{\phi}\in \Proj_1(AA^*)\subseteq \mathcal{L}(AA^*)$ is $\mathcal{CF}$-invariant.
\end{enumerate}
\end{prop}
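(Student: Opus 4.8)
The plan is to reduce every item to two elementary facts already recorded in Section~\ref{ssec:swap}: the swap operator is a conjugation-invariant involution, $\swapf^*=\swapf$ and $\swapf^2=1_{AA^*}$, and entrywise complex conjugation is antilinear and multiplicative, satisfying $(XY)^*=X^*Y^*$, $(\alpha X)^*=\alpha^* X^*$, and $(X^*)^*=X$ (with the analogous statements on vectors). Granting these, each claim becomes a short manipulation, which is precisely why the authors can state the proposition without proof.

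For (a) I would push the conjugation through the swap using multiplicativity and $\swapf^*=\swapf$: on vectors $(\swapf\ket{\phi})^*=\swapf^*\ket{\phi}^*=\swapf\ket{\phi}^*$, and on operators $(\swapf X\swapf)^*=\swapf^* X^*\swapf^*=\swapf X^*\swapf$. Item (b) then follows by iterating (a) and collapsing $\swapf^2=1_{AA^*}$ together with double conjugation: starting from the proposition's left-hand side, $(\swapf(\swapf\ket{\phi})^*)^*=(\swapf\swapf\ket{\phi}^*)^*=(\ket{\phi}^*)^*=\ket{\phi}$, and the operator identity $(X^{\swapf*})^{\swapf*}=X$ is the same computation with the two swaps acting from both sides. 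Item (c) is immediate because $\mathcal{CF}=\mathcal{C}\circ\mathcal{F}$ is the composition of the linear map $\mathcal{F}$ with the antilinear map $\mathcal{C}$, so it is antilinear and the scalars emerge conjugated.

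For (d) and (e) I would observe that each of the two factors composing $\mathcal{CF}$ preserves rank and positive semidefiniteness on its own. Complex conjugation does so because a spectral decomposition $X=\sum_i\lambda_i\proj{v_i}$ conjugates to $X^*=\sum_i\lambda_i\ket{v_i}^*\bra{v_i}^*$, keeping the same eigenvalues (real and nonnegative in the positive semidefinite case) while only conjugating the eigenvectors; conjugation by the unitary $\swapf$ is a unitary similarity and hence preserves both rank and positivity as well. The composition therefore inherits both properties.

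The only item demanding a little care is (f). Here I would first convert the $\mathcal{CF}$-invariance hypothesis on the vector into the handier form $\swapf\ket{\phi}^*=\ket{\phi}$ by combining $(\swapf\ket{\phi})^*=\ket{\phi}$ with (a). Then I would compute $(\proj{\phi})^{\swapf*}=\swapf(\proj{\phi})^*\swapf$, use that conjugation sends the rank-one projector $\proj{\phi}$ to the projector onto $\ket{\phi}^*$, and finally recognize $\swapf\ket{\phi}^*=\ket{\phi}$ by the identity just derived, so that $(\proj{\phi})^{\swapf*}=\proj{\phi}$. I do not expect any genuine obstacle in this proposition; the sole point to track carefully is the bookkeeping distinction between entrywise complex conjugation $X^*$ (for which $(XY)^*=X^*Y^*$) and the adjoint $X^\dagger$ (for which the order reverses), together with how conjugation acts on an outer product $\proj{\phi}$ versus on the ket $\ket{\phi}$ alone.
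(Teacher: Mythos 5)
The paper itself states this proposition without proof, remarking only that all items follow directly from the definitions of the swap operator and complex conjugation in Section~\ref{ssec:swap}; your reduction of everything to the three facts $\swapf^*=\swapf$, $\swapf^2=1_{AA^*}$, and the antilinearity/multiplicativity of entrywise conjugation ($(XY)^*=X^*Y^*$, $(X\ket{\phi})^*=X^*\ket{\phi}^*$) is precisely that intended route, and your treatments of (a), (b), (c), (e), and (f) are correct as written. In particular, your handling of (f) — converting the hypothesis to $\swapf\ket{\phi}^*=\ket{\phi}$ via (a) and then tracking how conjugation acts on the outer product — is exactly the computation needed.

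The one step that fails as written is the rank argument in (d). You justify that entrywise conjugation preserves rank by conjugating ``a spectral decomposition $X=\sum_i\lambda_i\proj{v_i}$,'' but (d) concerns an arbitrary $X\in\mathcal{L}(AA^*)$, and only \emph{normal} operators admit a decomposition into scalar multiples of orthogonal rank-one projections; a non-normal operator such as $\ketbra{0}{1}$ has no such decomposition (even a diagonalizable non-normal $X$ decomposes as $\sum_i\lambda_i\ketbra{v_i}{w_i}$ with a biorthogonal, not orthonormal, system). This is harmless in (e), where $X\geq 0$ is Hermitian and the spectral argument is valid, but for (d) the justification must be replaced. The repair is immediate: either conjugate a singular value decomposition $X=\sum_i s_i\ketbra{u_i}{v_i}$, which preserves the singular values and hence the rank, or observe that $X^*\ket{v}^*=(X\ket{v})^*$ exhibits $\im(X^*)$ as the image of $\im(X)$ under the antilinear bijection $\ket{v}\mapsto\ket{v}^*$, so $\rank(X^*)=\rank(X)$; combined with the unitary-similarity argument for the $\swapf$ factor (which is fine as you state it), this closes the gap and the whole proposal is correct.
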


\begin{prop}[$\mathcal{CF}$-invariant projections]\label{prop:inv-proj}
Let $r\in \{1,2,\dots, d_A^2\}$ and let $P\in \Proj_r(AA^*)$ be $\mathcal{CF}$-invariant. Then all of the following hold.
\begin{enumerate}[label=(\alph*)]
\item If $\ket{\phi}\in \im (P)$, then also $\ket{\psi}\coloneqq (\swapf \ket{\phi})^*\in \im (P)$ and $\brak{\psi}=\brak{\phi}$.
\item There exists a $\mathcal{CF}$-invariant unit vector $\ket{\psi}\in \im (P)$.
\item There exist $\mathcal{CF}$-invariant projections $P_j\in \Proj_1(AA^*)$ for $j\in [r]$ such that $P=\sum_{j\in [r]}P_j$ and $P_jP_k=\delta_{j,k}P_{j}$ for all $j,k\in [r]$.
\end{enumerate}
\end{prop}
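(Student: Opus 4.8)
The plan is to recognize the antilinear map $\Theta\ket{\phi}\coloneqq (\swapf\ket{\phi})^*$ as an antiunitary involution and to reduce the whole proposition to standard facts about a projection that commutes with such a $\Theta$. By Proposition~\ref{prop:basic}~(c) the map $\Theta$ is antilinear, by Proposition~\ref{prop:basic}~(b) it satisfies $\Theta^2=\mathrm{id}$, and it preserves norms since $\swapf$ is unitary and complex conjugation preserves norms; hence $\langle \Theta\phi|\Theta\psi\rangle=\langle\psi|\phi\rangle$, and a unit vector is $\mathcal{CF}$-invariant precisely when it is fixed by $\Theta$. The first substantive step is to rewrite $\mathcal{CF}$-invariance of $P$, namely $(P^{\swapf})^*=P$, as the operator identity $\swapf P^*\swapf=P$ (using $\swapf^*=\swapf$ and $\swapf^2=1_{AA^*}$), equivalently $P\swapf=\swapf P^*$. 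Using Proposition~\ref{prop:basic}~(a) to write $\Theta\ket{\phi}=\swapf\ket{\phi}^*$, this gives $P\Theta\ket{\phi}=P\swapf\ket{\phi}^*=\swapf P^*\ket{\phi}^*=\swapf(P\ket{\phi})^*=\Theta(P\ket{\phi})$ for every $\ket{\phi}$, i.e.\ $P\Theta=\Theta P$. This commutation is the structural fact that drives all three parts.

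Part~(a) is then immediate. If $\ket{\phi}\in\im(P)$ and $\ket{\psi}\coloneqq (\swapf\ket{\phi})^*=\Theta\ket{\phi}$, then $P\ket{\psi}=P\Theta\ket{\phi}=\Theta P\ket{\phi}=\Theta\ket{\phi}=\ket{\psi}$, so $\ket{\psi}\in\im(P)$, and $\brak{\psi}=\langle\Theta\phi|\Theta\phi\rangle=\brak{\phi}$ by antiunitarity. In particular $\Theta$ restricts to an antiunitary involution of the subspace $\im(P)$, which is what the remaining parts need.

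For part~(b) I would symmetrize. Choose any unit vector $\ket{\phi}\in\im(P)$ and form $\ket{\chi_1}\coloneqq \ket{\phi}+\Theta\ket{\phi}$ and $\ket{\chi_2}\coloneqq i(\ket{\phi}-\Theta\ket{\phi})$; both lie in $\im(P)$ by part~(a), and both are fixed by $\Theta$ using antilinearity together with $\Theta^2=\mathrm{id}$ (for $\ket{\chi_2}$ one uses $\Theta(i\ket{v})=-i\Theta\ket{v}$). The genuinely delicate point of the whole proposition — and the step I expect to be the main obstacle — is that the naive symmetrization $\ket{\chi_1}$ may vanish; this happens exactly when $\Theta\ket{\phi}=-\ket{\phi}$, in which case $\ket{\chi_2}=2i\ket{\phi}\neq 0$. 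Since $\ket{\chi_1}$ and $\ket{\chi_2}$ cannot both vanish (that would force $\ket{\phi}=-\ket{\phi}=0$), at least one is a nonzero $\mathcal{CF}$-invariant vector in $\im(P)$, which I normalize to obtain the claimed $\ket{\psi}$.

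Part~(c) follows by induction on $r$. The base case $r=1$ is part~(b): a rank-one projection whose image contains a $\mathcal{CF}$-invariant unit vector $\ket{\psi_0}$ must equal $\proj{\psi_0}$, which is $\mathcal{CF}$-invariant by Proposition~\ref{prop:basic}~(f). For the inductive step, use part~(b) to pick a $\mathcal{CF}$-invariant unit vector $\ket{\psi_0}\in\im(P)$ and set $P_0\coloneqq\proj{\psi_0}$; since $\ket{\psi_0}\in\im(P)$ one has $PP_0=P_0P=P_0$, so $P'\coloneqq P-P_0$ is again an orthogonal projection, now of rank $r-1$, with $P_0P'=P'P_0=0$. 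Moreover $P'$ is $\mathcal{CF}$-invariant because $P$ and $P_0$ both are (equivalently $\Theta P'\Theta=P-P_0=P'$). Applying the inductive hypothesis to $P'$ produces mutually orthogonal rank-one $\mathcal{CF}$-invariant projections $P_1,\dots,P_{r-1}$ with sum $P'$; each satisfies $P_jP'=P_j$, hence $P_0P_j=P_0P'P_j=0$ and $P_jP_0=0$, so $P_0,P_1,\dots,P_{r-1}$ form the desired orthogonal family with $P=\sum_{j\in[r]}P_j$ and $P_jP_k=\delta_{j,k}P_j$.
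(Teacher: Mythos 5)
Your proof is correct and takes essentially the same route as the paper's: part (a) is the same commutation computation, part (b) uses the identical symmetrization $\ket{\phi}+\Theta\ket{\phi}$ versus $i(\ket{\phi}-\Theta\ket{\phi})$ with the same analysis of the degenerate case $\Theta\ket{\phi}=-\ket{\phi}$, and part (c) is the same recursion peeling off one $\mathcal{CF}$-invariant rank-one projection at a time. Packaging the joint operation as an antiunitary involution $\Theta$ commuting with $P$ is a clean reformulation (and your explicit check that $P_0P_j=P_jP_0=0$ in the inductive step is slightly more careful than the paper's), but it is not a different argument.
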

\begin{proof}[Proof of (a).]
Let $\ket{\phi}\in \im(P)$. 
Then, $P\ket{\phi}=\ket{\phi}$. 
Let $\ket{\psi}\coloneqq (\swapf \ket{\phi})^*$. 
Then,
\begin{align}
P\ket{\psi}
=\swapf  P^{\swapf }(\ket{\phi})^*
&=\swapf  (P^{\swapf *}\ket{\phi})^*
=\swapf  (P\ket{\phi})^*
=\swapf  (\ket{\phi})^* 
=\ket{\psi},
\end{align}
so $\ket{\psi}\in \im (P)$. Moreover,
$\brak{\psi}=(\brak{\phi})^*=\brak{\phi}$.
\end{proof}
\begin{proof}[Proof of (b).]
Let $\ket{\phi}\in \im (P)\setminus \{ 0\}$. Let $\ket{\psi_1}\coloneqq \ket{\phi}+(\swapf \ket{\phi})^*$. 

\emph{Case 1: $\ket{\psi_1}\neq 0$.} 
Let $c\coloneqq \sqrt{\brak{\psi_1}}$ and $\ket{\psi}\coloneqq \frac{1}{c}\ket{\psi_1}$. 
Then $\brak{\psi}=\frac{1}{c^2}\brak{\psi_1}=1$, 
so $\ket{\psi}$ is a unit vector. 
(a) implies that $\ket{\psi}\in \im (P)$ because 
$\ket{\psi}$ is a linear combination of $\ket{\phi}\in\im (P)$ and $(\swapf \ket{\phi})^*$. 
$\ket{\psi}$ is $\mathcal{CF}$-invariant because
\begin{align}
(\swapf \ket{\psi})^*
=\frac{1}{c}(\swapf \ket{\phi}+\ket{\phi}^*)^*
=\frac{1}{c}((\swapf \ket{\phi})^*+\ket{\phi})
=\ket{\psi} .
\end{align}

\emph{Case 2: $\ket{\psi_1}= 0$.} 
Then $(\swapf \ket{\phi})^*=-\ket{\phi}$.
Let $\ket{\psi_2}\coloneqq i( \ket{\phi}-(\swapf \ket{\phi})^*)=2i\ket{\phi}$, 
$c\coloneqq \sqrt{\brak{\psi_2}}$, and 
$\ket{\psi}\coloneqq \frac{1}{c}\ket{\psi_2}$. 
Then  
$\brak{\psi}=\frac{1}{c^2}\brak{\psi_2}=1$, 
so $\ket{\psi}$ is a unit vector. 
$\ket{\psi}\in \im (P)$ since $\ket{\psi}$ is proportional to $\ket{\phi}\in \im (P)$. 
$\ket{\psi}$ is $\mathcal{CF}$-invariant because
\begin{align}
(\swapf \ket{\psi})^*
=\frac{-i}{c}(\swapf \ket{\phi}-\ket{\phi}^*)^*
=\frac{-i}{c}((\swapf \ket{\phi})^*-\ket{\phi})
=\ket{\psi} .
\end{align}
\end{proof}
\begin{proof}[Proof of (c).]
\emph{Case 1: $r=1$.} 
The assertion holds trivially in this case, as one can simply define $P_0\coloneqq P$.

\emph{Case 2: $r>1$.} 
Using~(b), let $\ket{\psi}\in \im(P)$ be a $\mathcal{CF}$-invariant unit vector.
Let $P_{r-1}\coloneqq \proj{\psi}$. 
$P_{r-1}$ is $\mathcal{CF}$-invariant.
Let $Q\coloneqq P-P_{r-1}$. Then, 
$Q^{\swapf *}= P^{\swapf *} - P_{r-1}^{\swapf *}
=P - P_{r-1}
=Q$. 
Hence, $Q\in \Proj_{r-1}(AA^*)$ is $\mathcal{CF}$-invariant. 
The argument provided reduces proving the assertion for $P\in \Proj_{r}(AA^*)$ to proving the analogous assertion for $Q\in \Proj_{r-1}(AA^*)$. 
By recursively repeating the same argument, case~2 eventually reduces to case~1, thereby proving the assertion.
\end{proof}

\begin{prop}[$\mathcal{CF}$-invariant rank-one projections]\label{prop:inv-proj1}
Let $P\in \Proj_1(AA^*)$ be $\mathcal{CF}$-invariant. Then all of the following hold.
\begin{enumerate}[label=(\alph*)]
\item There exists a $\mathcal{CF}$-invariant unit vector $\ket{\psi}\in AA^*$ such that 
$P=\proj{\psi}$.
\item If $\ket{\psi}\in AA^*$ is a $\mathcal{CF}$-invariant unit vector such that 
$P=\proj{\psi}$, then also 
$\ket{\phi}\coloneqq -\ket{\psi}$ is a $\mathcal{CF}$-invariant unit vector such that 
$P=\proj{\phi}$.
\item If $\ket{\psi},\ket{\phi}\in AA^*$ are $\mathcal{CF}$-invariant unit vectors such that 
$P=\proj{\psi}=\proj{\phi}$, then either $\ket{\psi}=\ket{\phi}$ or $\ket{\psi}=-\ket{\phi}$.
\end{enumerate}
\end{prop}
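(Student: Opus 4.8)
The plan is to dispose of (a) by a direct appeal to Proposition~\ref{prop:inv-proj}~(b), to reduce (b) to antilinearity of $\mathcal{CF}$, and to reserve the actual (still short) argument for (c). The only conceptual content lies in (c), where the antilinearity of the $\mathcal{CF}$ operation converts a residual phase into its complex conjugate and thereby forces it to be a sign.

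For (a), I would invoke Proposition~\ref{prop:inv-proj}~(b), which for the $\mathcal{CF}$-invariant projection $P\in \Proj_1(AA^*)$ supplies a $\mathcal{CF}$-invariant unit vector $\ket{\psi}\in \im(P)$. Since $P$ has rank one, $\im(P)$ is one-dimensional, so every unit vector it contains --- in particular $\ket{\psi}$ --- satisfies $\proj{\psi}=P$, which is exactly the claim. For (b), given a $\mathcal{CF}$-invariant unit vector $\ket{\psi}$ with $P=\proj{\psi}$, I would set $\ket{\phi}\coloneqq -\ket{\psi}$; this is again a unit vector and $\proj{\phi}=(-\ket{\psi})(-\bra{\psi})=\proj{\psi}=P$, while its $\mathcal{CF}$-invariance follows from Proposition~\ref{prop:basic}~(c) via $(\swapf\ket{\phi})^*=(-1)^*(\swapf\ket{\psi})^*=-\ket{\psi}=\ket{\phi}$.

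The heart of the matter is (c). Given two $\mathcal{CF}$-invariant unit vectors $\ket{\psi},\ket{\phi}$ with $\proj{\psi}=\proj{\phi}=P$, both lie in the one-dimensional space $\im(P)$, so $\ket{\phi}=e^{i\theta}\ket{\psi}$ for some $\theta\in [0,2\pi)$. The key step is to apply $\mathcal{CF}$ to this relation: using the antilinearity in Proposition~\ref{prop:basic}~(c) together with the $\mathcal{CF}$-invariance of both vectors,
\begin{align}
\ket{\phi}=(\swapf\ket{\phi})^*=(e^{i\theta})^*(\swapf\ket{\psi})^*=e^{-i\theta}\ket{\psi}.
\end{align}
Comparing $\ket{\phi}=e^{i\theta}\ket{\psi}$ with $\ket{\phi}=e^{-i\theta}\ket{\psi}$ and using $\ket{\psi}\neq 0$ forces $e^{i\theta}=e^{-i\theta}$, i.e.\ $e^{2i\theta}=1$, whence $e^{i\theta}\in \{+1,-1\}$ and therefore $\ket{\psi}=\ket{\phi}$ or $\ket{\psi}=-\ket{\phi}$.

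I expect no genuine obstacle: the entire proof is bookkeeping with the antilinearity already isolated in Proposition~\ref{prop:basic}. The one point worth flagging is that the $\mathcal{CF}$-invariance of \emph{both} vectors is precisely what pins the residual phase down to a sign --- without it, two unit vectors generating the same rank-one projection could differ by an arbitrary phase, and the conclusion of (c) would fail.
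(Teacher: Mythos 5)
Your proposal is correct and follows essentially the same route as the paper's proof: (a) via Proposition~\ref{prop:inv-proj}~(b) plus the rank-one observation, (b) by direct computation with the sign, and (c) by writing $\ket{\phi}=e^{i\theta}\ket{\psi}$ and using antilinearity of $\mathcal{CF}$ to force $e^{i\theta}=e^{-i\theta}$. No gaps; the reasoning matches the paper step for step.
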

\begin{proof}[Proof of (a).]
By Proposition~\ref{prop:inv-proj}~(b), there exists a $\mathcal{CF}$-invariant unit vector $\ket{\psi}\in \im (P)$. 
Since $P\in \Proj_1(AA^*)$, this implies that $P=\proj{\psi}$.
\end{proof}
\begin{proof}[Proof of (b).] 
Let $\ket{\psi}\in AA^*$ be a $\mathcal{CF}$-invariant unit vector such that 
$P=\proj{\psi}$. 
Let $\ket{\phi}\coloneqq -\ket{\psi}$. 
Since $\ket{\psi}$ is $\mathcal{CF}$-invariant, also $\ket{\phi}$ is $\mathcal{CF}$-invariant. 
Moreover, 
$\brak{\phi}=(-1)^2\brak{\psi}=\brak{\psi}=1$ 
and $\proj{\phi}=(-1)^2\proj{\psi}=\proj{\psi}=P$. 
\end{proof}
\begin{proof}[Proof of (c).]
Let $\ket{\psi},\ket{\phi}\in AA^*$ be $\mathcal{CF}$-invariant unit vectors such that $P=\proj{\psi}=\proj{\phi}$. 
The last equality implies that there exists $\theta \in [0,2\pi)$ such that
$\ket{\phi}=e^{i\theta}\ket{\psi}$. 
By $\mathcal{CF}$-invariance, we have 
$\ket{\phi}
=(\swapf \ket{\phi})^*
=e^{-i\theta}(\swapf \ket{\psi})^*
=e^{-i\theta}\ket{\psi}$. 
We can conclude that $e^{i\theta}=e^{-i\theta}$, which implies that either $\theta = 0$ or $\theta = \pi$. 
If $\theta = 0$, then $\ket{\phi}=\ket{\psi}$. 
If $\theta=\pi$, then $\ket{\phi}=-\ket{\psi}$.
\end{proof}

\begin{prop}[$\mathcal{CF}$-invariant unit vectors]\label{prop:canonical-invariant}
Let us define the following sets.
\begin{align}\label{eq:def-sunit}
\sunit
&\coloneqq \{\ket{\phi}\in AA^*:\brak{\phi}=1\}
\\
\sinvariant
&\coloneqq \{\ket{\psi}\in AA^*:\brak{\psi}=1,(\swapf \ket{\psi})^*=\ket{\psi}\}
\label{eq:def-sinvariant}
\\
\scanonical
&\coloneqq \{\ket{\chi}\in AA^*:\exists\, \sigma_A\in \mathcal{S}(A):\ket{\chi}=\sqrt{\sigma_A}\ket{\Omega}_{AA^*}\}
\label{eq:def-scanonical}
\end{align}
Then all of the following hold.
\begin{enumerate}[label=(\alph*)]
\item For any $\ket{\chi}\in AA^*:$ 
$\ket{\chi}\in \scanonical $ iff there exists an orthonormal basis $\{\ket{a_j}\}_{j\in [d_A]}$ for $A$ and $(p_j)_{j\in [d_A]}\in [0,1]^{\times d_A}$ such that $\sum_{j\in [d_A]}p_j=1$ and 
$\ket{\chi}=\sum_{j\in [d_A]}\sqrt{p_j}\ket{a_j}_{A}\otimes \ket{a_j}_{A^*}^*$.
\item For any $\ket{\psi}\in AA^*:$ 
$\ket{\psi}\in \sinvariant $ iff there exists an orthonormal basis $\{\ket{a_j}\}_{j\in [d_A]}$ for $A$ and $(s_j)_{j\in [d_A]}\in [-1,1]^{\times d_A}$ such that $\sum_{j\in [d_A]}s_j^2=1$ and 
$\ket{\psi}=\sum_{j\in [d_A]}s_j\ket{a_j}_{A}\otimes \ket{a_j}_{A^*}^*$.
\item $\sunit =\{\ket{\phi}\in AA^*:\exists\, \sigma_A\in \mathcal{S}(A),U_A\in \mathcal{U}(A):\ket{\phi}=\sqrt{\sigma_A} U_A\ket{\Omega}_{AA^*}\}$
\item $\sinvariant =\{\ket{\psi}\in AA^*:\exists\, \sigma_A\in \mathcal{S}(A),U_A\in \mathcal{U}(A):
\ket{\psi}=\sqrt{\sigma_A} U_A\ket{\Omega}_{AA^*},
[\sigma_A,U_A]=0,U_A^2=1_A\}$
\item $\scanonical \subseteq \sinvariant \subseteq \sunit $
\end{enumerate}
\end{prop}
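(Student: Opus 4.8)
The plan is to translate every statement about vectors $\ket{\phi}\in AA^*$ into a statement about their coefficient matrices relative to the computational basis, exploiting the basis-independence of $\ket{\Omega}$. I would first record two elementary facts. For any orthonormal basis $\{\ket{a_j}\}_{j\in [d_A]}$ of $A$ one has $\ket{\Omega}_{AA^*}=\sum_{j}\ket{a_j}_A\otimes\ket{a_j}_{A^*}^*$ (a one-line computation using completeness), and every $\ket{\phi}\in AA^*$ can be written as $\ket{\phi}=(X_A\otimes 1_{A^*})\ket{\Omega}_{AA^*}$, where $X_A=\sum_{l,m}\Phi_{lm}\ketbra{l}{m}_A$ collects the coefficients $\Phi_{lm}=(\bra{l}_A\otimes\bra{m}_{A^*})\ket{\phi}$; under this dictionary $\brak{\phi}=\tr[X_A^\dagger X_A]$. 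With these two facts, each part reduces to a standard matrix factorization.

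Parts (a) and (c) are then immediate. For (a), inserting the spectral decomposition $\sigma_A=\sum_j p_j\proj{a_j}_A$ into $\ket{\chi}=\sqrt{\sigma_A}\ket{\Omega}$ and using the basis identity reads off $\ket{\chi}=\sum_j\sqrt{p_j}\ket{a_j}_A\otimes\ket{a_j}_{A^*}^*$, and the converse just defines $\sigma_A=\sum_j p_j\proj{a_j}_A$. For (c), I would apply the left polar decomposition $X_A=\sqrt{\sigma_A}\,U_A$ with $\sigma_A\coloneqq X_AX_A^\dagger\geq 0$ and $U_A$ unitary (extending the partial isometry to a genuine unitary, which is possible since $d_A=d_{A^*}$); then $\ket{\phi}=\sqrt{\sigma_A}U_A\ket{\Omega}$ and $\tr[\sigma_A]=\brak{\phi}=1$, while the reverse inclusion is the norm computation $\|\sqrt{\sigma_A}U_A\ket{\Omega}\|^2=\bra{\Omega}U_A^\dagger\sigma_A U_A\ket{\Omega}=\tr[\sigma_A]=1$.

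The technical heart is part (b), and this is the step I expect to require the most care. The key claim is that $\ket{\phi}$ is $\mathcal{CF}$-invariant iff its coefficient matrix $\Phi$ is Hermitian. To see this I would write $(\swapf\ket{\phi})^*$ out in components, tracking how the swap exchanges the two tensor factors and how complex conjugation acts on each, and check that its coefficient matrix is exactly $\Phi^\dagger$; hence $(\swapf\ket{\phi})^*=\ket{\phi}$ is equivalent to $\Phi=\Phi^\dagger$. The bookkeeping of the combined action of $\mathcal{F}$ and $\mathcal{C}$ on the factors is where mistakes are easy, so I would carry it out carefully and cross-check using the antilinearity and involution properties in Proposition~\ref{prop:basic}. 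Once this dictionary is in place, a Hermitian $\Phi$ has a real spectral decomposition $\Phi=\sum_j s_j\proj{a_j}$ with $s_j\in\mathbb{R}$ and $\{\ket{a_j}\}$ orthonormal; substituting into the operator-vector correspondence gives $\ket{\psi}=\sum_j s_j\ket{a_j}_A\otimes\ket{a_j}_{A^*}^*$, and normalization forces $\sum_j s_j^2=1$, so $s_j\in[-1,1]$. The converse direction checks that any such vector has a real-diagonal (hence Hermitian) coefficient matrix, so it is $\mathcal{CF}$-invariant and of unit norm.

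Finally, (d) and (e) follow by assembling the previous parts. For (d), given $\ket{\psi}\in\sinvariant$ I would take the real form from (b) and set $\sigma_A\coloneqq\sum_j s_j^2\proj{a_j}_A$ and $U_A\coloneqq\sum_j\mathrm{sgn}(s_j)\proj{a_j}_A$ (choosing an arbitrary sign where $s_j=0$); then $[\sigma_A,U_A]=0$, $U_A^2=1_A$, and $\sqrt{\sigma_A}U_A\ket{\Omega}=\ket{\psi}$, while the reverse inclusion uses that commuting $\sigma_A,U_A$ with $U_A^2=1_A$ are simultaneously diagonalizable with $U_A$-eigenvalues $\pm 1$, which puts $\sqrt{\sigma_A}U_A\ket{\Omega}$ into the real form of (b). Part (e) is then purely a comparison of the explicit forms: the canonical purifications of (a) are exactly those with $s_j=\sqrt{p_j}\geq 0$, a special case of the $s_j\in[-1,1]$ permitted in (b), giving $\scanonical\subseteq\sinvariant$, and $\sinvariant\subseteq\sunit$ holds by definition; alternatively this is Proposition~\ref{prop:invariant}~(f) specialized to unit vectors.
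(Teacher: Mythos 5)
Your proposal is correct and follows essentially the same route as the paper: the crux, part (b), is in both cases the observation that $\mathcal{CF}$-invariance of a unit vector is equivalent to Hermiticity of its coefficient matrix in the computational basis, followed by the spectral theorem to produce the real-coefficient form $\sum_j s_j\ket{a_j}_A\otimes\ket{a_j}_{A^*}^*$. The remaining differences are cosmetic and equally valid: you prove (c) by polar decomposition of the coefficient operator where the paper invokes unitary equivalence of purifications of $\sigma_A\coloneqq\tr_{A^*}[\proj{\phi}]$, you handle the converse of (d) by simultaneous diagonalization of the commuting Hermitian pair $(\sigma_A,U_A)$ where the paper verifies $(\mathcal{F}\ket{\psi})^*=\ket{\psi}$ by direct computation, and you obtain (e) by comparing the normal forms from (a) and (b) rather than specializing (d) to $U_A=1_A$.
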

\begin{proof}[Proof of~(a)]
Let $\ket{\chi}\in \scanonical$. 
By~\eqref{eq:def-scanonical}, there exists $\sigma_A\in \mathcal{S}(A)$ such that $\ket{\chi}=\sqrt{\sigma_A}\ket{\Omega}_{AA^*}$. 
The spectral theorem implies that there exists an orthonormal basis $\{\ket{a_j}\}_{j\in [d_A]}$ for $A$ and $(p_j)_{j\in [d_A]}\in [0,1]^{\times d_A}$ such that $\sigma_A=\sum_{j\in [d_A]}p_j\proj{a_j}$. 
Substituting this expression for $\sigma_A$ yields 
$\ket{\chi}=\sum_{j\in [d_A]}\sqrt{p_j}\ket{a_j}_{A}\otimes \ket{a_j}_{A^*}^*$. 
Moreover, $\sum_{j\in [d_A]}p_j=\tr[\sigma_A]=1$.

Conversely, let $\ket{\chi}\in AA^*$ be such that there exist an orthonormal basis $\{\ket{a_j}\}_{j\in [d_A]}$ for $A$ and $(p_j)_{j\in [d_A]}\in [0,1]^{\times d_A}$ such that 
$\sum_{j\in [d_A]}p_j=1$ 
and $\ket{\chi}=\sum_{j\in [d_A]}\sqrt{p_j}\ket{a_j}_{A}\otimes \ket{a_j}_{A^*}^*$. 
Let $\sigma_A\coloneqq \sum_{j\in [d_A]}p_j\proj{a_j}$. 
Then, $\sigma_A\in \mathcal{S}(A)$ and $\ket{\chi}=\sqrt{\sigma_A}\ket{\Omega}_{AA^*}$.
\end{proof}
\begin{proof}[Proof of~(b)]
Let $\ket{\psi}\in \sinvariant $.
By~\eqref{eq:def-sinvariant}, $\ket{\psi}$ is a unit vector such that $(\swapf \ket{\psi})^*=\ket{\psi}$.  
Let $C_{j,k}\coloneqq \bra{j}_A\otimes\bra{k}_{A^*}\ket{\psi}$ for all $j,k\in [d_A]$. 
Then, for all $j,k\in [d_A]$
\begin{align}
C_{j,k}
=\bra{j}_A\otimes\bra{k}_{A^*}(\swapf \ket{\psi})^*
=(\bra{j}_A\otimes\bra{k}_{A^*}\swapf \ket{\psi})^*
=(\bra{k}_A\otimes\bra{j}_{A^*}\ket{\psi})^*
=C_{k,j}^* .
\label{eq:c-ij}
\end{align}
Let us define the $(d_A\times d_A)$-matrix $C\coloneqq (C_{j,k})_{j\in [d_A],k\in [d_A]}$.
\eqref{eq:c-ij} implies that $C^\dagger =C$. 
Hence, there exist $(d_A\times d_A)$-matrices $U\equiv (U_{j,k})_{j\in [d_A],k\in [d_A]} , D\equiv (D_{j,k})_{j\in [d_A],k\in [d_A]}$ such that $C=UDU^\dagger$, $U$ is unitary, and $D_{j,k}=\delta_{j,k}s_j$ for $s_j\in \mathbb{R}$.  
Let $\ket{a_k}_A\coloneqq \sum_{j\in [d_A]} U_{j,k}\ket{j}_A$ for all $k\in [d_A]$. 
$\{ \ket{a_j} \}_{j\in [d_A]}$ is an orthonormal basis for $A$ because $U$ is unitary. 
Moreover, we have 
\begin{align}
\ket{\psi}
&=\sum_{j,k\in [d_A]}C_{j,k}\ket{j}_A\otimes\ket{k}_{A^*}
=\sum_{j,k,l\in [d_A]}U_{j,l}s_l U_{k,l}^*\ket{j}_A\otimes\ket{k}_{A^*}\\
&=\sum_{l\in [d_A]}s_l \left(\sum_{j\in [d_A]}U_{j,l}\ket{j}_A\right) \otimes \left(\sum_{k\in [d_A]}U_{k,l}\ket{k}_{A^*}\right)^*
=\sum_{l\in [d_A]}s_l\ket{a_l}_{A}\otimes \ket{a_l}_{A^*}^*.
\end{align}
Since $\ket{\psi}$ is a unit vector, 
$1=\brak{\psi}=\sum_{j,k\in [d_A]}\lvert C_{j,k}\rvert^2=\sum_{l\in [d_A]}s_l^2$. 
This implies that $s_j\in [-1,1]$ for all $j\in [d_A]$.

Conversely, let $\ket{\psi}\in AA^*$ be such that there exist an orthonormal basis $\{\ket{a_j}\}_{j\in [d_A]}$ for $A$ and $(s_j)_{j\in [d_A]}\in [-1,1]^{\times d_A}$ such that $\sum_{j\in [d_A]}s_j^2=1$ and 
$\ket{\psi}=\sum_{j\in [d_A]}s_j\ket{a_j}_{A}\otimes \ket{a_j}_{A^*}^*$. 
Then, $\brak{\psi}=\sum_{j\in [d_A]}s_j^2=1$ and 
\begin{align}
(\swapf \ket{\psi})^*
=\sum_{j\in [d_A]}s_j(\swapf \ket{a_j}_{A}\otimes \ket{a_j}_{A^*}^*)^*
=\sum_{j\in [d_A]}s_j \ket{a_j}_{A}\otimes \ket{a_j}_{A^*}^*
=\ket{\psi} .
\end{align}
\end{proof}
\begin{proof}[Proof of~(c)]
Let $\ket{\phi}\in \sunit $. 
Let $\sigma_A\coloneqq \tr_{A^*}[\proj{\phi}]\in \mathcal{S}(A)$. 
On the one hand, $\ket{\phi}$ is a purification of $\sigma_A$.
On the other hand, also $\sqrt{\sigma_A}\ket{\Omega}_{AA^*}$ is a purification of $\sigma_A$.
Hence, there exists $V_{A^*}\in \mathcal{U}(A^*)$ such that $\ket{\phi}=\sqrt{\sigma_A}\otimes V_{A^*}\ket{\Omega}_{AA^*}$. 
Hence, $\ket{\phi}=\sqrt{\sigma_A}U_A\ket{\Omega}_{AA^*}$ for $U_A\coloneqq (V_A^\dagger)^*\in \mathcal{U}(A)$.

Conversely, let $\sigma_A\in \mathcal{S}(A),U_A\in \mathcal{U}(A)$, and $\ket{\phi}\coloneqq\sqrt{\sigma_A} U_A\ket{\Omega}_{AA^*}$. 
Then
\begin{equation}
\brak{\phi}=\bra{\Omega}_{AA^*}U_A^\dagger\sqrt{\sigma_A} \sqrt{\sigma_A} U_A\ket{\Omega}_{AA^*}
=\tr[U_A^\dagger \sigma_A U_A]
=\tr[\sigma_A]=1 .
\end{equation}
\end{proof}
\begin{proof}[Proof of~(d)] 
Let $\ket{\psi}\in \sinvariant $. 
By~(b), there exist an orthonormal basis $\{\ket{a_j}\}_{j\in [d_A]}$ for $A$ and $(s_j)_{j\in [d_A]}\in [-1,1]^{\times d_A}$ such that $\sum_{j\in [d_A]}s_j^2=1$ and 
$\ket{\psi}=\sum_{j\in [d_A]}s_j \ket{a_j}_{A}\otimes \ket{a_j}_{A^*}^*$. 
Let $\sigma_A\coloneqq \tr_{A^*}[\proj{\psi}]=\sum_{j\in [d_A]}s_j^2 \proj{a_j}$.
Let $u_j\coloneqq +1$ if $s_j\geq 0$ and $u_j\coloneqq -1$ if $s_j<0$ for all $j\in [d_A]$,
and let $U_A\coloneqq \sum_{j\in [d_A]}u_j \proj{a_j}_{A}\in \mathcal{U}(A)$. 
Then, $[\sigma_A,U_A]=0$ and $U_A^2=1_A$. Moreover, 
\begin{align}
\sqrt{\sigma_A} U_A\ket{\Omega}_{AA^*}
=\sum_{j\in [d_A]}\lvert s_j\rvert\cdot u_j \ket{a_j}_{A}\otimes \ket{a_j}_{A^*}^*
=\sum_{j\in [d_A]}s_j\ket{a_j}_{A}\otimes \ket{a_j}_{A^*}^*
=\ket{\psi}.
\end{align}

Conversely, let $\sigma_A\in \mathcal{S}(A),U_A\in \mathcal{U}(A)$ be such that $[\sigma_A,U_A]=0,U_A^2=1_A$, and let $\ket{\psi}\coloneqq \sqrt{\sigma_A} U_A\ket{\Omega}_{AA^*}$. 
Since $U_A$ is unitary and satisfies $U_A^2=1_A$, it follows that $U_A^\dagger=U_A$.  
Moreover, 
\begin{align}
(\swapf \ket{\psi})^*
&=\sum_{j,k,l\in [d_A]}\ket{j}_A\otimes\ket{k}_{A^*} (\bra{k}_A\otimes\bra{j}_{A^*}\sqrt{\sigma_A}U_A\ket{l}_{A}\otimes\ket{l}_{A^*})^*
\\
&=\sum_{j,k\in [d_A]}\ket{j}_A\otimes\ket{k}_{A^*} (\bra{k}\sqrt{\sigma_A}U_A\ket{j}_A)^*
\\
&=\sum_{j,k\in [d_A]}\ket{j}_A\otimes\ket{k}_{A^*} \bra{j}(\sqrt{\sigma_A}U_A)^\dagger\ket{k}_A
\\
&=(\sqrt{\sigma_A} U_A)^\dagger\ket{\Omega}_{AA^*}
=U_A^\dagger\sqrt{\sigma_A}^\dagger \ket{\Omega}_{AA^*}
\\
&= U_A\sqrt{\sigma_A}\ket{\Omega}_{AA^*}
=\sqrt{\sigma_A}U_A\ket{\Omega}_{AA^*}
=\ket{\psi},
\end{align}
and $\brak{\psi}=\tr[U_A^\dagger \sigma_A U_A]
=\tr[\sigma_A]=1$. 
\end{proof}
\begin{proof}[Proof of (e)] 
The inclusion $\scanonical \subseteq \sinvariant $ follows from~\eqref{eq:def-scanonical} and~(d) by choosing $U_A\coloneqq 1_A$. 
The inclusion $\sinvariant \subseteq \sunit $ follows from the definition of these sets in~\eqref{eq:def-sunit} and~\eqref{eq:def-sinvariant}.
\end{proof}

\begin{prop}[Spectral decomposition for $\mathcal{CF}$-invariant self-adjoint operators]\label{prop:inv-spectral}
Let $X\in \mathcal{L}(AA^*)$ be self-adjoint and $\mathcal{CF}$-invariant.
Then there exist an orthonormal basis $\{\ket{e_j}\}_{j\in [d_A^2]}$ for $AA^*$ whose elements are $\mathcal{CF}$-invariant and $(x_j)_{j\in [d_A]}\in \mathbb{R}^{\times d_A}$ such that 
$X=\sum_{j\in [d_A^2]}x_j \proj{e_j}$.
\end{prop}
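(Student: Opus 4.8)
The plan is to start from the ordinary spectral decomposition of the self-adjoint operator $X$ and then upgrade each eigenprojection to a $\mathcal{CF}$-invariant one, at which point the earlier structural results on $\mathcal{CF}$-invariant projections finish the job. Concretely, write $X=\sum_{\lambda}\lambda P_\lambda$, where the sum runs over the distinct (real) eigenvalues $\lambda$ of $X$, each $P_\lambda$ is the orthogonal projection onto the corresponding eigenspace, the $P_\lambda$ are mutually orthogonal, and $\sum_\lambda P_\lambda=1_{AA^*}$.

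The crucial step is to show that every spectral projection $P_\lambda$ is itself $\mathcal{CF}$-invariant. First I would record that the map $\mathcal{CF}\colon Y\mapsto Y^{\swapf*}$ is multiplicative, i.e., $(XY)^{\swapf*}=X^{\swapf*}Y^{\swapf*}$; this follows because $\swapf^2=1_{AA^*}$ gives $(XY)^{\swapf}=X^{\swapf}Y^{\swapf}$ and complex conjugation with respect to the computational basis is multiplicative. Combined with the antilinearity of $\mathcal{CF}$ (Proposition~\ref{prop:basic}~(c)), this implies that $\mathcal{CF}$ fixes every operator of the form $q(X)$ with $q$ a polynomial with real coefficients, since $\mathcal{CF}(X)=X$ by assumption. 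Each spectral projection can be written as $P_\lambda=q_\lambda(X)$ for a suitable real interpolating polynomial $q_\lambda$ (Lagrange interpolation through the real spectrum of $X$), so $P_\lambda^{\swapf*}=q_\lambda(X)^{\swapf*}=q_\lambda(X)=P_\lambda$; that is, $P_\lambda$ is $\mathcal{CF}$-invariant. Alternatively, one can argue directly that $\mathcal{CF}$ maps each eigenspace of $X$ to itself, using that $X\ket{\phi}=\lambda\ket{\phi}$ with $\lambda\in\mathbb{R}$ forces $X(\swapf\ket{\phi})^*=\lambda(\swapf\ket{\phi})^*$, and that $\mathcal{CF}$ is antiunitary and hence preserves orthogonal complements; this yields $P_\lambda^{\swapf*}=P_\lambda$ as well.

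Once each $P_\lambda$ is known to be a $\mathcal{CF}$-invariant projection, the remaining work is purely assembly. Applying Proposition~\ref{prop:inv-proj}~(c) to $P_\lambda$ produces mutually orthogonal $\mathcal{CF}$-invariant rank-one projections $P_{\lambda,1},\dots,P_{\lambda,r_\lambda}$ with $P_\lambda=\sum_i P_{\lambda,i}$, where $r_\lambda=\rank(P_\lambda)$. By Proposition~\ref{prop:inv-proj1}~(a), each $P_{\lambda,i}$ equals $\proj{e}$ for some $\mathcal{CF}$-invariant unit vector. Collecting these unit vectors over all eigenvalues $\lambda$ yields a family of $\sum_\lambda r_\lambda=d_A^2$ vectors that are orthonormal (orthonormality within a fixed eigenspace holds because the $P_{\lambda,i}$ are mutually orthogonal rank-one projections, and across distinct eigenvalues because eigenspaces of the self-adjoint operator $X$ are orthogonal), so they form an orthonormal basis $\{\ket{e_j}\}_{j\in[d_A^2]}$ of $AA^*$ consisting of $\mathcal{CF}$-invariant vectors. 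Relabelling the eigenvalues as $x_j$ (the eigenvalue whose eigenspace contains $\ket{e_j}$) gives $X=\sum_{\lambda}\lambda P_\lambda=\sum_{j\in[d_A^2]}x_j\proj{e_j}$, as required.

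The only genuinely non-routine point is the $\mathcal{CF}$-invariance of the spectral projections; everything after that is a direct application of Propositions~\ref{prop:inv-proj} and~\ref{prop:inv-proj1}. I therefore expect the main obstacle to be justifying that $\mathcal{CF}$ commutes with the passage to spectral projections, which is precisely why I would isolate the multiplicativity of $\mathcal{CF}$ (not among the properties listed in Proposition~\ref{prop:basic}) as the one small additional fact to establish before invoking the real-polynomial representation of $P_\lambda$.
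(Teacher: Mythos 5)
Your proposal is correct, and its overall skeleton coincides with the paper's proof: both reduce the statement to showing that the spectral projections of $X$ are $\mathcal{CF}$-invariant, and then both perform the identical assembly step, refining each invariant eigenprojection into mutually orthogonal $\mathcal{CF}$-invariant rank-one projections via Proposition~\ref{prop:inv-proj}~(c) and Proposition~\ref{prop:inv-proj1}~(a). Where you genuinely diverge is in how the key intermediate claim is justified. The paper applies $\mathcal{CF}$ to the entire decomposition $X=\sum_j \tilde{x}_jP_j$, checks directly that the transformed operators $Q_j\coloneqq P_j^{\swapf *}$ are again mutually orthogonal projections summing to $1_{AA^*}$, and then invokes the \emph{uniqueness} of the spectral decomposition into eigenspaces to conclude $P_j=Q_j$. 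You instead argue either (i) via functional calculus: $\mathcal{CF}$ is multiplicative and antilinear, hence fixes $q(X)$ for every real polynomial $q$, and each $P_\lambda$ is a real Lagrange interpolation polynomial in $X$ (real precisely because $\spec(X)\subseteq\mathbb{R}$); or (ii) via the direct observation that $X(\swapf\ket{\phi})^*=\lambda(\swapf\ket{\phi})^*$ for real $\lambda$, so the antiunitary involution $\mathcal{CF}$ maps each eigenspace onto itself. Both of your variants are sound; note, however, that the multiplicativity of $\mathcal{CF}$, which you correctly flag as absent from Proposition~\ref{prop:basic}, is also used implicitly by the paper (in verifying $Q_jQ_k=\delta_{j,k}Q_j$), so it is a shared ingredient rather than extra overhead of your route. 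What your polynomial argument buys is independence from the uniqueness theorem for spectral decompositions; what the paper's argument buys is that it never needs to exhibit the projections as functions of $X$. Your alternative (ii) is arguably the most economical of the three, since it needs neither uniqueness nor interpolation, only that antiunitaries preserve orthogonal decompositions.
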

\begin{proof}
Let $n$ be the number of different eigenvalues of $X$, i.e., $n\coloneqq \lvert \spec(X)\rvert$. 
Let the eigenvalues of $X$ be denoted by $\tilde{x}_j\in \mathbb{R}$ for $j\in [n]$. 
For each $j\in [n]$, let $P_j\in \Proj(AA^*)$ be the orthogonal projection onto the eigenspace associated with the eigenvalue $\tilde{x}_j$. Then, 
\begin{equation}\label{eq:x-xi-pi}
X=\sum_{j\in [n]}\tilde{x}_jP_j\,,
\end{equation}
$\sum_{j\in [n]}P_j=1_{AA^*}$, and $P_jP_k=\delta_{j,k}P_j$ for all $j,k\in [n]$.
Since $X$ is $\mathcal{CF}$-invariant, 
\begin{equation}\label{eq:x-xi-qi}
X=X^{\swapf *}=\sum_{j\in [n]}\tilde{x}_jP_j^{\swapf *}=\sum_{j\in [n]}\tilde{x}_jQ_j,
\end{equation}
where we defined $Q_j\coloneqq P_j^{\swapf *}$ for all $j\in [n]$. 
Then, $Q_j^\dagger =Q_j$ for all $j\in [n]$, and
\begin{equation}
Q_jQ_k=\swapf P_j^*P_k^*\swapf 
=\swapf \delta_{j,k}P_j^*\swapf 
=\delta_{j,k}P_j^{\swapf *}
=\delta_{j,k}Q_j
\end{equation}
for all $j,k\in [n]$. 
Hence, $Q_j\in \Proj(AA^*)$ for all $j\in [n]$.
Moreover, $\sum_{j\in [n]}Q_j=(\sum_{j\in [n]}P_j)^{\swapf *}=(1_{AA^*})^{\swapf *}=1_{AA^*}$.
Hence, the expression on the far right in~\eqref{eq:x-xi-qi} is a spectral decomposition of $X$.
Since spectral decompositions into eigenspaces are unique, it follows from~\eqref{eq:x-xi-pi} and~\eqref{eq:x-xi-qi} that $P_j=Q_j$ for all $j\in [n]$. 
Since $Q_j= P_j^{\swapf *}$, this implies that $P_j$ is $\mathcal{CF}$-invariant for all $j\in [n]$.

Let $r_j\coloneqq \rank (P_j)$ for all $j\in [n]$.
By Proposition~\ref{prop:inv-proj}~(c) and Proposition~\ref{prop:inv-proj1}~(a), it follows that for each $j\in [n]$, there exists a collection of $\mathcal{CF}$-invariant unit vectors $\ket{b_{j,k}}\in AA^*$ for $k\in [r_j]$ such that $P_j=\sum_{k\in [r_j]}\proj{b_{j,k}}$ and $\braket{b_{j,k}}{b_{j,l}}=\delta_{k,l}$ for all $k,l\in [r_j]$. 
Since the projections $P_j$ are mutually orthogonal and span the full Hilbert space, it follows that $\cup_{j\in [n]}\{\ket{b_{j,k}}\}_{k\in [r_j]}$ is an orthonormal basis for $AA^*$. 
By~\eqref{eq:x-xi-pi}, 
$X=\sum_{j\in [n]}\sum_{k\in [r_j]}\tilde{x}_j\proj{b_{j,k}}$.
\end{proof}

\begin{prop}[Optimization problems for $\mathcal{CF}$-invariant self-adjoint operators]\label{prop:inv-self-adjoint}
Let $X\in \mathcal{L}(AA^*)$ be self-adjoint and $\mathcal{CF}$-invariant. 
Then all of the following hold.
\begin{enumerate}[label=(\alph*)]
\item \emph{Optimization for general rank:} 
Let $r\in \{1,2,\dots, d_A^2\}$. Then
\begin{align}\label{eq:max-xp-xq}
\max_{P\in \Proj_r(AA^*)} \tr[XP]
&=\max_{\substack{Q\in \Proj_r(AA^*):\\ Q^{\swapf *}=Q}} \tr[XQ] .
\end{align}
\item \emph{Optimization for rank-one:} 
\begin{align}\label{eq:max-phi-psi}
\max_{\ket{\phi}\in AA^*:\brak{\phi}=1} \tr[X\proj{\phi}]
&=\max_{\substack{\ket{\psi}\in AA^*:\brak{\psi}=1,\\ (\swapf \ket{\psi})^*=\ket{\psi} }} \tr[X\proj{\psi}] .
\end{align}
\item \emph{Characterization of optimizers for~\eqref{eq:max-phi-psi}:}
Let $\ket{\phi}$ be an optimizer for the left-hand side of~\eqref{eq:max-phi-psi}.
Let 
\begin{equation}\label{eq:def-saa}
S\coloneqq \frac{1}{2}\proj{\phi}+\frac{1}{2}(\proj{\phi})^{\swapf *} .
\end{equation}
Then $S\geq 0$.

If the largest eigenvalue of $S$ is non-degenerate:
Let $\ket{e}\in AA^*$ be a unit eigenvector of $S$ associated with the largest eigenvalue of $S$.
Then there exists $\theta\in [0,2\pi)$ such that 
$\ket{\psi}\coloneqq e^{i\theta}\ket{e}$ 
is an optimizer for the right-hand side of~\eqref{eq:max-phi-psi}.

Else: Let $\ket{\psi}\coloneqq \frac{1}{\sqrt{2}}(\ket{\phi}+(\swapf \ket{\phi})^*)$.
Then $\ket{\psi}$
is an optimizer for the right-hand side of~\eqref{eq:max-phi-psi}.
\end{enumerate}
\end{prop}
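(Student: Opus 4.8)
The plan is to reduce parts (a) and (b) to the $\mathcal{CF}$-invariant spectral decomposition of $X$ from Proposition~\ref{prop:inv-spectral}, and to treat (c) as the only part needing genuinely new arguments.

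For part (a), the inequality ``$\geq$'' in~\eqref{eq:max-xp-xq} is automatic, since every $\mathcal{CF}$-invariant rank-$r$ projection is in particular a rank-$r$ projection. For ``$\leq$'', I would apply Proposition~\ref{prop:inv-spectral} to write $X=\sum_{j\in[d_A^2]}x_j\proj{e_j}$ in an orthonormal $\mathcal{CF}$-invariant eigenbasis with $x_0\geq x_1\geq\cdots$. The left-hand maximum equals the sum $\sum_{j\in[r]}x_j$ of the $r$ largest eigenvalues, attained by $Q\coloneqq\sum_{j\in[r]}\proj{e_j}$. Each $\proj{e_j}$ is $\mathcal{CF}$-invariant by Proposition~\ref{prop:basic}(f), and finite real-coefficient sums of $\mathcal{CF}$-invariant operators are $\mathcal{CF}$-invariant by the antilinearity in Proposition~\ref{prop:basic}(c); hence $Q^{\swapf*}=Q$ is feasible for the right-hand side and attains $\sum_{j\in[r]}x_j$, giving equality. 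Part (b) is then the case $r=1$ of (a): a $\mathcal{CF}$-invariant unit vector $\ket{\psi}$ induces a $\mathcal{CF}$-invariant rank-one projection $\proj{\psi}$ (Proposition~\ref{prop:basic}(f)), and conversely every $\mathcal{CF}$-invariant rank-one projection is $\proj{\psi}$ for some $\mathcal{CF}$-invariant unit vector $\ket{\psi}$ (Proposition~\ref{prop:inv-proj1}(a)), so the feasible sets in~\eqref{eq:max-phi-psi} and in~\eqref{eq:max-xp-xq} with $r=1$ coincide.

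For part (c), write $\ket{\tilde\phi}\coloneqq(\swapf\ket{\phi})^*$, so that $(\proj{\phi})^{\swapf*}=\proj{\tilde\phi}$ and $S=\tfrac12\proj{\phi}+\tfrac12\proj{\tilde\phi}$. Then $S\geq0$ follows from Proposition~\ref{prop:basic}(e), $S$ is $\mathcal{CF}$-invariant by the involution property Proposition~\ref{prop:basic}(b), and $\tr[S]=1$. The crucial identity is $\tr[X\proj{\tilde\phi}]=\tr[X\proj{\phi}]$: using $(\proj{\phi})^{\swapf*}=\swapf(\proj{\phi})^*\swapf$, cyclicity of the trace, and the relation $X^{\swapf}=X^*$ equivalent to $\mathcal{CF}$-invariance of $X$, one obtains $\tr[X\proj{\tilde\phi}]=\overline{\tr[X\proj{\phi}]}$, which is real because $X$ and $\proj{\phi}$ are Hermitian. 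Hence $\tr[XS]=\tr[X\proj{\phi}]=v$, the optimal value of the left-hand side of~\eqref{eq:max-phi-psi}, since $\ket{\phi}$ is an optimizer. Decomposing $S=\sum_k\mu_k\proj{f_k}$ spectrally and using $\bra{f_k}X\ket{f_k}\leq v$ together with $\sum_k\mu_k=1$ forces $\bra{f_k}X\ket{f_k}=v$ for every $k$ with $\mu_k>0$.

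The two cases then follow. If the largest eigenvalue of $S$ is non-degenerate, its spectral projection is a $\mathcal{CF}$-invariant rank-one projection (eigenspaces of a $\mathcal{CF}$-invariant self-adjoint operator are $\mathcal{CF}$-invariant, as in the proof of Proposition~\ref{prop:inv-spectral}), so by Proposition~\ref{prop:inv-proj1}(a) it equals $\proj{\psi}$ for a $\mathcal{CF}$-invariant unit vector $\ket{\psi}=e^{i\theta}\ket{e}$; the averaging conclusion applied to the top eigenvector (which has $\mu>0$) gives $\tr[X\proj{\psi}]=v$. In the degenerate case I would use the $\mathcal{CF}$-invariant/anti-invariant splitting $\ket{\phi_\pm}\coloneqq\tfrac12(\ket{\phi}\pm\ket{\tilde\phi})$, for which $S=\proj{\phi_+}+\proj{\phi_-}$ (the cross terms cancel). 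The main obstacle is exactly here: the symmetrization $\ket{\phi}+\ket{\tilde\phi}$ can vanish or be mis-normalized in general, which is why the case split is needed. The resolution is that a degenerate largest eigenvalue of $S$ (a rank-$\leq2$ operator of trace $1$) forces both nonzero eigenvalues to equal $\tfrac12$, which is equivalent to $\braket{\phi}{\tilde\phi}=0$; this orthogonality yields $\|\phi_+\|=\|\phi_-\|=1/\sqrt2$ and $\ket{\phi_+}\perp\ket{\phi_-}$, so that $\ket{\psi}=\tfrac{1}{\sqrt2}(\ket{\phi}+\ket{\tilde\phi})=\sqrt2\ket{\phi_+}$ is a $\mathcal{CF}$-invariant unit vector equal to one of the two orthonormal top eigenvectors of $S$. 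Applying the averaging conclusion once more gives $\tr[X\proj{\psi}]=v$, completing the proof.
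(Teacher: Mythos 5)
Your proposal is correct, but it takes a genuinely different route in part (a) and reorganizes part (c), so a comparison is in order. For (a), the paper never spectrally decomposes $X$: it takes an arbitrary $P\in\Proj_r(AA^*)$, shows $\tr[XP]=\tr[XP^{\swapf *}]$ from the $\mathcal{CF}$-invariance of $X$ (its~\eqref{eq:x-cc}), symmetrizes to $S\coloneqq\frac{1}{2}P+\frac{1}{2}P^{\swapf *}$, applies Proposition~\ref{prop:inv-spectral} to this $S$, and proves by hand the majorization step $\tr[XS]=\sum_j s_j\tr[X\proj{e_j}]\leq\sum_{j\in[r]}\tr[X\proj{e_j}]=\tr[XQ]$ (its~\eqref{eq:xp-ei-xq}). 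You instead apply Proposition~\ref{prop:inv-spectral} to $X$ itself and invoke Ky Fan's maximum principle to identify the unrestricted maximum with the sum of the $r$ largest eigenvalues, which your $\mathcal{CF}$-invariant $Q$ attains; this is shorter but imports a standard external fact, whereas the paper's argument is self-contained and, more importantly, its saturation inequality is precisely what the paper's proof of (c) reuses (all three of its cases argue that optimality of $\ket{\phi}$ saturates~\eqref{eq:xp-ei-xq}). Since your (a) does not produce that inequality, your (c) must stand alone, and it does: the identity $\tr[X(\proj{\phi})^{\swapf *}]=\tr[X\proj{\phi}]$ is again the paper's~\eqref{eq:x-cc}, your observation that $\tr[XS]=v$ combined with the averaging argument (every spectral vector of $S$ with positive weight is itself optimal) is the paper's saturation mechanism in different clothing, and your two cases are handled soundly: in the non-degenerate case the top eigenspace of the $\mathcal{CF}$-invariant self-adjoint $S$ is $\mathcal{CF}$-invariant, so Proposition~\ref{prop:inv-proj1}(a) supplies the phase-rotated invariant optimizer; in the degenerate case you correctly note that degeneracy of the top eigenvalue of the rank-$\leq 2$, trace-one operator $S$ forces both nonzero eigenvalues to equal $\frac{1}{2}$, equivalently $\braket{\phi}{\tilde{\phi}}=0$ for $\ket{\tilde{\phi}}\coloneqq(\swapf\ket{\phi})^*$, which is exactly what makes $\frac{1}{\sqrt{2}}(\ket{\phi}+\ket{\tilde{\phi}})$ a unit vector and a top eigenvector (the paper reaches the same point by asserting $\spec(S)\subseteq\{0,\frac{1}{2}\}$ in its case 3). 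Net assessment: both proofs are valid; yours buys brevity in (a) at the cost of citing Ky Fan, and in exchange makes (c) logically independent of (a), merging the paper's cases 1 and 2 into a single non-degenerate case.
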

\begin{proof}[Proof of~(a)]
Since the feasible set on the right-hand side of~\eqref{eq:max-xp-xq} is a subset of the feasible set on the left-hand side of~\eqref{eq:max-xp-xq}, it is trivially true that
\begin{equation}\label{eq:proof-pq1}
\max_{P\in \Proj_r(AA^*)} \tr[XP]
\geq \sup_{\substack{Q\in \Proj_r(AA^*):\\Q^{\swapf *}=Q}} \tr[XQ] .
\end{equation}

Next, we will prove the opposite inequality. 
Let $P\in \Proj_r(AA^*)$. Then
\begin{equation}\label{eq:x-cc}
\tr[XP]
=\tr[XP]^*
=\tr[X^{\swapf }P^{\swapf }]^*
=\tr[X^{\swapf *}P^{\swapf *}]
=\tr[XP^{\swapf *}] .
\end{equation}
Let $S\coloneqq \frac{1}{2}P+\frac{1}{2}P^{\swapf *}$.
Then, $S$ is $\mathcal{CF}$-invariant and self-adjoint. 
By Proposition~\ref{prop:inv-spectral}, there exist an orthonormal basis $\{\ket{e_j}\}_{j\in [d_A^2]}$ for $AA^*$ whose elements are $\mathcal{CF}$-invariant and $(s_j)_{j\in [d_A^2]}\in \mathbb{R}^{\times d_A^2}$ such that 
$S=\sum_{j\in [d_A^2]}s_j\proj{e_j}$. 
We have
\begin{align}
\max(\spec(S))&\leq \frac{1}{2}\max (\spec(P))+\frac{1}{2}\max (\spec(P^{\swapf *}))
=\max(\spec (P))=1,
\end{align}
and $S\geq 0$ because $P\geq 0$. 
Hence, $s_j\in [0,1]$ for all $j\in [d_A^2]$. 
Since $\tr[S]=\tr[P]=r$, it follows that $\sum_{j\in [d_A^2]}s_j=r$. 

Without loss of generality, suppose  
$\tr[X\proj{e_0}]\geq \tr[X\proj{e_1}]\geq \dots \geq \tr[X\proj{e_{d_A^2-1}}]$. 
(This can always be achieved by jointly relabeling the $s_j$'s and $\ket{e_j}$'s.) 
Let $Q\coloneqq \sum_{j\in [r]}\proj{e_j}$. 
Then
\begin{align}
\tr[XP]
\stackrel{\eqref{eq:x-cc}}{=}\tr[XS]
&=\sum_{j\in [d_A^2]}s_j\tr[X\proj{e_j}]
\leq \sum_{j\in [r]}\tr[X\proj{e_j}]
=\tr[XQ]. 
\label{eq:xp-ei-xq}
\end{align}
For the inequality, we used that $s_j\in [0,1]$ for all $j\in [d_A^2]$ and $\sum_{j\in [d_A^2]}s_j=r$. 
The definition of $Q$ implies that $Q\in \Proj_r(AA^*)$ and that $Q^{\swapf *}=Q$.
We can conclude from~\eqref{eq:xp-ei-xq} that
\begin{equation}\label{eq:proof-pq2}
\max_{P\in \Proj_r(AA^*)} \tr[XP]
\leq \sup_{\substack{Q\in \Proj_r(AA^*):\\Q^{\swapf *}=Q}} \tr[XQ] .
\end{equation}

The assertion now follows from~\eqref{eq:proof-pq1} and~\eqref{eq:proof-pq2}. 
\end{proof}
\begin{proof}[Proof of~(b)]
The assertion follows from~(a) for $r=1$ due to Proposition~\ref{prop:inv-proj1}~(a).
\end{proof}
\begin{proof}[Proof of~(c)]
By the definition of $S$ in~\eqref{eq:def-saa}, we have $\rank(S)=1$ or $\rank(S)=2$.

\emph{Case 1: $\rank(S)=1$.} 
By the proof of~(a) for $r=1$ and $P\coloneqq \proj{\phi}$, there exists an orthonormal basis $\{\ket{e_j}\}_{j\in [d_A^2]}$ for $AA^*$ whose elements are $\mathcal{CF}$-invariant such that 
$S=\proj{e_0}$. 
Since $\ket{\phi}$ is an optimizer for the left-hand side of~\eqref{eq:max-phi-psi}, the inequality in~\eqref{eq:xp-ei-xq} must be saturated.
This implies that $\ket{e_0}$ 
is an optimizer for the right-hand side of~\eqref{eq:max-phi-psi}.

For case 1, the largest eigenvalue of $S$ is non-degenerate. 
Let $\ket{e}$ be a unit eigenvector of $S$ associated with the largest eigenvalue of $S$.
Then, $\proj{e}=S=\proj{e_0}$. 
Hence, there exists $\theta\in [0,2\pi)$ such that $\ket{e}=e^{-i\theta}\ket{e_0}$.

\emph{Case 2: $\rank(S)=2$ and the largest eigenvalue of $S$ is non-degenerate.} 
By the proof of~(a) for $r=1$ and $P\coloneqq \proj{\phi}$, there exists an orthonormal basis $\{\ket{e_j}\}_{j\in [d_A^2]}$ for $AA^*$ whose elements are $\mathcal{CF}$-invariant such that 
$S=\sum_{i=0}^1s_j\proj{e_j}$ for certain $s_j\in (0,1)$, 
$\sum_{j=0}^1s_j=1$ and $\tr[X\proj{e_0}]\geq \tr[X\proj{e_1}]$.
Since $\ket{\phi}$ is an optimizer for the left-hand side of~\eqref{eq:max-phi-psi}, the inequality in~\eqref{eq:xp-ei-xq} must be saturated.
This implies that both $\ket{e_0}$ and $\ket{e_1}$ 
are optimizers for the right-hand side of~\eqref{eq:max-phi-psi}.

Since the largest eigenvalue of $S$ is non-degenerate, we have $s_0\neq s_1$.
Let $\ket{e}$ be a unit eigenvector of $S$ associated with the largest eigenvalue of $S$.
Then, either $\proj{e}=\proj{e_0}$ or $\proj{e}=\proj{e_1}$.
Hence, there exists $\theta\in [0,2\pi)$ such that $\ket{e}=e^{-i\theta}\ket{e_0}$ or $\ket{e}=e^{-i\theta}\ket{e_1}$.

\emph{Case 3: $\rank(S)=2$ and the largest eigenvalue of $S$ is degenerate.} 
Since $S$ is positive semidefinite and $\tr[S]=1$, 
it follows that $\spec(S)\subseteq \{0,\frac{1}{2}\}$.
By~\eqref{eq:def-saa}, $\ket{\phi}$ and $(\swapf \ket{\phi})^*$ are orthogonal unit eigenvectors of $S$ associated with the eigenvalue $\frac{1}{2}$.
Let $\ket{\psi}\coloneqq \frac{1}{\sqrt{2}}(\ket{\phi}+(\swapf \ket{\phi})^*)$ and 
$\ket{\psi^\perp}\coloneqq \frac{i}{\sqrt{2}}(\ket{\phi}-(\swapf \ket{\phi})^*)$.
Then, $\ket{\psi}$ and $\ket{\psi^\perp }$ are orthogonal unit eigenvectors of $S$ associated with the eigenvalue $\frac{1}{2}$, and they are $\mathcal{CF}$-invariant. 
Hence, 
$S=\frac{1}{2}\proj{\psi}+\frac{1}{2}\proj{\psi^\perp }
= \sum_{j=0}^1s_j\proj{e_j}$ 
where we defined $s_0\coloneqq \frac{1}{2}$, $s_1\coloneqq \frac{1}{2}$, $\ket{e_0}\coloneqq \ket{\psi}$, and $\ket{e_1}\coloneqq \ket{\psi^\perp }$.
This is a decomposition of the form used in the proof of~(a) for $r=1$ and $P\coloneqq \proj{\phi}$. 
Since $\ket{\phi}$ is an optimizer for the left-hand side of~\eqref{eq:max-phi-psi}, the inequality in~\eqref{eq:xp-ei-xq} must be saturated. 
This implies that both $\ket{\psi}$ and $\ket{\psi^\perp}$ 
are optimizers for the right-hand side of~\eqref{eq:max-phi-psi}.
\end{proof}

\begin{prop}[$\mathcal{CF}$-invariance is preserved under partial trace]\label{prop:inv-global}
All of the following hold.
\begin{enumerate}[label=(\alph*)]
\item Let $X_{ABA^*B^*}\in \mathcal{L}(ABA^*B^*)$ be such that
$(X_{ABA^*B^*}^{\swapf_{AA^*} \swapf_{BB^*} })^*=X_{ABA^*B^*}$. 
Let $Y_{AA^*}\coloneqq \tr_{BB^*}[X_{ABA^*B^*}]$.
Then, $Y_{AA^*}$ is $\mathcal{CF}$-invariant.
\item \sloppy 
Let $\rho_{AB}\in \mathcal{S}(AB)$, 
let $\ket{\hat{\rho}}_{ABA^*B^*}$ be its canonical purification, 
and let $\hat{\rho}_{AA^*}\coloneqq \tr_{BB^*}[\proj{\hat{\rho}}_{ABA^*B^*}]$. 
Then, $\hat{\rho}_{AA^*}$ is $\mathcal{CF}$-invariant.
\end{enumerate}
\end{prop}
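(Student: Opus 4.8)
The plan is to prove part~(a) directly and then obtain part~(b) as an immediate application. For~(a), the goal is to verify $(Y_{AA^*}^{\swapf_{AA^*}})^*=Y_{AA^*}$ starting from the hypothesis $(X_{ABA^*B^*}^{\swapf_{AA^*}\swapf_{BB^*}})^*=X_{ABA^*B^*}$, by tracking how each of the three operations involved --- complex conjugation, the swap $\swapf_{AA^*}$, and the swap $\swapf_{BB^*}$ --- interacts with the partial trace $\tr_{BB^*}$. Write $X\coloneqq X_{ABA^*B^*}$ for brevity.

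First I would record three elementary facts. (i) Complex conjugation commutes with $\tr_{BB^*}$: since the computational basis of $ABA^*B^*$ is the tensor product of the factors' bases, conjugating entrywise and then tracing out $BB^*$ agrees with tracing out $BB^*$ and then conjugating, so $(\tr_{BB^*}[Z])^*=\tr_{BB^*}[Z^*]$ for every $Z$. (ii) The swap $\swapf_{AA^*}$ acts on a factor disjoint from $BB^*$, so it commutes with $\tr_{BB^*}$, giving $Y_{AA^*}^{\swapf_{AA^*}}=\tr_{BB^*}[X^{\swapf_{AA^*}}]$. (iii) The partial trace over $BB^*$ is invariant under conjugation by the unitary $\swapf_{BB^*}$: by cyclicity of the trace on $BB^*$ together with unitarity of $\swapf_{BB^*}$, one has $\tr_{BB^*}[\swapf_{BB^*}Z\swapf_{BB^*}]=\tr_{BB^*}[Z]$ for every $Z$.

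Chaining these, and using that $\swapf_{AA^*}$ and $\swapf_{BB^*}$ commute (disjoint factors) so that $X^{\swapf_{AA^*}\swapf_{BB^*}}=\swapf_{BB^*}X^{\swapf_{AA^*}}\swapf_{BB^*}$, I would compute
\begin{align*}
(Y_{AA^*}^{\swapf_{AA^*}})^*
&\overset{\text{(ii)}}{=}(\tr_{BB^*}[X^{\swapf_{AA^*}}])^*
\overset{\text{(iii)}}{=}(\tr_{BB^*}[X^{\swapf_{AA^*}\swapf_{BB^*}}])^*\\
&\overset{\text{(i)}}{=}\tr_{BB^*}[(X^{\swapf_{AA^*}\swapf_{BB^*}})^*]
=\tr_{BB^*}[X]
=Y_{AA^*},
\end{align*}
where the penultimate equality uses the hypothesis. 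This proves~(a). For~(b), I would apply~(a) with $X_{ABA^*B^*}\coloneqq\proj{\hat{\rho}}_{ABA^*B^*}$, so that $Y_{AA^*}=\hat{\rho}_{AA^*}$, and it only remains to verify the hypothesis $((\proj{\hat{\rho}})^{\swapf_{AA^*}\swapf_{BB^*}})^*=\proj{\hat{\rho}}$. The key observation is that $\swapf_{AA^*}\swapf_{BB^*}$ coincides with the single swap between the composite systems $AB$ and $A^*B^*$ (using the tensor-product convention for computational bases, which also gives $\ket{\Omega}_{ABA^*B^*}=\ket{\Omega}_{AA^*}\otimes\ket{\Omega}_{BB^*}$). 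Hence $\ket{\hat{\rho}}_{ABA^*B^*}$ is exactly the canonical purification of $\rho_{AB}$ viewed on the bipartite system $(AB)(A^*B^*)$, so property~(C) of the canonical purification (Proposition~\ref{prop:invariant}) shows it is a $\mathcal{CF}$-invariant unit vector for this composite swap, and Proposition~\ref{prop:basic}~(f) upgrades this to $\mathcal{CF}$-invariance of the projection, which is precisely the hypothesis needed for~(a).

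I expect the main obstacle to be step~(iii): the swap $\swapf_{BB^*}$ lives on the factor being traced out and, unlike $\swapf_{AA^*}$, cannot be pulled outside the partial trace. Instead its effect must be absorbed via the cyclicity/unitary-invariance of the trace on $BB^*$, and this is exactly what lets the isolated $\swapf_{AA^*}$ arising from~(ii) be completed into the full joint swap $\swapf_{AA^*}\swapf_{BB^*}$ demanded by the hypothesis. The identification $\swapf_{AA^*}\swapf_{BB^*}=\swapf_{(AB)(A^*B^*)}$ used in~(b) is routine but should be stated with care, since it rests on the convention that the computational basis of a tensor product factorizes.
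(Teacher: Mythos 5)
Your proof is correct and takes essentially the same route as the paper's: part~(a) rests on the same three facts (complex conjugation commutes with $\tr_{BB^*}$, the swap $\swapf_{AA^*}$ pulls out of $\tr_{BB^*}$, and the conjugation by $\swapf_{BB^*}$ is absorbed via cyclicity/unitary invariance of the trace over $BB^*$), chained in the reverse order of the paper's display but amounting to the identical computation. Part~(b) likewise matches the paper: apply~(a) to $X_{ABA^*B^*}\coloneqq\proj{\hat{\rho}}_{ABA^*B^*}$, with the hypothesis verified by viewing $\ket{\hat{\rho}}_{ABA^*B^*}$ as a canonical purification on the composite bipartition $(AB)(A^*B^*)$ (the paper invokes Proposition~\ref{prop:canonical-invariant}~(e) together with the implicit identification $\swapf_{AA^*}\swapf_{BB^*}=\swapf_{(AB)(A^*B^*)}$ and Proposition~\ref{prop:basic}~(f), exactly as you do).
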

\begin{proof}[Proof of~(a)]
\begin{align}
Y_{AA^*}^{\swapf *}
&=\swapf_{AA^*} (\tr_{BB^*}[X_{ABA^*B^*}])^* \swapf _{AA^*}
=\swapf_{AA^*} (\tr_{BB^*}[(X_{ABA^*B^*}^{\swapf_{AA^*}\swapf_{BB^*}})^*])^* \swapf_{AA^*}
\\
&=\swapf_{AA^*} \tr_{BB^*}[X_{ABA^*B^*}^{\swapf_{AA^*} \swap{B}{B^*}}] \swapf_{AA^*}
=\tr_{BB^*}[X_{ABA^*B^*}^{\swap{B}{B^*}}]
=Y_{AA^*}
\end{align}
\end{proof}
\begin{proof}[Proof of~(b)]
The canonical purification is given by 
$\ket{\hat{\rho}}_{ABA^*B^*}=\sqrt{\rho_{AB}}\ket{\Omega}_{AA^*}\otimes \ket{\Omega}_{BB^*}$. 
Let $X_{ABA^*B^*}\coloneqq \proj{\hat{\rho}}_{ABA^*B^*}\in \mathcal{L}(ABA^*B^*)$. 
Then, $(X_{ABA^*B^*}^{\swapf_{AA^*} \swapf_{BB^*} })^*=X_{ABA^*B^*}$ due to Proposition~\ref{prop:canonical-invariant}~(e). 
Since $\hat{\rho}_{AA^*}=\tr_{BB^*}[X_{ABA^*B^*}]$, the assertion now follows from~(a).
\end{proof}

\section{Proofs, remarks, and examples}

\subsection{Proof of Proposition~\ref{prop:i12-omega}}\label{app:proof-omega}
\begin{proof}  
\emph{Case 1: $\rho_A\not\perp \sigma_A$.} 
Then 
\begin{align}
\exp(-I_{1/2}^{\downarrow}(\rho_{AB}\| \sigma_A ))
&= \tr[( \tr_A[\sqrt{\rho_{AB}}\sqrt{\sigma_A}] )^2 ]
\label{eq:i12-rs1}\\
&= \tr[ \tr_A[\sqrt{\rho_{AB}}\sqrt{\sigma_A}] \tr_A[\sqrt{\sigma_A}\sqrt{\rho_{AB}}] ]\\
&=\tr[\bra{\Omega}_{AA^*}\sqrt{\rho_{AB}}\sqrt{\sigma_A}\proj{\Omega}_{AA^*}\sqrt{\sigma_A}\sqrt{\rho_{AB}}\ket{\Omega}_{AA^*}]\\
&=\tr[ \sqrt{\sigma_A}\proj{\Omega}_{AA^*}\sqrt{\sigma_A} \proj{\hat{\rho}}_{ABA^*B^*} ]\\
&=\tr[ \sqrt{\sigma_A}\proj{\Omega}_{AA^*}\sqrt{\sigma_A} \hat{\rho}_{AA^*} ]\\
&=\bra{\Omega}_{AA^*} \sqrt{\sigma_A}\hat{\rho}_{AA^*}\sqrt{\sigma_A}\ket{\Omega}_{AA^*}.
\end{align}
\eqref{eq:i12-rs1} follows from~\eqref{eq:i-gen-explicit}. 

\emph{Case 2: $\rho_A\perp \sigma_A$.} 
Then $I_{1/2}^{\downarrow}(\rho_{AB}\| \sigma_A )=\infty$.
\end{proof}

\subsection{Proof of Theorem~\ref{thm:srinf-i12}}
\label{app:proof-thm}
\begin{proof}
Let $\ket{\hat{\rho}}_{ABA^*B^*}$ be the canonical purification of $\rho_{AB}$ 
and let $\hat{\rho}_{AA^*}\coloneqq \tr_{BB^*}[\proj{\hat{\rho}}_{ABA^*B^*}]$. 
Let $\sigma_A\in \mathcal{S}(A)$ and $\ket{\chi}_{AA^*}\coloneqq\sqrt{\sigma_A}\ket{\Omega}_{AA^*}$
By Proposition~\ref{prop:i12-omega}, 
\begin{align}\label{eq:thm-proof0}
\exp(-I_{1/2}^{\downarrow}(\rho_{AB}\| \sigma_A ))
&=\tr[\hat{\rho}_{AA^*}\sqrt{\sigma_A}\proj{\Omega}_{AA^*}\sqrt{\sigma_A}]
=\tr[\hat{\rho}_{AA^*}\proj{\chi}]
\end{align}
if $\rho_A\not\perp\sigma_A$, and $\exp(-I_{1/2}^{\downarrow}(\rho_{AB}\| \sigma_A ))=\exp(-\infty)=0$ else.
Therefore,
\begin{align}
\exp(-I_{1/2}^{\downarrow\downarrow}(A:B)_\rho)
&=\max_{\sigma_A\in \mathcal{S}(A)}\exp(-I_{1/2}^{\downarrow}(\rho_{AB}\| \sigma_A ))
=\max_{\substack{\ket{\chi}\in AA^*:\\ \exists \sigma_A\in \mathcal{S}(A): \\ 
\ket{\chi}=\sqrt{\sigma_A}\ket{\Omega}_{AA^*} }} \tr[\hat{\rho}_{AA^*}\proj{\chi}].
\label{eq:thm-proof1}
\end{align}
For the min-reflected entropy holds
\begin{align}
\exp(-S_{R}^{(\infty)}(A:B)_\rho)
&=\max_{\substack{\ket{\phi}\in AA^*:\\ \brak{\phi}=1}}\tr[\hat{\rho}_{AA^*}\proj{\phi}]
=\max_{\substack{\ket{\psi}\in AA^*:\brak{\psi}=1,\\ (\swapf \ket{\psi})^*=\ket{\psi}  }}
\tr[\hat{\rho}_{AA^*}\proj{\psi}],
\label{eq:thm-proof2}
\end{align}
where we used the definition of $S_R^{(\infty)}(A:B)_\rho$ and~\eqref{eq:h-inf} for the first equality, 
and Proposition~\ref{prop:inv-self-adjoint}~(b) and Proposition~\ref{prop:inv-global}~(b) for the second equality. 
According to Proposition~\ref{prop:canonical-invariant}~(e), the feasible set on the far right of~\eqref{eq:thm-proof1} is a subset of the feasible set on the far right of~\eqref{eq:thm-proof2}. 
Therefore, we can conclude that 
$\exp(-I_{1/2}^{\downarrow\downarrow}(A:B)_\rho)\leq \exp(-S_{R}^{(\infty)}(A:B)_\rho)$. 
It remains to prove the opposite inequality. 

Let $\ket{\psi}\in AA^*$ be a $\mathcal{CF}$-invariant unit vector.
By Proposition~\ref{prop:canonical-invariant}~(b), 
there exist an orthonormal basis $\{\ket{a_j}\}_{j\in [d_A]}$ for $A$ 
and $(s_j)_{j\in [d_A]}\in [-1,1]^{\times d_A}$ 
such that $\sum_{j\in [d_A]}s_j^2=1$ and 
$\ket{\psi}=\sum_{j\in [d_A]}s_j\ket{a_j}_{A}\otimes \ket{a_j}_{A^*}^*$.
Let $\ket{\chi}\coloneqq \sum_{j\in [d_A]}\sqrt{p_j}\ket{a_j}_{A}\otimes \ket{a_j}_{A^*}^*$ where $p_j\coloneqq s_j^2$ for all $j\in [d_A]$. 
Then,
\begin{align}
\tr[\hat{\rho}_{AA^*}\proj{\chi}]
&=\sum_{j,k\in [d_A]}\sqrt{p_jp_k} \tr[\ketbra{a_j}{a_k}_{A}\otimes (\ketbra{a_j}{a_k}_{A^*})^*\sqrt{\rho_{AB}}\proj{\Omega}_{AA^*}\sqrt{\rho_{AB}} ]
\label{eq:s5}\\
&= \sum_{j,k\in [d_A]}\lvert s_j\rvert \lvert s_k\rvert \tr[\bra{a_j}_A\sqrt{\rho_{AB}}\ket{a_j}_A \bra{a_k}_A\sqrt{\rho_{AB}}\ket{a_k}_A ]
\\
&= \sum_{j,k\in [d_A]} \lvert s_js_k\tr[\bra{a_j}_A\sqrt{\rho_{AB}}\ket{a_j}_A \bra{a_k}_A\sqrt{\rho_{AB}}\ket{a_k}_A ] \rvert
\\
&\geq \sum_{j,k\in [d_A]}s_js_k\tr[\bra{a_j}_A\sqrt{\rho_{AB}}\ket{a_j}_A \bra{a_k}_A\sqrt{\rho_{AB}}\ket{a_k}_A ]
\label{eq:s56}\\
&=\sum_{j,k\in [d_A]}s_js_k \tr[\ketbra{a_j}{a_k}_{A}\otimes (\ketbra{a_j}{a_k}_{A^*})^*\sqrt{\rho_{AB}}\proj{\Omega}_{AA^*}\sqrt{\rho_{AB}} ]
\\
&=\tr[\hat{\rho}_{AA^*}\proj{\psi}] .
\label{eq:s6}
\end{align}
Let $\sigma_A\coloneqq \tr_{A^*}[\proj{\chi}]=\sum_{j\in [d_A]}p_j\proj{a_j}\in \mathcal{S}(A)$. 
Then, $\ket{\chi}_{AA^*}=\sqrt{\sigma_A}\ket{\Omega}_{AA^*}$.
Therefore, 
\begin{equation}
\max_{\substack{\ket{\chi}\in AA^*:\\ \exists \sigma_A\in \mathcal{S}(A): \\ \ket{\chi}=\sqrt{\sigma_A}\ket{\Omega}_{AA^*} }} \tr[\hat{\rho}_{AA^*}\proj{\chi}]
\geq
\max_{\substack{\ket{\psi}\in AA^*:\brak{\psi}=1,\\ (\swapf \ket{\psi})^*=\ket{\psi}  }}
\tr[\hat{\rho}_{AA^*}\proj{\psi}] .
\end{equation}
By~\eqref{eq:thm-proof1} and~\eqref{eq:thm-proof2}, we can conclude that 
$\exp(-I_{1/2}^{\downarrow\downarrow}(A:B)_\rho)\geq \exp(-S_{R}^{(\infty)}(A:B)_\rho)$.
\end{proof}

\subsection{Remarks on the proof method for Theorem~\ref{thm:srinf-i12}}
\label{app:thm-remarks}
The proof of Theorem~\ref{thm:srinf-i12} uses the $\mathcal{CF}$-invariance of $\hat{\rho}_{AA^*}$ (Proposition~\ref{prop:inv-global}~(b)) as a central ingredient, see~\eqref{eq:thm-proof2}. 
As argued in~\eqref{eq:thm-proof2}, this property of $\hat{\rho}_{AA^*}$ allows us to restrict an optimization over all unit vectors to an optimization over $\mathcal{CF}$-invariant unit vectors only (Proposition~\ref{prop:inv-self-adjoint}~(b)). 
By means of a triangle inequality, see~\eqref{eq:s56}, we have then shown that the feasible set can be restricted even more to the set of unit vectors that are canonical purifications (Proposition~\ref{prop:canonical-invariant}~(e)), see~\eqref{eq:thm-proof1}. 
To illustrate the usefulness of this proof technique, we consider the following corollary of Theorem~\ref{thm:srinf-i12}. 
\begin{cor}[Min-reflected entropy in terms of $\rho_{AB}$]\label{cor:max-u}
Let $\rho_{AB}\in \mathcal{S}(AB)$. Then 
\begin{align}
\exp(-S_R^{(\infty)}(A:B)_\rho)
&=\max_{\sigma_A\in \mathcal{S}(A)} \max_{U_A\in \mathcal{U}(A)}
\tr[ \tr_A[\sqrt{\rho_{AB}}\sqrt{\sigma_A} U_A] \tr_A[U_A^\dagger \sqrt{\sigma_A}\sqrt{\rho_{AB}}] ]
\label{eq:cor-max-u-1}\\
&=\max_{\sigma_A\in \mathcal{S}(A)}\tr[\tr_A[\sqrt{\rho_{AB}}\sqrt{\sigma_A}] \tr_A[\sqrt{\sigma_A}\sqrt{\rho_{AB}}]]
=\exp(-I_{1/2}^{\downarrow\downarrow}(A:B)_\rho) .
\label{eq:cor-max-u-2}
\end{align}
\end{cor}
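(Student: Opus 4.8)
The plan is to read off the three equalities in Corollary~\ref{cor:max-u} as a direct repackaging of the ingredients already assembled for the proof of Theorem~\ref{thm:srinf-i12}, with the single new feature being that the min-reflected entropy carries an \emph{extra} maximization over a unitary $U_A$ that the doubly minimized PRMI lacks.

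First I would establish the outer equality~\eqref{eq:cor-max-u-1}. By the definition $S_R^{(\infty)}(A:B)_\rho=H_\infty(AA^*)_{\hat{\rho}}$ together with~\eqref{eq:h-inf}, one has $\exp(-S_R^{(\infty)}(A:B)_\rho)=\max_{\ket{\phi}:\brak{\phi}=1}\tr[\hat{\rho}_{AA^*}\proj{\phi}]$. Proposition~\ref{prop:canonical-invariant}~(c) identifies the feasible set $\sunit$ with the set of all vectors of the form $\sqrt{\sigma_A}U_A\ket{\Omega}_{AA^*}$ for $\sigma_A\in\mathcal{S}(A),U_A\in\mathcal{U}(A)$, so the maximization over unit vectors turns into a joint maximization over the pairs $(\sigma_A,U_A)$. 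Writing $\ket{\phi}=\sqrt{\sigma_A}U_A\ket{\Omega}_{AA^*}$ gives $\tr[\hat{\rho}_{AA^*}\proj{\phi}]=\bra{\Omega}_{AA^*}U_A^\dagger\sqrt{\sigma_A}\,\hat{\rho}_{AA^*}\,\sqrt{\sigma_A}U_A\ket{\Omega}_{AA^*}$, and running the same chain of manipulations as in the proof of Proposition~\ref{prop:i12-omega} — but with the self-adjoint operator $\sqrt{\sigma_A}$ replaced throughout by $\sqrt{\sigma_A}U_A$ — rewrites this diagonal matrix element as $\tr[\tr_A[\sqrt{\rho_{AB}}\sqrt{\sigma_A}U_A]\tr_A[U_A^\dagger\sqrt{\sigma_A}\sqrt{\rho_{AB}}]]$. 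Maximizing over $(\sigma_A,U_A)$ then yields~\eqref{eq:cor-max-u-1}.

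Next I would establish the final equality in~\eqref{eq:cor-max-u-2}. Equation~\eqref{eq:thm-proof1} in the proof of Theorem~\ref{thm:srinf-i12} already records that $\exp(-I_{1/2}^{\downarrow\downarrow}(A:B)_\rho)=\max_{\sigma_A\in\mathcal{S}(A)}\bra{\Omega}_{AA^*}\sqrt{\sigma_A}\hat{\rho}_{AA^*}\sqrt{\sigma_A}\ket{\Omega}_{AA^*}$, while the algebraic portion of the proof of Proposition~\ref{prop:i12-omega} identifies this matrix element with $\tr[\tr_A[\sqrt{\rho_{AB}}\sqrt{\sigma_A}]\tr_A[\sqrt{\sigma_A}\sqrt{\rho_{AB}}]]$ for every $\sigma_A$ (those steps are purely algebraic and do not require $\rho_A\not\perp\sigma_A$). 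Combining the two gives the last equality, which is precisely the specialization $U_A=1_A$ of the expression in~\eqref{eq:cor-max-u-1}.

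It then remains to justify the middle equality in~\eqref{eq:cor-max-u-2}, namely that restricting $U_A$ to $1_A$ does not lower the maximum. By the two equalities just established, the leftmost expression equals $\exp(-S_R^{(\infty)}(A:B)_\rho)$ and the rightmost equals $\exp(-I_{1/2}^{\downarrow\downarrow}(A:B)_\rho)$, so the middle equality is exactly the content of Theorem~\ref{thm:srinf-i12}. The main obstacle is therefore not a new one: all the substance sits in the claim that optimizing over $U_A$ gains nothing beyond $U_A=1_A$, which is precisely the triangle-inequality step~\eqref{eq:s56} in the proof of Theorem~\ref{thm:srinf-i12}, where the signed amplitudes of a $\mathcal{CF}$-invariant vector are replaced by the nonnegative amplitudes of the associated canonical purification ($U_A=1_A$) without decreasing the overlap with $\hat{\rho}_{AA^*}$. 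Beyond invoking that result, the only care required is the routine bookkeeping in the algebraic chain of the first step and the verification that a general unit vector in $AA^*$ is faithfully parametrized by a pair $(\sigma_A,U_A)$.
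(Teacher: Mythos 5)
Your proof is correct and follows essentially the same route as the paper's: the paper likewise establishes the two outer equalities directly --- \eqref{eq:cor-max-u-1} from \eqref{eq:h-inf}, \eqref{eq:purification-abab}, and Proposition~\ref{prop:canonical-invariant}~(c), and the final equality in \eqref{eq:cor-max-u-2} from the definition of $I_{1/2}^{\downarrow\downarrow}(A:B)_\rho$ via \eqref{eq:i-gen-explicit} --- and then obtains the middle equality as an immediate consequence of Theorem~\ref{thm:srinf-i12}, exactly as you do. One small caution: your parenthetical claim that the identification of $\exp(-I_{1/2}^{\downarrow}(\rho_{AB}\|\sigma_A))$ with the trace expression needs no hypothesis is not quite right pointwise, since under the paper's conventions $\exp(-I_{1/2}^{\downarrow}(\rho_{AB}\|\sigma_A))=0$ by fiat whenever $\rho_A\perp\sigma_A$ while the trace expression can then still be strictly positive (e.g., $\rho_{AB}=\proj{0}_A\otimes\proj{0}_B$, $\sigma_A=\proj{+}_A$); however, such degenerate $\sigma_A$ only lower the left-hand side, and the maximum of the continuous trace expression is approached by full-rank states (which do satisfy $\rho_A\not\perp\sigma_A$), so the equality of the two \emph{maxima} --- which is all the corollary needs, and is the same level of detail the paper itself supplies --- is unaffected.
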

\begin{proof}
Let $\ket{\hat{\rho}}_{ABA^*B^*}$ be the canonical purification of $\rho_{AB}$ 
and let $\hat{\rho}_{AA^*}\coloneqq \tr_{BB^*}[\proj{\hat{\rho}}_{ABA^*B^*}]$. 
The equality in~\eqref{eq:cor-max-u-1} holds because
\begin{align}
\exp(-S_R^{(\infty)}(A:B)_\rho)
&=\max_{\substack{\ket{\phi}\in AA^*:\\ \brak{\phi}=1}} \tr[\hat{\rho}_{AA^*}\proj{\phi}_{AA^*}]
\label{eq:opt2-0}\\
&=\max_{\substack{\ket{\phi}\in AA^*:\\ \brak{\phi}=1}} \bra{\Omega}_{AA^*}\tr_B[\sqrt{\rho_{AB}}\proj{\phi}_{AA^*}\sqrt{\rho_{AB}}] \ket{\Omega}_{AA^*}
\label{eq:opt2-1}\\
&= \max_{\sigma_A\in \mathcal{S}(A)} \max_{U_A\in \mathcal{U}(A)}
\tr[ \tr_A[\sqrt{\rho_{AB}}\sqrt{\sigma_A} U_A] \tr_A[U_A^\dagger \sqrt{\sigma_A}\sqrt{\rho_{AB}}] ] .
\label{eq:opt2}
\end{align}
\eqref{eq:opt2-0} follows from~\eqref{eq:h-inf}. 
\eqref{eq:opt2-1} follows from~\eqref{eq:purification-abab}. 
\eqref{eq:opt2} follows from Proposition~\ref{prop:canonical-invariant}~(c). 

The second equality in~\eqref{eq:cor-max-u-2} follows from the definition of $I_{1/2}^{\downarrow\downarrow}(A:B)_\rho$ due to~\eqref{eq:i-gen-explicit}. 

The first equality in~\eqref{eq:cor-max-u-2} then follows from Theorem~\ref{thm:srinf-i12}.
\end{proof}

Note that the proof of Corollary~\ref{cor:max-u} proceeds in such a way that~\eqref{eq:cor-max-u-1} and the second equality in~\eqref{eq:cor-max-u-2} are derived first. 
The first equality in~\eqref{eq:cor-max-u-2} then follows as a corollary of Theorem~\ref{thm:srinf-i12}. 
Remarkably, the first equality in~\eqref{eq:cor-max-u-2} is non-trivial, 
as the optimization over $U_A$ in~\eqref{eq:cor-max-u-1} cannot generally be solved by simply choosing $U_A=1_A$. 
More precisely, the assertion that for all $\rho_{AB}\in \mathcal{S}(AB),\sigma_A\in \mathcal{S}(A)$
\begin{align}
\max_{U_A\in \mathcal{U}(A)}\tr[\tr_A[\sqrt{\rho_{AB}} \sqrt{\sigma_A} U_A]\tr_A[U_A^\dagger \sqrt{\sigma_A} \sqrt{\rho_{AB}}] ]
&\stackrel{?}{=}\tr\left[ \tr_A[\sqrt{\rho_{AB}}\sqrt{\sigma_A}] \tr_A[\sqrt{\sigma_A}\sqrt{\rho_{AB}}] \right]
\label{eq:sup-u}
\end{align}
is false in general; a counterexample to~\eqref{eq:sup-u} is given below. 
Our proof technique for Theorem~\ref{thm:srinf-i12} circumvents the seemingly difficult task of directly proving the first equality in~\eqref{eq:cor-max-u-2} by instead leveraging the $\mathcal{CF}$-invariance of $\hat{\rho}_{AA^*}$ and employing techniques for $\mathcal{CF}$-invariant objects from Appendix~\ref{app:invariance}. 

\begin{ex}[Counterexample to~\eqref{eq:sup-u}]\label{ex:counter}
Let $A$ and $B$ be $2$-dimensional Hilbert spaces. 
Let $\rho_{AB}\coloneqq \proj{0}_A\otimes\proj{0}_B\in \mathcal{S}(AB)$, 
$\ket{\pm}_A\coloneqq \frac{1}{\sqrt{2}}(\ket{0}_A\pm \ket{1}_A)$, and 
$U_A\coloneqq \ketbra{+}{0}_A+\ketbra{-}{1}_A\in \mathcal{U}(A)$. 
Let $p\in [0,1]$ 
and let $\sigma_A\coloneqq p\proj{+}_A+(1-p)\proj{-}_A\in \mathcal{S}(A)$. 
On the one hand
\begin{align}
x_U
&\coloneqq \tr[\tr_A[\sqrt{\rho_{AB}} \sqrt{\sigma_A} U_A]\tr_A[U_A^\dagger \sqrt{\sigma_A} \sqrt{\rho_{AB}}] ]
\\
&=\lvert\bra{0}\sqrt{\sigma_A} U_A\ket{0}_A\vert^2
=\lvert \sqrt{p}\braket{0}{+}_A\rvert^2 
=\frac{p}{2} .
\end{align}
On the other hand
\begin{align}
x_1
&\coloneqq \tr[\tr_A[\sqrt{\rho_{AB}} \sqrt{\sigma_A}]\tr_A[\sqrt{\sigma_A} \sqrt{\rho_{AB}}] ]
\\
&=\lvert\bra{0}\sqrt{\sigma_A}\ket{0}_A\rvert^2
=\bigg\lvert \frac{\sqrt{p}}{2}+\frac{\sqrt{1-p}}{2} \bigg\rvert^2
=\frac{1}{2}\sqrt{p}\sqrt{1-p}+\frac{1}{4}.
\end{align}
The difference $x_U-x_1$ changes sign at $p_0\coloneqq \frac{2+\sqrt{2}}{4}\approx 0.85$. 
If $p=p_0$, then $x_U=x_1$. 
If $p\in [0,p_0)$, then $x_U<x_1$.
If $p\in (p_0,1]$, then $x_U>x_1$. 
Therefore,~\eqref{eq:sup-u} is violated for any $p\in (p_0,1]$, as the left-hand side of~\eqref{eq:sup-u} is strictly greater than the right-hand side of~\eqref{eq:sup-u}.
\end{ex}

\subsection{Relations between optimizers involved in Theorem~\ref{thm:srinf-i12}}
\label{app:thm-relations}

In the proof of Theorem~\ref{thm:srinf-i12}, we derived the following chain of equalities.
\begin{align}
\exp(-S_R^{(\infty)}(A:B)_\rho )
&= \max_{\substack{\ket{\phi}\in AA^*: \\ \brak{\phi}=1}}
\tr[\hat{\rho}_{AA^*}\proj{\phi}]
\label{eq:op1}\\
&=\max_{\substack{\ket{\psi}\in AA^*: \\ \brak{\psi}=1,\\ (\swapf \ket{\psi})^*=\ket{\psi} }} \tr[\hat{\rho}_{AA^*}\proj{\psi}]
\label{eq:op2}
\\
&=\max_{\substack{\ket{\chi}\in AA^*: \\ \exists\sigma_A\in \mathcal{S}(A):\\ \ket{\chi}=\sqrt{\sigma_A}\ket{\Omega}_{AA^*} }} \tr[\hat{\rho}_{AA^*}\proj{\chi}]
\label{eq:op3}
\\
&=\exp(-\min_{\sigma_A\in \mathcal{S}(A)}I_{1/2}^{\downarrow}(\rho_{AB}\| \sigma_A ))
= \exp(- I_{1/2}^{\downarrow\downarrow}(A:B)_{\rho})
\label{eq:op4}
\end{align}
The following proposition describes how the optimizers for these optimization problems are related to one another.

\begin{prop}[Relations between optimizers]\label{prop:optimizers}
Let $\rho_{AB}\in \mathcal{S}(AB)$. Then all of the following hold.
\begin{enumerate}[label=(\alph*)]
\item If $\sigma_A$ is an optimizer for~\eqref{eq:op4}, then $\ket{\chi}\coloneqq \sqrt{\sigma_A}\ket{\Omega}_{AA^*}$ is an optimizer for~\eqref{eq:op1},~\eqref{eq:op2}, and~\eqref{eq:op3},
and $\sigma_A\ll \rho_A$.
\item If $\ket{\chi}$ is an optimizer for~\eqref{eq:op3}, 
then $\sigma_A\coloneqq \tr_{A^*}[\proj{\chi}]$ is an optimizer for~\eqref{eq:op4}.
\item If $\ket{\psi}$ is an optimizer for~\eqref{eq:op2}, 
then $\ket{\chi}\coloneqq \sqrt{\sigma_A}\ket{\Omega}_{AA^*}$ is an optimizer for~\eqref{eq:op3} where $\sigma_A\coloneqq \tr_{A^*}[\proj{\psi}]$.
\item Suppose $\ket{\phi}$ is an optimizer for~\eqref{eq:op1}. Let
$S_{AA^*}\coloneqq \frac{1}{2}\proj{\phi}+\frac{1}{2}(\proj{\phi})^{\swapf *}$.

If the largest eigenvalue of $S_{AA^*}$ is non-degenerate:
Let $\ket{e}\in AA^*$ be a unit eigenvector of $S_{AA^*}$ associated with the largest eigenvalue of $S_{AA^*}$.
Then there exists $\theta\in [0,2\pi)$ such that $\ket{\psi}\coloneqq e^{i\theta}\ket{e}$
is an optimizer for~\eqref{eq:op2}.

Else: Let $\ket{\psi}\coloneqq \frac{1}{\sqrt{2}}(\ket{\phi}+(\swapf \ket{\phi})^*)$.
Then $\ket{\psi}$
is an optimizer for~\eqref{eq:op2}.
\end{enumerate}
\end{prop}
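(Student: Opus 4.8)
The plan is to exploit that the four optimization problems \eqref{eq:op1}--\eqref{eq:op4} were shown in the proof of Theorem~\ref{thm:srinf-i12} to share the common optimal value $v^\ast\coloneqq\exp(-I_{1/2}^{\downarrow\downarrow}(A:B)_\rho)=\exp(-S_R^{(\infty)}(A:B)_\rho)$. Consequently, to certify that a given vector or state is an optimizer of one of these problems, it suffices to check that it is feasible for that problem and that it attains the value $v^\ast$. Most of the proposition is then bookkeeping built on results already in hand; the only genuinely new content is the support claim $\sigma_A\ll\rho_A$ in~(a), which I expect to be the main obstacle.

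I would dispatch the parts in the order (d), (c), (b), (a). Part~(d) is a direct instance of Proposition~\ref{prop:inv-self-adjoint}~(c) applied to $X\coloneqq\hat{\rho}_{AA^*}$, which is self-adjoint (being a reduced density operator) and $\mathcal{CF}$-invariant by Proposition~\ref{prop:inv-global}~(b); the two maximizations in~\eqref{eq:max-phi-psi} are literally~\eqref{eq:op1} and~\eqref{eq:op2}, so the construction of $\ket{\psi}$ from $\ket{\phi}$ carries over verbatim. For part~(c), I would reuse the triangle-inequality computation~\eqref{eq:s5}--\eqref{eq:s6}: writing the $\mathcal{CF}$-invariant optimizer as $\ket{\psi}=\sum_{j}s_j\ket{a_j}_A\otimes\ket{a_j}_{A^*}^\ast$ (Proposition~\ref{prop:canonical-invariant}~(b)) and setting $\ket{\chi}\coloneqq\sum_j\lvert s_j\rvert\ket{a_j}_A\otimes\ket{a_j}_{A^*}^\ast=\sqrt{\sigma_A}\ket{\Omega}_{AA^*}$ with $\sigma_A=\tr_{A^*}[\proj{\psi}]=\sum_j s_j^2\proj{a_j}$, that computation gives $\tr[\hat{\rho}_{AA^*}\proj{\chi}]\geq\tr[\hat{\rho}_{AA^*}\proj{\psi}]=v^\ast$; since $\ket{\chi}$ is feasible for~\eqref{eq:op3} its value cannot exceed $v^\ast$, so equality holds and $\ket{\chi}$ optimizes~\eqref{eq:op3}. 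For part~(b), an optimizer $\ket{\chi}$ of~\eqref{eq:op3} is by definition $\sqrt{\sigma_A}\ket{\Omega}_{AA^*}$ with $\sigma_A=\tr_{A^*}[\proj{\chi}]$, and Proposition~\ref{prop:i12-omega} identifies $\tr[\hat{\rho}_{AA^*}\proj{\chi}]=\exp(-I_{1/2}^{\downarrow}(\rho_{AB}\|\sigma_A))$; as this equals $v^\ast$, the state $\sigma_A$ optimizes~\eqref{eq:op4}.

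In part~(a), the feasibility half is immediate: if $\sigma_A$ optimizes~\eqref{eq:op4} then $\ket{\chi}=\sqrt{\sigma_A}\ket{\Omega}_{AA^*}$ lies in $\scanonical\subseteq\sinvariant\subseteq\sunit$ (Proposition~\ref{prop:canonical-invariant}~(e)), hence is feasible for~\eqref{eq:op1},~\eqref{eq:op2}, and~\eqref{eq:op3}; and since $\rho_A\not\perp\sigma_A$ (otherwise the value would be $0<v^\ast$), Proposition~\ref{prop:i12-omega} gives $\tr[\hat{\rho}_{AA^*}\proj{\chi}]=v^\ast$, so $\ket{\chi}$ is an optimizer of all three.

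The remaining claim $\sigma_A\ll\rho_A$ is where I expect the real work, and I would prove it by a compression-and-renormalization argument. Writing $v^\ast=\exp(-I_{1/2}^{\downarrow}(\rho_{AB}\|\sigma_A))=\lVert\tr_A[\sqrt{\rho_{AB}}\sqrt{\sigma_A}]\rVert_2^2$ via Proposition~\ref{prop:i12-omega} (using $\tr_A[\sqrt{\sigma_A}\sqrt{\rho_{AB}}]=\tr_A[\sqrt{\rho_{AB}}\sqrt{\sigma_A}]^\dagger$), and letting $\Pi_A$ be the projection onto $\supp(\rho_A)$, the inclusion $\supp(\rho_{AB})\subseteq\supp(\rho_A)\otimes B$ gives $\sqrt{\rho_{AB}}=(\Pi_A\otimes 1_B)\sqrt{\rho_{AB}}(\Pi_A\otimes 1_B)$, from which a short index computation shows that $\tr_A[\sqrt{\rho_{AB}}\sqrt{\sigma_A}]$ depends on $\sqrt{\sigma_A}$ only through the compressed block $M\coloneqq\Pi_A\sqrt{\sigma_A}\Pi_A\geq 0$. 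The state $\tilde{\sigma}_A\coloneqq M^2/\tr[M^2]$ (well defined since $M\neq0$, as $v^\ast>0$) then satisfies $\tilde{\sigma}_A\ll\rho_A$ and $V(\tilde{\sigma}_A)=V(\sigma_A)/\lVert M\rVert_2^2$, where $V(\cdot)$ denotes the objective of~\eqref{eq:op4}. Meanwhile the block decomposition of $\sqrt{\sigma_A}$ with respect to $\supp(\rho_A)\oplus\supp(\rho_A)^\perp$ yields $\lVert M\rVert_2^2=1-\lVert\Pi_A^\perp\sqrt{\sigma_A}\Pi_A^\perp\rVert_2^2-2\lVert\Pi_A\sqrt{\sigma_A}\Pi_A^\perp\rVert_2^2\leq 1$, with equality precisely when $\sigma_A\ll\rho_A$. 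Optimality of $\sigma_A$ forces $V(\tilde{\sigma}_A)\leq V(\sigma_A)$, hence $\lVert M\rVert_2^2=1$, hence $\sigma_A\ll\rho_A$. Everything else reduces to checking feasibility against the common optimal value $v^\ast$, so this support argument is the only step requiring genuine care.
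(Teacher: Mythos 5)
Your proof is correct, and for parts (b), (c), (d) and the feasibility-plus-value half of (a) it is essentially the paper's own argument: the paper likewise certifies each candidate against the common optimal value $v^\ast=\exp(-I_{1/2}^{\downarrow\downarrow}(A:B)_\rho)$, reusing the computation in~\eqref{eq:thm-proof0} (i.e., Proposition~\ref{prop:i12-omega}) for (a) and (b), the estimate~\eqref{eq:s5}--\eqref{eq:s6} for (c), and Proposition~\ref{prop:inv-self-adjoint}~(c) together with Proposition~\ref{prop:inv-global}~(b) for (d). The one point of genuine divergence is the support claim $\sigma_A\ll\rho_A$ in (a): the paper does not prove it but cites~\cite{burri2024doublyminimizedpetzrenyi} (see also~\cite{hayashi2016correlation}), whereas you prove it from scratch, and your compression-and-renormalization argument checks out. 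Indeed, since $X\coloneqq\tr_A[\sqrt{\rho_{AB}}\sqrt{\sigma_A}]$ is self-adjoint (cyclicity of $\tr_A$ under operators local to $A$), \eqref{eq:i-gen-explicit} gives the objective value $\lVert X\rVert_2^2$; the identity $\sqrt{\rho_{AB}}=(\Pi_A\otimes 1_B)\sqrt{\rho_{AB}}(\Pi_A\otimes 1_B)$, with $\Pi_A$ the projection onto $\supp(\rho_A)$, shows $X$ depends on $\sigma_A$ only through $M\coloneqq\Pi_A\sqrt{\sigma_A}\Pi_A$; the Hilbert--Schmidt Pythagoras identity gives $\lVert M\rVert_2^2\leq\tr[\sigma_A]=1$, with equality iff $\sqrt{\sigma_A}(1_A-\Pi_A)=0$ (positivity of $\sqrt{\sigma_A}$ ensures the vanishing of the block $(1_A-\Pi_A)\sqrt{\sigma_A}(1_A-\Pi_A)$ forces the off-diagonal blocks to vanish too), i.e., iff $\sigma_A\ll\rho_A$; and $\tilde\sigma_A\coloneqq M^2/\tr[M^2]$ attains the value $v^\ast/\lVert M\rVert_2^2$, so optimality of $\sigma_A$ forces $\lVert M\rVert_2^2=1$. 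What each route buys: the paper's citation keeps the appendix short and defers to a result established in greater generality elsewhere; your argument makes the proposition self-contained and is more informative, as it exposes \emph{why} optimizers must be supported on $\supp(\rho_A)$ --- any mass of $\sigma_A$ outside $\supp(\rho_A)$ is wasted normalization that can be recycled to strictly increase the objective. One minor polish: in (b), note (as you do in (a)) that the optimal value $v^\ast>0$ rules out $\rho_A\perp\sigma_A$, so that the formula of Proposition~\ref{prop:i12-omega} is indeed applicable; the paper's proof leaves the same step implicit.
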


\begin{proof}[Proof of~(a).]
Let $\sigma_A\in \mathcal{S}(A)$ be an optimizer for~\eqref{eq:op4} and let $\ket{\chi}\coloneqq\sqrt{\sigma_A}\ket{\Omega}_{AA^*}$.
Then, $\ket{\chi}$ is an optimizer for~\eqref{eq:op3} by the same reasoning as in~\eqref{eq:thm-proof0}. 
Since $\ket{\chi}$ is an optimizer for~\eqref{eq:op3}, $\ket{\chi}$ is also an optimizer for~\eqref{eq:op1} and~\eqref{eq:op2} due to Proposition~\ref{prop:canonical-invariant}. 
The assertion that any optimizer for~\eqref{eq:op4} is such that $\sigma_A\ll  \rho_A$ has been proved in~\cite{burri2024doublyminimizedpetzrenyi} (see also~\cite{hayashi2016correlation}).
\end{proof}
\begin{proof}[Proof of~(b)]
The assertion follows from the same reasoning as in~\eqref{eq:thm-proof0}.
\end{proof}
\begin{proof}[Proof of~(c)]
Let $\ket{\psi}\in AA^*$ be an optimizer for~\eqref{eq:op2}. 
Let $\sigma_A\coloneqq\tr_{A^*}[\proj{\psi}]$ and let $\ket{\chi}\coloneqq \sqrt{\sigma_A}\ket{\Omega}_{AA^*}$.
By the same reasoning as in~\eqref{eq:s5}--\eqref{eq:s6}, it follows that $\ket{\chi}$ is an optimizer for~\eqref{eq:op3}.
\end{proof}
\begin{proof}[Proof of~(d).]
The assertion follows from Proposition~\ref{prop:inv-self-adjoint}~(c).
\end{proof}

\subsection{Proof of Corollary~\ref{cor:bounds}}\label{app:proof_cor_bounds}
\begin{proof} 
Let $\ket{\hat{\rho}}_{ABA^*B^*}$ be the canonical purification of $\rho_{AB}$ 
and let $\hat{\rho}_{AA^*}\coloneqq \tr_{BB^*}[\proj{\hat{\rho}}_{ABA^*B^*}]$. 
The first inequality in~\eqref{eq:bounds-1} follows from the bound $\frac{1}{2}H_2(AA^*)_{\hat{\rho}}\leq H_\infty(AA^*)_{\hat{\rho}}$ for R\'enyi entropies. 
The equality in~\eqref{eq:bounds-1} follows from Theorem~\ref{thm:srinf-i12}.
The second inequality in~\eqref{eq:bounds-1} follows from~\eqref{eq:prmi1},~\eqref{eq:prmi2}, and~\eqref{eq:h-alpha}.
The first inequality in~\eqref{eq:bounds-2} holds trivially. 
The second inequality in~\eqref{eq:bounds-2} holds by monotonicity of the singly minimized PRMI in the R\'enyi order and the fact that $I_1^{\uparrow\downarrow}(A:B)_\rho=I(A:B)_\rho$~\cite{hayashi2016correlation,burri2024doublyminimizedpetzrenyi}. 
The last inequality in~\eqref{eq:bounds-2} follows from~\eqref{eq:i-leq-sr}.
\end{proof}

\subsection{Proof of Corollary~\ref{cor:equivalence}}\label{app:proof_cor_equivalence}
\begin{proof}
The inequality $S_R^{(\alpha)}(A:B)_\rho\leq 2\log r_A$ for $\alpha\in [0,\infty]$ follows from~\eqref{eq:sr-bounds}.

It remains to prove the equivalence statement in~(a),  
as the equivalence statements in~(b) and~(c) follow from~(a). 

Let $\alpha\in (0,\infty]$ be arbitrary but fixed. 
Let $\ket{\hat{\rho}}_{ABA^*B^*}$ be the canonical purification of $\rho_{AB}$ 
and let $\hat{\rho}_{AA^*}\coloneqq \tr_{BB^*}[\proj{\hat{\rho}}_{ABA^*B^*}]$. 

First, suppose $S_R^{(\alpha)}(A:B)_\rho=2\log r_A$.
Then also $S_R^{(0)}(A:B)_\rho=2\log r_A$ due to~\eqref{eq:sr-bounds}. 
Hence, $\rank(\hat{\rho}_{AA^*})=r_A^2$ and $H_\alpha(AA^*)_{\hat{\rho}}=H_0(AA^*)_{\hat{\rho}}=2\log r_A$.
Since $\alpha>0$, this implies that $\hat{\rho}_{AA^*}=\hat{\rho}_{AA^*}^0/r_A^2$. 
Hence, $H_\beta(AA^*)_{\hat{\rho}}=2\log r_A$ for all $\beta\in [0,\infty]$. 
In particular, $H_\infty(AA^*)_{\hat{\rho}}=2\log r_A$, so $S_R^{(\infty)}(A:B)_\rho=2\log r_A$. 
This implies that $I(A:B)_\rho=2\log r_A$ due to Corollary~\ref{cor:bounds}.

Now, suppose $I(A:B)_\rho=2\log r_A$ instead. 
This implies that $I_{1/2}^{\downarrow\downarrow}(A:B)_\rho=2\log r_A$~\cite{burri2024doublyminimizedpetzrenyi}.
By Theorem~\ref{thm:srinf-i12}, this is equivalent to $S_R^{(\infty)}(A:B)_\rho=2\log r_A$. 
This implies that $S_R^{(\alpha)}(A:B)_\rho=2\log r_A$ due to the monotonicity of the R\'enyi reflected entropy and its bounds (see Section~\ref{ssec:renyi-reflected}). 
\end{proof}

\subsection{Counterexample for ``Maximality of entanglement of purification implies maximality of mutual information'' (Section~\ref{ssec:comparison_eop})}\label{app:eop}
Let $d_A,d_B\in \mathbb{N}_{\geq 2}$ be such that $d_A\leq d_B$. 
Let $\rho_{AB}\coloneqq d_A^{-1}\sum_{j\in [d_A]}\proj{j}_{A}\otimes\proj{j}_B$. 
The canonical purification of $\rho_{AB}$ is 
\begin{align}
\ket{\hat{\rho}}_{ABA^*B^*}
= d_A^{-1/2}\sum_{j\in [d_A]}\ket{j}_A\otimes\ket{j}_B\otimes\ket{j}_{A^*}\otimes\ket{j}_{B^*}.
\end{align}
Let $\hat{\rho}_{AA^*B^*}\coloneqq \tr_{B}[\proj{\hat{\rho}}_{ABA^*B^*}]$. 
The entanglement of purification is bounded as 
\begin{align}
\log d_A
\geq E_P(A:B)_\rho 
&\geq \min_{\mathcal{M}_{A^*B^*\rightarrow C}\in \CPTP(A^*B^*,C)}H(AC)_{\mathcal{M}_{A^*B^*\rightarrow C}(\hat{\rho}_{AA^*B^*})}
\label{eq:counter-ep0}\\
&=H(A)_{\rho} + \min_{\mathcal{M}_{A^*B^*\rightarrow C}\in \CPTP(A^*B^*,C)}H(C|A)_{\mathcal{M}_{A^*B^*\rightarrow C}(\hat{\rho}_{AA^*B^*})}
\\
&=\log d_A + \min_{\mathcal{M}_{A^*B^*\rightarrow C}\in \CPTP(A^*B^*,C)}H(C|A)_{\mathcal{M}_{A^*B^*\rightarrow C}(\hat{\rho}_{AA^*B^*})}
\label{eq:counter-ep}
\\
&\geq \log d_A
\label{eq:counter-ep2}
\end{align}
where $C$ denotes an arbitrary finite-dimensional Hilbert space, and it is understood that the minimizations also range over the dimension of $C$. 
The first inequality in~\eqref{eq:counter-ep0} follows from~\eqref{eq:eop-bounds}.
\eqref{eq:counter-ep} holds because $\rho_A=1_A/d_A$. 
\eqref{eq:counter-ep2} holds because 
$\hat{\rho}_{AA^*B^*}=d_A^{-1}\sum_{j\in [d_A]}\proj{j}_A\otimes\proj{j}_{A^*}\otimes\proj{j}_{B^*}$ is separable with respect to the bipartition $(A,A^*B^*)$, and because the linearity of $\mathcal{M}_{A^*B^*\rightarrow C}$ implies that $\mathcal{M}_{A^*B^*\rightarrow C}(\hat{\rho}_{AA^*B^*})$ is separable with respect to the bipartition $(A,C)$, 
so $H(C|A)_{\mathcal{M}_{A^*B^*\rightarrow C}(\hat{\rho}_{AA^*B^*})}\geq 0$. 
The inequalities in~\eqref{eq:counter-ep0}--\eqref{eq:counter-ep2} imply that 
\begin{align}
E_P(A:B)_\rho =\log d_A=\log \min\{d_A,d_B\}.
\end{align}

The mutual information of $\rho_{AB}$ is 
\begin{align}
I(A:B)_\rho =\log d_A<2\log d_A=2\log \min\{d_A,d_B\}.
\end{align}

Therefore, maximality of the entanglement of purification does not imply maximality of the mutual information.

\subsection{Examples for bounds on entanglement of purification (Section~\ref{ssec:comparison_eop})}\label{app:eop2}
For the first example, let $d_A\in \mathbb{N}_{\geq 2}$ and 
$\ket{\Phi}\coloneqq d_A^{-1/2}\sum_{j\in [d_A]}\ket{j}_A\otimes\ket{j}_{A^*}$. 
Then, 
\begin{align}
E_P(A:A^*)_{\proj{\Phi}}=\log d_A<2\log d_A = S_R^{(\infty)}(A:A^*)_{\proj{\Phi}}.
\end{align} 

For the second example, let $d\in \mathbb{N}_{\geq 2}$, 
let $A$ and $B$ be $d$-dimensional Hilbert spaces, 
and let $A'$ be a $2$-dimensional Hilbert space. 
Let $\omega_{A'AB}\in \mathcal{S}(A'AB)$ be defined as in~\cite[Corollary 10]{christandl2005uncertainty}, i.e., 
\begin{align}\label{eq:omega_def}
\omega_{A'AB}\coloneqq \frac{1}{d^2}\proj{0}_{A'}\otimes \frac{1}{2}(1_{AB}+\mathcal{F}_{AB} )
+\frac{1}{d^2}\proj{1}_{A'}\otimes \frac{1}{2}(1_{AB}-\mathcal{F}_{AB} ).
\end{align}
Then, $E_P(A'A:B)_\omega =\log d$~\cite[Corollary 10]{christandl2005uncertainty}. The definition in~\eqref{eq:omega_def} implies that
\begin{align}
\omega_{B}
&=1_B/d,\\
\omega_{A'A}
&=\frac{d+1}{2d^2}\proj{0}_{A'}\otimes 1_A+\frac{d-1}{2d^2}\proj{1}_{A'}\otimes 1_A.
\end{align}
Hence, 
\begin{align}
I(A'A:B)_\omega 
&=H(A'A)_\omega + H(B)_\omega - H(A'AB)_\omega
\\
&=\frac{d+1}{2d}\log\left(\frac{2d^2}{d+1}\right) 
+\frac{d-1}{2d}\log\left(\frac{2d^2}{d-1}\right)
+\log d -\log d^2
\\
&<\log d 
=E_P(A'A:B)_\omega.
\end{align}

Together, these two examples imply that the entanglement of purification cannot, in general, be inserted anywhere in the chain of inequalities presented in Corollary~\ref{cor:bounds}.

\subsection{Proof of Theorem~\ref{thm:minimized}}\label{app:proof_minimized}
\begin{proof}
By~\eqref{eq:sr-ineq}, $S_R^{\downarrow (n)}(A:B)_\rho \leq \inf_{s\in \mathbb{R}} S_{D,s}^{(n)}(A:B)_\rho\leq S_{D,0}^{(n)}(A:B)_\rho=S_R^{(n)}(A:B)_\rho$. 
It remains to prove that 
$S_R^{\downarrow (n)}(A:B)_\rho \geq S_R^{(n)}(A:B)_\rho$.

\emph{Case 1: $n\in \mathbb{N}_{\geq 2}$.} 
Let $\ket{\hat{\rho}}_{ABA^*B^*}\coloneqq \sqrt{\rho_{AB}}\ket{\Omega}_{AA^*}\otimes\ket{\Omega}_{BB^*}$ 
and let $\hat{\rho}_{AA^*}\coloneqq \tr_{BB^*}[\proj{\hat{\rho}}_{ABA^*B^*}]$. 
Then
\begin{align}\label{eq:proof-min1}
S_R^{(n)}(A:B)_\rho = H_n(AA^*)_{\hat{\rho}} =\frac{1}{1-n}\log\tr[\hat{\rho}_{AA^*}^n]  .
\end{align}
Let $U_{AB}\in \mathcal{U}(AB)$ be such that $[\rho_{AB},U_{AB}]=0$. 
Let $\ket{\bar{\rho}}_{ABA^*B^*}\coloneqq \sqrt{\rho_{AB}} U_{AB}\ket{\Omega}_{AA^*}\otimes\ket{\Omega}_{BB^*}$ 
and let $\bar{\rho}_{AA^*}\coloneqq \tr_{BB^*}[\proj{\bar{\rho}}_{ABA^*B^*}]$. 
Then
\begin{align}\label{eq:proof-min2}
H_n(AA^*)_{\bar{\rho}}
= \frac{1}{1-n}\log \tr[\bar{\rho}^n_{AA^*}] .
\end{align}
By Proposition~\ref{prop:invariant}~(d) and~(e), there exist an orthonormal basis $\{\ket{e_j}\}_{j\in [d_Ad_B]}$ for $AB$, 
$(p_j)_{j\in [d_Ad_B]}\in [0,1]^{\times d_Ad_B}$, 
and $(\theta_j)_{j\in [d_Ad_B]}\in [0,2\pi)^{\times d_Ad_B}$ such that
\begin{align}\label{eq:proof-min21}
\ket{\bar{\rho}}_{ABA^*B^*}
=\sum_{j\in [d_A d_B]}e^{i\theta_j}\sqrt{p_j}\ket{e_j}_{AB}\otimes\ket{e_j}_{A^*B^*}^* .
\end{align}
Then, $\rho_{AB}=\tr_{A^*B^*}[\proj{\bar{\rho}}_{ABA^*B^*}]=\sum_{j\in [d_A d_B]} p_j\proj{e_j}_{AB}$, so 
\begin{align}\label{eq:proof-min22}
\ket{\hat{\rho}}_{ABA^*B^*}
=\sum_{j\in [d_A d_B]}\sqrt{p_j}\ket{e_j}_{AB}\otimes\ket{e_j}_{A^*B^*}^* .
\end{align}
Let us define the following two linear operators on $A$ for any $j_1,\dots, j_n,k_1,\dots, k_n\in [d_Ad_B]$.
\begin{align}
X^{(j_1,k_1,\dots,j_n, k_n)}
&\coloneqq \tr_B[\ketbra{e_{j_1}}{e_{k_1}}_{AB}]
\quad\cdots\quad 
\tr_B[\ketbra{e_{j_n}}{e_{k_n}}_{AB}]\\
Y^{(j_1,k_1,\dots,j_n, k_n)}
&\coloneqq\tr_B[(\ketbra{e_{j_1}}{e_{k_1}}_{AB})^*]
\quad\cdots\quad 
\tr_B[(\ketbra{e_{j_n}}{e_{k_n}}_{AB})^*]
\end{align}
Then, 
\begin{align}
\tr [\bar{\rho}_{AA^*}^n]
&=\sum_{\substack{j_1,\dots, j_n\in [d_Ad_B],\\ k_1,\dots, k_n\in [d_Ad_B]}}
\left(\prod_{l=1}^n\sqrt{p_{j_l}p_{k_l}} e^{i(\theta_{j_l}-\theta_{k_l})}\right)
\tr[X^{(j_1,k_1,\dots,j_n, k_n)}]\tr[Y^{(j_1,k_1,\dots,j_n, k_n)}]
\label{eq:proof-min30}\\
&=\sum_{\substack{j_1,\dots, j_n\in [d_Ad_B],\\ k_1,\dots, k_n\in [d_Ad_B]}}
\left(\prod_{l=1}^n\sqrt{p_{j_l}p_{k_l}} e^{i(\theta_{j_l}-\theta_{k_l})}\right)
\lvert \tr[X^{(j_1,k_1,\dots,j_n, k_n)}]\rvert^2
\label{eq:proof-min3}\\
&\leq \sum_{\substack{j_1,\dots, j_n\in [d_Ad_B],\\ k_1,\dots, k_n\in [d_Ad_B]}}
\Big\lvert \left(\prod_{l=1}^n\sqrt{p_{j_l}p_{k_l}} e^{i(\theta_{j_l}-\theta_{k_l})}\right)
\lvert \tr[X^{(j_1,k_1,\dots,j_n, k_n)}]\rvert^2\Big\rvert \\
&= \sum_{\substack{j_1,\dots, j_n\in [d_Ad_B],\\ k_1,\dots, k_n\in [d_Ad_B]}}
\left(\prod_{l=1}^n\sqrt{p_{j_l}p_{k_l}} \right)
\lvert\tr[X^{(j_1,k_1,\dots,j_n, k_n)}]\rvert^2
=\tr [\hat{\rho}_{AA^*}^n] .
\label{eq:proof-min4}
\end{align}
\eqref{eq:proof-min30} follows from~\eqref{eq:proof-min21}. 
\eqref{eq:proof-min3} follows from $(X^{(j_1,k_1,\dots,j_n, k_n)})^*=Y^{(j_1,k_1,\dots,j_n, k_n)}$. 
\eqref{eq:proof-min4} follows from~\eqref{eq:proof-min22}. 
By~\eqref{eq:proof-min1} and~\eqref{eq:proof-min2}, we can conclude that 
\begin{align}\label{eq:proof-min5}
H_n(AA^*)_{\bar{\rho}}\geq S_R^{(n)}(A:B)_\rho .
\end{align}
Since the given argument is valid for every $U_{AB}\in \mathcal{U}(AB)$ such that $[\rho_{AB},U_{AB}]=0$, it follows from~\eqref{eq:def-m2} and~\eqref{eq:proof-min5} that 
$S_R^{\downarrow (n)}(A:B)_\rho\geq S_R^{(n)}(A:B)_\rho$.

\emph{Case 2: $n=\infty$.} 
Let $U_{AB}\in \mathcal{U}(AB)$ be such that $[\rho_{AB},U_{AB}]=0$. 
Let $\ket{\bar{\rho}}_{ABA^*B^*}\coloneqq \sqrt{\rho_{AB}} U_{AB}\ket{\Omega}_{AA^*}\otimes\ket{\Omega}_{BB^*}$ 
and let $\bar{\rho}_{AA^*}\coloneqq \tr_{BB^*}[\proj{\bar{\rho}}_{ABA^*B^*}]$. 
Case~1 implies that for all $m\in \mathbb{N}_{\geq 2}$
\begin{align}\label{eq:proof_hm}
H_m(AA^*)_{\bar{\rho}}
\geq S_R^{\downarrow (m)}(A:B)_\rho 
\geq S_R^{(m)}(A:B)_\rho .
\end{align}
By taking the limit $m\rightarrow\infty$, it follows that
\begin{align}
H_\infty(AA^*)_{\bar{\rho}}
=\lim_{m\rightarrow\infty}H_m(AA^*)_{\bar{\rho}}
\stackrel{\eqref{eq:proof_hm}}{\geq} \lim_{m\rightarrow\infty} S_R^{(m)}(A:B)_\rho
= S_R^{(\infty)}(A:B)_\rho .
\end{align}
By~\eqref{eq:def-m2}, we can conclude that 
$S_R^{\downarrow (\infty)}(A:B)_\rho\geq S_R^{(\infty)}(A:B)_\rho$.
\end{proof}

\subsection{Example for Remark~\ref{rem:02}}\label{app:ex-02}

Let $A$ and $B$ be Hilbert spaces of dimension $d_A=d_B=3$. Let $p\coloneqq \frac{1}{6}$, $s\in \{0,1\}$, and let us define the following matrices.
\begin{align}
P&\equiv (P_{j,k})_{\substack{j\in [d_A], k\in [d_B]}} \coloneqq \begin{pmatrix}
p&0&1/2-2p\\
0&p&1/2-2p\\
p&p&0\\
\end{pmatrix}\\
M&\equiv (M_{j,k})_{\substack{j\in [d_A], k\in [d_B]}} 
\coloneqq (\sqrt{P_{j,k}} (-1)^{s\delta_{j,0}\delta_{k,0}})_{\substack{j\in [d_A], k\in [d_B]}} 
= \begin{pmatrix}
(-1)^s\sqrt{p}&0&\sqrt{1/2-2p}\\
0&\sqrt{p}&\sqrt{1/2-2p}\\
\sqrt{p}&\sqrt{p}&0\\
\end{pmatrix}
\end{align}
Furthermore, we define the following objects.
\begin{align}
\rho_{AB}&\coloneqq \sum_{j\in [d_A],k\in [d_B]}P_{j,k}\proj{j}_{A} \otimes\proj{k}_{B} 
\in \mathcal{S}(AB)
\label{eq:ex-rho-def}\\
U_{AB}&\coloneqq \sum_{j\in [d_A],k\in [d_B]}(-1)^{\delta_{j,0}\delta_{k,0} }\proj{j}_{A} \otimes\proj{k}_{B} 
\in \mathcal{U}(AB)
\\
\ket{\hat{\rho}}_{ABA^*B^*}
&\coloneqq \sqrt{\rho_{AB}} \ket{\Omega}_{AA^*}\otimes \ket{\Omega}_{BB^*}
\\\ket{\bar{\rho}}_{ABA^*B^*}
&\coloneqq \sqrt{\rho_{AB}} U_{AB}\ket{\Omega}_{AA^*}\otimes \ket{\Omega}_{BB^*}
\end{align}
Let $\hat{\rho}_{AA^*}\coloneqq \tr_{BB^*}[\proj{\hat{\rho}}_{ABA^*B^*}]$ 
and let $\bar{\rho}_{AA^*}\coloneqq \tr_{BB^*}[\proj{\bar{\rho}}_{ABA^*B^*}]$. 
Then,
\begin{align}
\sum_{j,k\in [d_A]} (MM^T)_{j,k}\ketbra{j}{k}_{A}\otimes \ketbra{j}{k}_{A^*}
=\begin{cases}
\hat{\rho}_{AA^*}
\qquad\text{if }s=0\\
\bar{\rho}_{AA^*}
\qquad\text{if }s=1
\end{cases}
\end{align}
where $M^T$ denotes the transpose of $M$. 
The spectrum of $MM^T$ is given by 
$(\frac{2}{3},\frac{1}{6},\frac{1}{6})$ if $s=0$ 
and by $(\frac{1}{2},\frac{1}{2},0)$ if $s=1$.
Hence, for any $n\in [0,1)\cup(1,\infty)$
\begin{align}
S_R^{(n)}(A:B)_\rho 
&=H_n (AA^*)_{\hat{\rho}}
=\frac{1}{1-n}\log \tr[\hat{\rho}_{AA^*}^n]
= \frac{1}{1-n}\log \left(2\left(\frac{1}{6}\right)^n +\left(\frac{2}{3}\right)^n \right), 
\label{eq:sr-down-sr1}\\
S_R^{\downarrow (n)}(A:B)_\rho 
&\leq H_n (AA^*)_{\bar{\rho}}
=\frac{1}{1-n}\log \tr[\bar{\rho}_{AA^*}^n]
=\frac{1}{1-n}\log \left(2\left(\frac{1}{2}\right)^n \right)
=\log 2 .
\label{eq:sr-down-sr2}
\end{align}
The inequality in~\eqref{eq:sr-down-sr2} holds because $[\rho_{AB},U_{AB}]=0$, see~\eqref{eq:def-m2}.

\begin{figure}
\begin{tikzpicture}
\begin{axis}[
    axis lines = left,
    xlabel = \(n\),
    ymin = 0.5,
    ymax = 1.2,
    height = 0.3\textwidth,
    width = 0.5\textwidth,
    legend pos=north east,
    clip=false
]

\addplot[color=black,thick]table[x=alpha,y=sr]{data_sr.txt};
\addlegendentry{\(S_R^{(n)}(A:B)_\rho\)}

\addplot [domain=0:3,color=black,thick,dashed]{ln(2)};
\addlegendentry{\(\log 2\)}

\end{axis}
\end{tikzpicture}
\caption{The plot depicts the expressions on the right-hand sides of~\eqref{eq:sr-down-sr1} and~\eqref{eq:sr-down-sr2}. 
By~\eqref{eq:sr-down-sr2} and~\eqref{eq:sr-ineq}, both lines are an upper bound on $S_R^{\downarrow (n)}(A:B)_\rho$. 
Since the dashed line lies strictly below the solid line for all $n\in [0,2)$, it follows that 
$S_R^{\downarrow (n)}(A:B)_\rho<S_R^{(n)}(A:B)_\rho$ for all $n\in [0,2)$.
}
\label{fig:sr}
\end{figure}
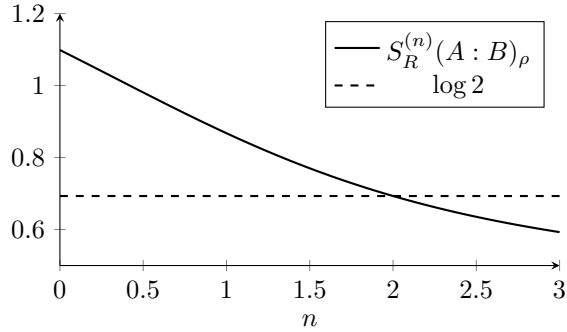

The expressions on the right-hand sides of~\eqref{eq:sr-down-sr1} and~\eqref{eq:sr-down-sr2} are plotted in Figure~\ref{fig:sr}. 
As can be seen from this figure, 
$S_R^{\downarrow (n)}(A:B)_\rho <S_R^{(n)}(A:B)_\rho$
for all $n\in [0,2)$.

\bibliographystyle{arxiv_fullname}
\bibliography{bibfile}

\end{document}